\definecolor{MyBlue}{rgb}{0.12, 0.12, 0.76}
\newcommand\tat{t\^{a}tonnement\xspace}
\newcommand{\thickhline}{%
    \noalign {\ifnum 0=`}\fi \hrule height 1.4pt
    \futurelet \reserved@a \@xhline
}
\newcolumntype{"}{@{\hskip\tabcolsep\vrule width 1.4pt\hskip\tabcolsep}}
\newtheorem{theorem}{Theorem}[section]
\newtheorem{lemma}{Lemma}[section]
\newtheorem{corollary}{Corollary}[theorem]
    \newtheoremstyle{TheoremNum}
        {\topsep}{\topsep}              
        {\itshape}                      
        {}                              
        {\bfseries}                     
        {.}                             
        { }                             
        {\thmname{#1}\thmnote{ \bfseries #3}}
    \theoremstyle{TheoremNum}
    \newtheorem{theoremNumbered}{Theorem}
\let\oldReturn\Return
\renewcommand{\Return}{\State\oldReturn}
\DeclareMathOperator*{\argmax}{arg\,max}
\newcommand\dif{\mathop{}\!\mathrm{d}}
\newcommand\bbr{\mathbb{R}}
\newcommand\bbrpos{\mathbb{R}_{\ge 0}}
\newcommand\bbrspos{\mathbb{R}_{> 0}}
\newcommand\bbnspos{\mathbb{N}_{>0}}
\newcommand\ep{\varepsilon}
\newcommand{\A}{\mathbf{a}}
\newcommand{\x}{\mathbf{x}}
\newcommand\X{\mathcal{X}}
\newcommand{\y}{\mathbf{y}}
\newcommand{\p}{\mathbf{p}}
\newcommand\q{\mathbf{q}}
\newcommand\xprime{\mathbf{x'}}
\newcommand\yprime{\mathbf{y'}}
\newcommand\zero{\mathbf{0}}
\newcommand\one{\mathbf{1}}
\newcommand\B{\mathbf{b}}
\newcommand\bigB{\mathbf{B}}
\newcommand\bfeta{\boldsymbol{\eta}}
\newcommand\phat{\hat{p_i}}
\newcommand\E{\mathcal{E}}
\newcommand\bfa{\boldsymbol{\alpha}}
\newcommand\bflam{\boldsymbol{\lambda}}
\newcommand\bflamp{\boldsymbol{\lambda'}}
\newcommand\tilp{\tilde{p_i}}
\setlist{itemsep=0pt} 
\begin{document}

\title{Counteracting Inequality in Markets via Convex Pricing}

\author{Ashish Goel \and Benjamin Plaut}

\date{Stanford University\\ \texttt{\{ashishg,\ bplaut\}@stanford.edu}}

\maketitle

\pagestyle{plain}

\begin{abstract}
We study market mechanisms for allocating divisible goods to competing agents with quasilinear utilities. For \emph{linear} pricing (i.e., the cost of a good is proportional to the quantity purchased), the First Welfare Theorem states that Walrasian equilibria maximize the sum of agent valuations. This ensures efficiency, but can lead to extreme inequality across individuals. Many real-world markets -- especially for water -- use \emph{convex} pricing instead, often known as increasing block tariffs (IBTs). IBTs are thought to promote equality, but there is a dearth of theoretical support for this claim.

\ \ \ \ In this paper, we study a simple convex pricing rule and show that the resulting equilibria are guaranteed to maximize a CES welfare function. Furthermore, a parameter of the pricing rule directly determines which CES welfare function is implemented; by tweaking this parameter, the social planner can precisely control the tradeoff between equality and efficiency. Our result holds for any valuations that are homogeneous, differentiable, and concave. We also give an iterative algorithm for computing these pricing rules, derive a truthful mechanism for the case of a single good, and discuss Sybil attacks.

\end{abstract}

\section{Introduction}\label{sec:intro}

Markets are one of the oldest mechanisms for distributing resources; indeed, commodity prices were meticulously recorded in ancient Babylon for over 300 years~\cite{Spek2005,Spek2003}. In a market, buyers and sellers exchange goods according to some sort of pricing system, and \emph{Walrasian equilibrium}\footnote{This is also known as market equilibrium, competitive equilibrium, and general equilibrium, depending on the context.} occurs when the demand of the buyers exactly equals the supply of the sellers. This concept was first studied by Walras in the 1870's~\cite{Walras1874}. In 1954, Arrow and Debreu showed that under some conditions, a Walrasian equilibrium is guaranteed to exist~\cite{Arrow1954}. Most of the literature on Walrasian equilibrium only considers \emph{linear} pricing, meaning the cost of a good is proportional to the quantity purchased.

In this paper, we consider the problem of allocating divisible goods to competing agents via a market mechanism. We assume each agent has quasilinear utility: an agent's utility is her value for the resources she obtains (her \emph{valuation}), minus the money she spends (her \emph{payment}). The First Welfare Theorem states that in this setting, the linear-pricing Walrasian equilibria are exactly the allocations maximizing utilitarian welfare, i.e., the sum of agent valuations. Thus linear pricing \emph{implements} utilitarian welfare in Walrasian equilibrium (sometimes abbreviated ``WE").


The result is powerful, but also limiting. Maximizing utilitarian welfare yields the most efficient outcome, but may also cause maximal inequality (see Figure~\ref{fig:linear}). 

One common alternative is \emph{convex} pricing. In this paper, we study convex pricing rules $p$ of the form
\begin{align*}
p(x_i) = \Big(\sum_{j} q_j x_{ij}\Big)^{1/\rho}
\end{align*}
where $x_i$ is \emph{bundle} agent $i$ receives, $x_{ij} \in \bbrpos$ is the fraction of good $j$ she receives, $q_1,\dots,q_m$ are constants, and $\rho \in (0,1]$ determines the curvature of the pricing rule. Like linear pricing, $p$ is still \emph{anonymous}, meaning that agents' payments depend only on their purchases (and not on their preferences, for example).

When $\rho  =1$, $p$ reduces to linear pricing. When $\rho < 1$, $p$ is strictly convex, meaning that doubling one's consumption will more than double the price. This will make it easy to buy a small amount, but hard to buy a large amount, which intuitively should lead to a more equal distribution of resources. As the curvature of the pricing rule grows, this effect should be amplified, leading to a different equality/efficiency tradeoff.

Our work seeks to formalize that claim. We will show that the Walrasian equilibria of these convex pricing rules are guaranteed to maximize a \emph{constant elasticity of substitution} (CES) welfare function, where the choice of $\rho$ determines the specific welfare function and thus the precise equality/efficiency tradeoff (Theorem~\ref{thm:main}). Our result holds for a wide range of agent valuations.

\newcommand{\stickman}[4]{
    \def\xx{#1} 
    \def\yy{#2} 
    \def\rr{#3} 
    \def\name{#4} 
    \def\xl{\xx-1.2*\rr} 
    \def\xr{\xx + 1.2*\rr} 
    \def\ya{\yy-\rr} 
    \def\yb{\ya-.5*\rr} 
    \def\yc{\yb-1.5*\rr} 
    \def\ybd{\yb-1.2*\rr} 
    \def\ycd{\yc-1.8*\rr} 

    \draw[
        thick
            ]
    (\xx, \yy) circle (\rr) 
    (\xx, \yy) node at (\xx, \yy + 2*\rr) {\name}
    (\xx, \ya) -- (\xx,\yc) 
    (\xl, \ybd) -- (\xx, \yb) -- (\xr, \ybd) 
    (\xl, \ycd) -- (\xx, \yc) -- (\xr, \ycd) 
    ; %
} 

\newcommand\glass[4]{
	\def\xtl{#1} 
	\def\yt{#2} 
	\def\sc{#3} 
	\def\full{#4} 
	\def\xbl{\xtl + .13*\sc}
	\def\yb{\yt - \sc}
	\def\xbr{\xbl + .4*\sc}
	\def\xtr{\xbr + .13*\sc}
	\def\ym{\yt - .15*\sc}
	\def\theangle{25}
	\definecolor{columbiablue}{rgb}{0.61, 0.87, 1.0}
	\definecolor{capri}{rgb}{0.0, 0.75, 1.0}	
		
	\begin{scope}
	\clip (\xtl, \yt) -- (\xbl, \yb) to[bend right=\theangle] (\xbr, \yb) -- (\xtr, \ym) to[bend right=\theangle] (\xtl, \ym);
	\ifthenelse{\full=1}{\fill[capri] (\xtl, \yb-1) rectangle (\xtr,\ym+1)}{};	
	\end{scope}

	\begin{scope}
	\clip (\xtl, \ym) to[bend right=\theangle] (\xtr, \ym) to[bend right=\theangle] (\xtl, \ym);
	\ifthenelse{\full=1}{\fill[columbiablue] (\xtl, \yb-1) rectangle (\xtr,\ym+1)}{};	
	\end{scope}

	\draw [semithick]
		(\xtl, \yt) -- (\xbl, \yb)
		(\xbr, \yb) -- (\xtr, \yt)
		(\xtl, \yt) to[bend right=\theangle] (\xtr, \yt)			
		;
	\draw[line width=.3pt]	
		(\xtl, \yt) to[bend left=\theangle] (\xtr, \yt)
		;
	\draw[semithick] (\xbl, \yb) to[bend right=\theangle] (\xbr, \yb);
		\ifthenelse{\full=0}{\draw[line width=.3pt](\xbl, \yb) to[bend left=\theangle] (\xbr, \yb)}{};
}

\begin{figure}[tb]
\begin{center}
\begin{tikzpicture}
\newcommand\stickspace{1.7}
\newcommand\largestickspace{2}
\newcommand\secondstart{4.3*\stickspace}
\newcommand\glassspace{.25*\stickspace}

\stickman{0}{0}{0.22} {$w_1 = 1$}
\stickman{\stickspace}{0}{0.22} {$w_2 = 6$}
\stickman{2*\stickspace}{0}{0.22} {$w_3 = 5$}
\path[thick,->] (2*\stickspace + 3*\rr, -2*\rr) edge node[above] {price $=6$} (\secondstart - 3*\rr, -2*\rr);

\stickman{\secondstart}{0}{0.22} {$w_1 = 1$}
\glass{\secondstart + \glassspace}{-.12}{.5}{0}

\stickman{\secondstart + \largestickspace}{0}{0.22} {$w_2 = 6$}
\glass{\secondstart + \largestickspace + \glassspace}{-.12}{.5}{1}

\stickman{\secondstart + 2*\largestickspace}{0}{0.22} {$w_3 = 5$}
\glass{\secondstart + 2*\largestickspace + \glassspace}{-.12}{.5}{0}

\end{tikzpicture}
\end{center}
\caption{An example of how linear pricing can lead to maximal inequality. Consider the three agents above and a single good (say, water), where each agent $i$'s value for $x$ units of the good is $w_i \cdot x$. The unique linear-pricing Walrasian equilibrium sets a price of 6 per unit, which results in agent 2 buying all of the good and the other two agents receiving nothing. More generally, the equilibrium price reflects the maximum anyone is willing to pay, and anyone who is not willing to pay that much is priced out of the market and receives nothing. In contrast, our nonlinear pricing rule always ensures that everyone receives a nonzero amount; see Section~\ref{sec:results}.}
\label{fig:linear}
\end{figure}
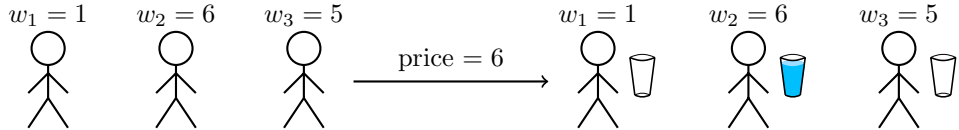

\paragraph{Convex pricing in the real world.} Convex pricing is especially pervasive in the water sector, where such pricing rules are known as \emph{increasing block tariffs} (IBTs)~\cite{Whittington1992}, typically implemented with discrete blocks of water (hence the name). IBTs have been implemented and empirically studied in Israel~\cite{Becker2015}, South Africa~\cite{Burger2014}, Spain~\cite{Garcia-Rubio2015}, Jordan~\cite{Klassert2018}, and the United States~\cite{Renwick2000}, among many other countries. 

IBTs are often claimed to promote equality in water access~\cite{Whittington1992}, but there has been limited theoretical evidence supporting this (see~\cite{Monteiro2011} for one of the only examples). On the other hand, a common concern is that IBTs may lead to poor ``economic efficiency"~\cite{Boland2000,Monteiro2011}. Our work shows that at least on a theoretical level, convexity of pricing does not necessarily lead to inefficiency: it simply maximizes a different welfare function than the traditional utilitarian one. In particular, it maximizes a CES welfare function.


\paragraph{The Second Welfare Theorem and personalized pricing.} The Second Welfare Theorem is perhaps the most famous theoretical result regarding implementation in Walrasian equilibrium. It states any Pareto optimum can be a WE when an arbitrary redistribution of initial wealth is allowed.\footnote{Specifically, for any Pareto optimal allocation, there exists a redistribution of initial wealth which makes that allocation a WE. However, our quasilinear utility model does not have a concept of initial wealth (alternatively, initial wealth is simply an additive constant in agents' utilities which does not affect their behavior), so this result is not as mathematically relevant. See Section~\ref{sec:prior} for additional discussion.} Another method that achieves the same goal is \emph{personalized pricing}, where different agents can be charged different (linear) prices. In contrast, convex pricing is anonymous: agents purchasing the same bundle always pay the same price.

Each of these approaches certainly has its own pros and cons. In this paper, our goal is not to claim that convex pricing is ``better" than other approaches (or vice versa). Regardless of which is ``better" in any given situation, convex pricing \emph{is} widely used in practice, and is often claimed to promote equality. Our goal in this paper is to formally quantify that claim.


\subsection{CES welfare functions}\label{sec:ces-intro}

A welfare function~\cite{Bergson1938,Samuelson1947} assigns a real number to each possible outcome, with higher numbers (i.e, higher welfare) indicating outcomes that are more desirable to the social planner. Different welfare functions represent different priorities; our focus will be the tradeoff between overall efficiency and individual equality. For a fixed constant $\rho \in (-\infty, 0) \cup (0,1]$, the constant elasticity of substitution (CES) welfare of outcome $\x$ is
\[
\Phi(\rho, \x) = \bigg(\sum_{\text{agents } i} v_i(\x)^{\rho}\bigg)^{1/\rho}
\]
where $v_i(\x)$ is agent $i$'s value for $\x$. In general, different values of $\rho$ will lead to different optimal allocations, so whenever we say ``maximum CES welfare allocation", we mean with respect to a fixed value of $\rho$. See Section~\ref{sec:model} for an axiomatic characterization of CES welfare functions.

For $\rho = 1$, this is utilitarian welfare, i.e., the sum of valuations. The limit as $\rho \to -\infty$ yields max-min welfare (the minimum valuation)~\cite{Rawls2009,Sen1977,Sen1976}, whereas $\rho \to 0$ yields Nash welfare (the product of valuations)~\cite{Kaneko1979,Nash1950}. The closer $\rho$ gets to $-\infty$, the more the social planner cares about individual equality (max-min welfare being the extreme case of this), and the closer $\rho$ gets to 1, the more the social planner cares about overall societal good (utilitarian welfare being the extreme case of this). For this reason, $\rho$ is called the \emph{inequality aversion} parameter, and this family of welfare functions is thought to exhibit an \emph{equality/efficiency} tradeoff.

These welfare functions were originally proposed by Atkinson~\cite{Atkinson70}; indeed, his motivation was to measure the level of inequality in a society. Despite being extremely influential in the traditional economics literature (see~\cite{Cowell2011} for a survey), the CES welfare function has received almost no attention in the computational economics community.\footnote{To our knowledge, only three other computational economics papers have studied CES welfare in any context:~\cite{Arunachaleswaran2019,Goel2019nash,Plaut2019}.}

Finally, note that $\Phi(\rho, \x)$ is defined with respect to the each agent's valuation $v_i$ and \emph{not} her overall quasilinear utility $u_i$. We acknowledge that it is standard to define welfare with respect to the overall utility $u_i$, and we have two reasons for not doing so. First, in the case of scarce resources, a social planner may be interested in equality in \emph{consumption} (e.g., equality in water access), not just equality in utility derived. Second, it turns out mathematically that this is the welfare function maximized by convex pricing WE in our model; the version where $\Phi(\rho, \x)$ considers $u_i$ may be not be maximized by the resulting WE. This may yield valuable qualitative insights about convex pricing; for example, does convex pricing lead to equality with respect to consumption but not necessarily with respect to underlying utility?

\paragraph{CES welfare in healthcare.} These welfare functions have also seen substantial use in healthcare under the name of \emph{isoelastic welfare functions}. This began with~\cite{Wagstaff1991}, largely motivated by concerns abeout purely utilitarian approaches to healthcare (i.e., allocating resources to maximize total health in a community, without concern for equality). Since these decisions can affect who lives and who dies, significant effort has been invested into understanding the equality/efficiency tradeoff, with this class of welfare functions serving as a theoretical tool~\cite{Dolan1998,Ortun1996,Wagstaff1991}; see Section~\ref{sec:prior} for additional discussion. Despite the ongoing interest in this tradeoff, the healthcare literature has not (to our knowledge) considered convex pricing as a mechanism for balancing equality and efficiency.

More broadly, our work can be thought of as weaving together the previously disjoint threads of CES welfare and convex pricing to provide theoretical support for the oft-cited but rarely quantified claim that IBTs promote equality.


\section{Results and related work}\label{sec:results}

\paragraph{Main result: convex pricing implements CES welfare maximization in Walrasian equilibrium.} Our main result is that for convex pricing of the form $p(x_i) = (\sum_j q_j x_{ij})^{1/\rho}$ for any $\rho \in (0,1]$\footnote{The case of $\rho < 0$ is slightly unintuitive, as it can result in agents who care more receiving \emph{less} of the good. Consequently, implementation in WE is impossible; see Theorem~\ref{thm:neg-rho-counter}.}, a Walrasian equilibrium is guaranteed to exist, and every WE maximizes CES welfare with respect to $\rho$. This holds for a wide range of agent valuations.

\begin{theoremNumbered}[\ref{thm:main} \textnormal{(Simplified version)}] 
Assume each valuation is homogeneous of degree $r$,\footnote{A valuation is homogeneous of degree $r$ if scaling any bundle by a constant $c$ scales the resulting value by $c^r$.} differentiable, and concave, and fix $\rho \in (0,1]$. Then an allocation $\x = (x_1,\dots, x_n)$ maximizes CES welfare if and only if there exist constants $q_1,\dots,q_m \in \bbrpos$ such that for the pricing rule
\[
p(x_i) = \Big(\sum_{j} q_j x_{ij}\Big)^{1/\rho},
\]
$\x$ and $p$ form a WE.
\end{theoremNumbered}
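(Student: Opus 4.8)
The plan is to show that both ``$\x$ maximizes CES welfare'' and ``$\x$ is the allocation of a WE under some convex pricing rule'' are each equivalent to one and the same system of first-order conditions, and that the bridge between the two is Euler's identity for homogeneous functions. I would set up the CES side first. Since $t\mapsto t^{1/\rho}$ is increasing on $\bbrpos$ for $\rho\in(0,1]$, maximizing $\Phi(\rho,\x)$ over the feasible set $\{\x: x_{ij}\ge 0,\ \sum_i x_{ij}\le 1\ \forall j\}$ is the same as maximizing $\sum_i v_i(x_i)^{\rho}$. Because each $v_i$ is concave and nonnegative and $t\mapsto t^{\rho}$ is concave and increasing on $\bbrpos$, the objective is concave; the constraints are affine, so the KKT conditions are \emph{necessary and sufficient} for optimality. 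Introducing a multiplier $\lambda_j\ge 0$ for each supply constraint $\sum_i x_{ij}\le 1$, optimality reads: for every $i,j$, $\rho\,v_i(x_i)^{\rho-1}\,\tfrac{\partial v_i}{\partial x_{ij}}(x_i)\le \lambda_j$ with equality whenever $x_{ij}>0$, together with complementary slackness $\lambda_j\big(1-\sum_i x_{ij}\big)=0$.

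\textbf{The agent side.} Given a rule $p(x_i)=\big(\sum_j q_j x_{ij}\big)^{1/\rho}$ with $q_j\ge 0$, note that $p$ is convex (a linear form raised to the power $1/\rho\ge 1$), so each agent's objective $v_i(x_i)-p(x_i)$ is concave and its first-order conditions are again necessary and sufficient for individual optimality. Writing $s_i=\sum_j q_j x_{ij}$, these say $\tfrac{\partial v_i}{\partial x_{ij}}(x_i)\le \tfrac{1}{\rho}s_i^{1/\rho-1}q_j$ with equality whenever $x_{ij}>0$. Thus a WE is exactly: a feasible allocation clearing the market on every good with $q_j>0$, together with prices making these agent conditions hold.

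\textbf{The bridge (where homogeneity is essential).} Suppose $(\x,p)$ is a WE. Apply Euler's identity $\sum_j x_{ij}\,\tfrac{\partial v_i}{\partial x_{ij}}(x_i)=r\,v_i(x_i)$ and substitute the agent first-order equalities on the support of $x_i$ (terms with $x_{ij}=0$ vanish); this collapses to $\tfrac{1}{\rho}s_i^{1/\rho}=r\,v_i(x_i)$, i.e. every agent's payment satisfies $p(x_i)=r\rho\,v_i(x_i)$ with the \emph{same} proportionality constant. Feeding $s_i^{1/\rho}=r\rho\,v_i(x_i)$ back into the agent condition rewrites it as the CES/KKT condition with $\lambda_j:=(r\rho)^{1-\rho}q_j$; nonnegativity and complementary slackness of these $\lambda_j$ follow from those of the $q_j$ and from market clearing. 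Since the KKT conditions are sufficient, $\x$ maximizes CES welfare. Conversely, if $\x$ maximizes CES welfare, take KKT multipliers $\lambda_j$, set $q_j:=(r\rho)^{\rho-1}\lambda_j\ge 0$, and run the computation in reverse (Euler applied to the KKT equalities gives $\sum_j\lambda_j x_{ij}=r\rho\,v_i(x_i)^{\rho}$, hence again $s_i^{1/\rho}=r\rho\,v_i(x_i)$): the resulting $p$ and $\x$ satisfy market clearing and every agent's first-order conditions, so they form a WE. Existence of a WE then follows at once, since $\Phi(\rho,\cdot)$ is continuous on the compact feasible set and therefore attains a maximum.

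\textbf{Expected obstacles.} The core algebra is short; the delicate part is the boundary. For $\rho<1$ the manipulations use $s_i^{1/\rho-1}$ with a negative exponent, so one must argue $s_i>0$ for every agent at the relevant allocation --- which I would get from the fact that, at a CES optimum with $\rho\in(0,1)$, the marginal CES welfare of an agent with value near $0$ is unbounded, forcing $v_i(x_i)>0$ and hence $s_i>0$ (for nontrivial valuations). One also has to handle goods that are entirely unallocated (their multiplier/price is $0$), check that $p$ is differentiable away from the origin so the sufficiency argument for the agent's first-order conditions applies, and confirm the $\rho=1$ case as a sanity check --- there $(r\rho)^{\rho-1}=1$, $p$ is linear, and the statement is exactly the First Welfare Theorem for quasilinear utilities.
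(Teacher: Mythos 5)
Your proposal is correct and follows essentially the same route as the paper's proof: KKT conditions for the CES convex program and for each agent's (concave) demand problem, both necessary and sufficient, conjoined by Euler's theorem for homogeneous functions, with the same intermediate identity $p(x_i)=r\rho\,v_i(x_i)$ that the paper records as Corollary~\ref{cor:utility-price}. Your rescaling $q_j=(r\rho)^{\rho-1}\lambda_j$ correctly absorbs the constant $\rho r^{(\rho-1)/\rho}$ that appears explicitly in the paper's unsimplified pricing rule, so the two arguments coincide up to normalization.
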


\noindent Note that the $\rho$ in $p(x_i)$ is the same $\rho$ for which CES welfare is maximized.

We call the reader's attention to two important aspects of this result. Perhaps most importantly, our result is not simply a reformulation of the First Welfare Theorem: although maximizing CES welfare for valuations $v_1,\dots,v_n$ is equivalent to maximizing utilitarian welfare for valuations $v_1^\rho,\dots,v_n^\rho$, the First Welfare Theorem does not say anything about the agent demands in response to this convex pricing rule. The First Welfare Theorem also does not help with identifying the exact conditions under which Theorem~\ref{thm:main} holds, e.g., homogeneity of valuations.\footnote{In fact, not only is homogeneity necessary, but homogeneity of the same degree is necessary: if we allow the degree of homogeneity to differ across agents, the result no longer holds (Theorem~\ref{thm:different-r}).}


Secondly, the class of homogeneous, differentiable, and concave valuations is quite large: it generalizes most of the commonly studied valuations, e.g., linear, Cobb-Douglas, and CES (note that here we are referring to CES agent valuations, not CES welfare functions). Although Leontief valuations are not differentiable, we handle them as a special case and show that the same result holds (Theorem~\ref{thm:leontief}).


The following additional properties are of note:

\begin{enumerate}
\item For this class of utilities, Theorem~\ref{thm:main} generalizes the First Welfare Theorem:\footnote{One direction of the First Welfare Theorem (if $(\x, p)$ is a linear pricing WE, then $\x$ maximizes utilitarian welfare) holds in a much more general setting; see Appendix~\ref{sec:linear}.} when $\rho = 1$, $p(x_i)$ yields linear pricing and CES welfare yields utilitarian welfare.
\item The constants $q_1,\dots,q_m$ will be the optimal Lagrange multipliers for a convex program maximizing CES welfare. This connection to duality will be very helpful for computing these WE (see Section~\ref{sec:comp}).
\item Our pricing rule is strictly convex for $\rho < 1$, with the curvature growing as $\rho$ goes to 0. The smaller $\rho$ gets, the easier it is to buy a small amount, but the harder it is to buy a large amount. Intuitively, this should prevent any single individual from dominating the market and lead to a more equitable outcome. Furthermore, the marginal price at $x_i = \zero$ is zero, which ensures that everyone ends up with a nonempty bundle (in contrast to linear pricing: see Figure~\ref{fig:linear}). Theorem~\ref{thm:main} provides a tight relationship between the curvature of the pricing rule and the exact equality/efficiency tradeoff.
\end{enumerate}

\paragraph{Towards an implementation.} We also prove several supporting results: in particular, regarding implementation. The WE from Theorem~\ref{thm:main} can always be computed by asking each agent for her entire utility function, and then solving a convex program for maximizing CES welfare maximization to obtain the optimal Lagrange multipliers $q_1,\dots,q_m$. However, this is not very practical: people are generally not able to articulate a full cardinal utility function, and even if they are, doing so could require transmitting an enormous amount of information. Section~\ref{sec:comp} presents our first supporting result: an iterative algorithm for computing the WE, where in each step, each agent only needs to report the gradient of her valuation at the current point. Our algorithm is based on the ellipsoid method, and inherits its polynomial-time convergence properties. We recognize that even valuation gradient queries may be difficult for agents to answer, and we leave the possibility of an improved implementation -- in particular, a \emph{\tat}\footnote{A \tat is an iterative algorithm which only asks \emph{demand queries}, i.e., what would each agent purchase given the current prices. Demand queries may be easier to answer than the valuation gradient queries in our algorithm.} -- as an open question. 

\paragraph{Truthfulness.} Our second supporting result considers a different approach to implementation: \emph{truthful} mechanisms. Walrasian equilibria are generally not truthful: agents can lie about their preferences to affect the equilibrium prices for their personal gain.\footnote{Another interpretation is that WE assumes agents are \emph{price-taking} (i.e., treat the prices are given and do not lie about their preferences to affect the equilibrium prices) and breaks down when agents are \emph{price-anticipating}.} For $\rho = 1$, the Vickrey-Clarke-Groves (VCG) mechanism is known to truthfully maximize utilitarian welfare~\cite{Nisan2007}. For the case of a single good and any $\rho \in (0,1)$, we give a mechanism which truthfully maximizes CES welfare (Theorem~\ref{thm:truthful}). We also show that our mechanism is the unique truthful mechanism up to an additive constant in the payment rule (Theorem~\ref{thm:truthful-unique}). The proof of Theorem~\ref{thm:truthful-unique} is quite involved, and requires techniques from real analysis such as Kirszbraun's Theorem for Lipschitz extensions and the Fundamental Theorem of Lebesgue Calculus.



\paragraph{Negative results.} We prove the following negative results. Most importantly, we show that for any $\rho \ne 1$, linear-pricing WE can have arbitrarily poor CES welfare (Theorem~\ref{thm:rho=1-bad}); were this not the case, perhaps it would suffice to simply use linear pricing and accept an approximation of CES welfare. Next, note that Theorem~\ref{thm:main} requires each agent's valuation to be homogeneous with the same degree $r$. We show that when agents' valuations have different homogeneity degrees, there exist instances where no pricing rule can implement CES welfare maximization in WE (Theorem~\ref{thm:different-r}), and thus our assumption is necessary. We also show that CES welfare maximization cannot be implemented in WE for $\rho < 0$ (Theorem~\ref{thm:neg-rho-counter}), and discuss the special case of $\rho = 0$ (i.e., Nash welfare).

There is an additional crucial issue which any practical implementation of Theorem~\ref{thm:main} would need to address: \emph{Sybil attacks}. A Sybil attack is when a selfish agent attempts to gain an advantage in a system by creating fake identities~\cite{Douceur2002}. Since the pricing rule from Theorem~\ref{thm:main} is strictly convex for $\rho < 1$, an agent can decrease her payment by masquerading as multiple individuals and splitting her purchase across those identities.\footnote{In contrast, for $\rho = 1$, there is nothing to be gained by creating fake identities.} In Appendix~\ref{sec:sybil}, we propose a model for analyzing Sybil attacks in markets, and show that if these attacks are possible, there exist instances where no pricing rule can implement CES welfare maximization in WE (Theorem~\ref{thm:sybil-other-p}).\footnote{There are combinations of parameters, however, where our pricing rule is naturally robust to Sybil attacks: in particular, when $v_i(\x) (1-\rho) \le \kappa$ (where $v_i(\x)$ is agent $i$'s value for the maximum CES welfare allocation and $\kappa$ is the identity creation cost). This suggests a natural way for an equality-focused social planner to choose a specific value for $\rho$: estimate the identity creation cost and scale of valuations in the system of interest, and pick $\rho$ to be as small as possible without incentivizing Sybil attacks.}

\paragraph{Additional results.} In Appendix~\ref{sec:fisher}, we explore connections between our results in the quasilinear utility model, and the Fisher market fixed-budget model. Appendix~\ref{sec:leontief} shows that Theorem~\ref{thm:main} extends to Leontief valuations, which are not differentiable (so the main proof does not apply). Leontief valuations have been a focus of prior work, so we find is worthwhile to handle this as a special case.

\subsection{Related work}\label{sec:prior}

The study of markets has a long history in economics~\cite{Arrow1954,Brainard2005,Fisher1892,Varian1974,Walras1874}. Recently, this topic has received substantial attention in the computer science community as well (see~\cite{Vazirani2007} for an algorithmic introduction). We first provide some important background on different market models and the First and Second Welfare Theorems, and then move on to more recent related work.

\paragraph{Quasilinear markets and Fisher markets.} There are two primary market models for divisible goods. This paper considers the quasilinear utility model, where each agent can spend as much as she wants, and the amount spent is incorporated into her utility function. The other predominant model is the \emph{Fisher market} model~\cite{Brainard2005,Fisher1892}, where each agent has a fixed budget constraint, and the amount spent does not affect her resulting utility (as a result, each agent always spends exactly her budget). Although these two models share many of the same conceptual messages, some of the technical results vary. See Appendix~\ref{sec:fisher} for the technical relationship between the two models with respect to WE and CES welfare maximization.

Since agents in Fisher markets always spend exactly their budgets, there is no way to elicit the absolute scale of agent valuations. Nash welfare maximization is invariant to this type of scaling, but no other CES welfare function is~\cite{Moulin2003}. For this reason, the Fisher market model is not well suited to reason about other welfare functions. In contrast, the quasilinear model \emph{does} allow agents to express the absolute scale of their valuation: specifically, by choosing how much to spend. That is one reason that we focus on the quasilinear model for this paper. The other is that convex pricing is most easily applied to a small submarket of the broader economy (e.g., water pricing), and quasilinear utility captures the fact that agents may wish to spend money on other goods outside of this submarket. In contrast, Arrow and Debreu's model (see below) can arguably capture the entire economy, so there is nothing outside of the market to spend money on.

\paragraph{The First and Second Welfare Theorems.} Conceptually, the First Welfare Theorem establishes an efficiency property that any WE must satisfy, and the Second Welfare Theorem deals with implementing a wide range of allocations as WE. The two welfare theorems originate in the context of \emph{Arrow-Debreu} markets~\cite{Arrow1954}, which generalize Fisher markets to allow for (1) agents to enter the market with goods (as opposed to just money)\footnote{These are known as ``exchange markets" or ``exchange economies".} and (2) production of goods. The statements of the First and Second Welfare Theorems in that model are, respectively, ``Any (linear pricing) WE is Pareto optimal" and ``Any Pareto optimal allocation can be a (linear pricing) WE with \emph{transfers}, i.e., under a suitable redistribution of initial wealth". 

In the Fisher market and quasilinear utility models, the First Welfare Theorem can be strengthened to ``Any (linear pricing) WE maximizes budget-weighted Nash welfare"~\cite{Eisenberg1961,Eisenberg1959,Vazirani2007} and ``Any (linear pricing) WE maximizes utilitarian welfare", respectively. The version of the Second Welfare Theorem stated above is appropriate for Fisher markets, since agents' budgets constitute the ``initial wealth". However, for quasilinear utilities, there is no notion of initial wealth (alternatively, initial wealth is an additive constant in agents' utilities which does not affect their behavior). Thus for quasilinear utilities, allowing transfers actually does not affect the set of WE. This may seem counterintuitive, since the Second Welfare Theorem (which still holds in this setting) states that any Pareto optimum can be a WE. However, Pareto optimality here is referring to agents' overall quasilinear utilities, \emph{not} the agents' valuations. It can be shown that the only allocations which are Pareto optimal with respect to the quasilinear utilities are allocations maximizing utilitarian welfare, which are already covered by the First Welfare Theorem (without transfers). 

Thus on a technical level, the Second Welfare Theorem is not helpful in the world of quasilinear utilities. However, even when the Second Welfare Theorem is mathematically relevant, a centrally mandated redistribution of wealth is often out of the question in practice.

\paragraph{The equality/efficiency tradeoff in healthcare.}
In Section~\ref{sec:ces-intro}, we discussed how CES welfare has been studied from a theoretical perspective in healthcare~\cite{Dolan1998,Ortun1996,Wagstaff1991}. There have also been several empirical studies aiming to understand the general population's view of the equality/efficiency tradeoff, with results generally indicating a disapproval of purely utilitarian approaches to healthcare~\cite{Dolan2000,Wittrup-Jensen2008}. For example, a survey of 449 Swedish politicians found widespread rejection of purely utilitarian decision-making in healthcare, and under some conditions, the respondents were willing to sacrifice up to 15 of 100 preventable deaths in order to ensure equality across subgroups~\cite{Lindholm1998}.

\paragraph{CES welfare and $\alpha$-fairness in networking.} CES welfare functions have also enjoyed considerable attention from the field of networking, under the name of \emph{$\alpha$-fairness} (the parameter $\alpha$ corresponds to $1-\rho$ in our definition). The $\alpha$-fairness notion was proposed by~\cite{Mo2000}, motivated in part as a generalization of the prominent \emph{proportional fairness} objective (which is equivalent to Nash welfare)~\cite{Kelly1998}. See~\cite{Bertsimas2012} and references therein for further background on $\alpha$-fairness in networking. To our knowledge, a market-based understanding was developed only for proportional fairness, starting with the seminal work of Kelly et al.~\cite{Kelly1998}.

\paragraph{Nonlinear market mechanisms and CES welfare maximization.} We are aware of just two papers studying market mechanisms for CES welfare functions: \cite{Goel2019nash} and \cite{Plaut2019}. Like our work, both of these papers explore nonlinear pricing rules, but unlike our work, only consider Leontief valuations. Furthermore, both of those papers are in the Fisher market model and only achieve CES welfare maximization under strong assumptions on the absolute scale of the agents' valuations.\footnote{In particular, that each agent's weight for each good is either 0 or 1. This subclass of Leontief valuations is known as ``bandwidth allocation" valuations, where each good is a link in a network, and agents transmit data over fixed paths.} In contrast, our main result holds for any valuations that are homogeneous of degree $r$, differentiable, and concave, a much larger range of valuations. (Leontief valuations are not differentiable, but we handle them as a special case in Appendix~\ref{sec:leontief} and show that our result still holds.) It is worth noting that~\cite{Goel2019nash} focuses on the WE model, whereas~\cite{Plaut2019} considers strategic agents and Nash equilibria. On a related note, we are not aware of any broader results regarding general nonlinear pricing, i.e., what set of allocations can be implemented if we allow $p(x_i)$ to be any nondecreasing function of $x_i$ (but still require anonymity)? This could be an interesting direction for future work.

The rest of the paper is organized as follows. Section~\ref{sec:model} formally defines the model. In Section~\ref{sec:main}, we present our main result: a simple convex pricing rule implements CES welfare maximization in WE for $\rho \in (0,1]$ (Theorem~\ref{thm:main}). Section~\ref{sec:comp} presents an iterative algorithm for computing these WE. In Section~\ref{sec:truthful}, we consider truthful mechanisms for CES welfare maximization. Section~\ref{sec:sybil} discusses Sybil attacks, and Section~\ref{sec:counter} presents our negative results. At this point we conclude the main paper. Appendix~\ref{sec:fisher} discusses connections to Fisher markets, Appendix~\ref{sec:leontief} shows that our main result extends to Leontief valuations, Appendix~\ref{sec:linear} discusses the First Welfare Theorem in more detail, and Appendix~\ref{sec:proofs} provides some proofs omitted from the main paper.

\section{Model}\label{sec:model}

Let $N = \{1,2,\ldots n\}$ be a set of agents, and let $M = \{1,2,\dots m\}$ be a set of divisible goods. Throughout the paper, we use $i$ and $k$ to refer to agents and $j$ and $\ell$ to refer to goods. We need to determine an \emph{allocation} $\x \in \bbr_{\ge 0}^{n\times m}$, where $x_i \in \bbr_{\ge 0}^m$ is the \emph{bundle} of agent $i$, and $x_{ij} \in [0,1]$\footnote{Without loss of generality, we can normalize the supply of each good to be 1.} is the quantity of good $j$ allocated to agent $i$. An allocation cannot allocate more than the available supply: $\x$ is a valid allocation if and only if $\sum_i x_{ij} \leq 1$ for all $j$. We will also determine \emph{payments} $p_1,\dots,p_n$, where $p_i$ is the payment charged to agent $i$. Thus an outcome is an allocation $\x$ and payments $p_1,\dots,p_n$.

Agent $i$'s utility for a bundle $x_i$ is denoted by $u_i(x_i)  \in \bbr$. We assume that agents have quasilinear utility: for an allocation $\x$ and payment $p_i$, agent $i$'s utility is $u_i(x_i) = v_i(x_i) - p_i$, where $v_i$ is agent $i$'s \emph{valuation}. When each agent's payment only depends on the bundle she receives, i.e., $p_i = p(x_i)$, we call $p$ a \emph{pricing rule}. With the exception of Section~\ref{sec:truthful}, we will focus on pricing rules. For $v_i$, we make the following standard assumptions throughout the paper:
\begin{enumerate}
\item Nonzero: There exists a bundle $x_i$ such that $v_i(x_i) > 0$.
\item Montone: If $x_{ij} \ge y_{ij}$ for all $j \in M$, then $v_i(x_i) \ge v_i(y_i)$.
\item Normalized: $v_i(0,\dots, 0) = 0$.
\end{enumerate}

Our positive results require the following three additional properties, which we will mention explicitly whenever used:
\begin{enumerate}
\setcounter{enumi}{3}
\item Concave: For any bundles $x_i, y_i$ and constant $\lambda \in [0,1]$, we have $v_i(\lambda x_i + (1 - \lambda) y_i) \geq \lambda v_i( x_i) + (1-\lambda)u_i(y_i)$.
\item Homogeneous of degree $r$: for any bundle $x_i$ and constant $\lambda \ge 0$, $v_i(\lambda x_i) = \lambda^r v_i(x_i)$. For $0 < r < 1$, this models diminishing returns. Note that homogeneity implies normalization, and for monotone and concave $v_i$, we must have $r \ge 0$ and $r \le 1$ respectively. 
\item Differentiable: for any bundle $x_i$ and all $j \in M$, $\mfrac{\partial v_i(x_i)}{\partial x_{ij}}$ is defined.
\end{enumerate}


\paragraph{CES welfare.} For multipliers $\A = (a_1, a_2\dots a_n) \in \bbrpos^n$ and $\rho \in (-\infty, 0)\cup(0,1]$, the (weighted) CES welfare of an allocation $\x$ is $\Phi_\A(\rho, \x) = \big(\sum_{i \in N} a_i v_i(x_i)^{\rho}\big)^{1/\rho}$. For $\rho \ne 1$, $\Phi$ is strictly concave in $v_i(x_i)$ for all $i \in N$, so every optimal allocation $\x$ has the same valuation vector $v_1(x_1),\dots, v_n(x_n)$. We will use $\Psi_\A(\rho)$ to denote CES welfare maximization, i.e., $\Psi_\A(\rho) = \argmax_{\x \in \bbrpos^m:\ \sum_i x_{ij} \le 1\ \forall j} \Phi_\A(\rho, \x)$. There may be multiple optimal allocations (for example, if there is a good which no one values), so $\Psi_\A(\rho)$ denotes a set.  Thus $\x \in \Psi_\A(\rho)$ denotes that $\x$ has maximum CES welfare. When each agent has the same multiplier (other than Appendix~\ref{sec:fisher}, this will always be the case), we simply write $\Phi(\rho, \x)$ and $\Psi(\rho)$.

As an illustrative example, consider a single good and valuations that are homogeneous of degree 1. Utilitarian welfare results in the good being entirely allocated to agents with $w_i = \max_k w_k$, with other agents receiving nothing (see Figure~\ref{fig:linear}). In contrast, for $\rho < 1$, the unique allocation maximum CES welfare welfare gives the following bundle $x_i \in \bbrspos$ to each agent $i$ (Lemma~\ref{lem:m=1-x}): $x_i = \mfrac{{w_i}^{\frac{\rho}{1-\rho}}}{\sum_{k} {w_k}^{\frac{\rho}{1-\rho}}}$. One natural case is $\rho = 1/2$, which results in a proportional allocation.

\paragraph{Axiomatic characterization.} CES welfare functions also admit an axiomatic characterization. Consider the following axioms: (1) Monotonicity: if one agent's valuation increases while all others are unchanged, the welfare function should prefer the new allocation, (2) Anonymity: the welfare function should treat all agents the same, (3) Continuity: the welfare function should be continuous.\footnote{A slightly weaker version of continuity is often used: if an allocation $\x$ is strictly preferred to an allocation $\y$, there should be neighborhoods $N(\x)$ and $N(\y)$ such that every $\xprime \in N(\x)$ is preferred to every $\yprime \in N(\y)$. This weaker version only requires a welfare \emph{ordering} and does not require that this ordering be expressed by a function. However, any such ordering which also satisfies the rest of our axioms is indeed representable by a welfare function~\cite{Debreu1960}, and so both sets of axioms end up specifying the same set of welfare functions/orderings.}, (4) Independence of common scale: scaling all agent valuations by the same factor should not affect which allocations have better welfare than others, (5) Independence of unconcerned agents: when comparing the welfare of two allocations, the comparison should not depend on agents who have the same valuation in both allocations, and (6) The Pigou-Dalton principle: when choosing between equally efficient allocations, the welfare function should prefer more equitable allocations~\cite{Dalton1920,Pigou1912}. 

Disregarding monotonic transformations of the welfare function (which of course do not affect which allocations have better welfare than others), the set of welfare functions satisfying these axioms is exactly the set of CES welfare functions with $\rho \in (-\infty,0)\cup(0,1]$, including Nash welfare~\cite{Moulin2003}.\footnote{This actually does not include max-min welfare, which satisfies weak monotonicity but not strict monotonicity.} This axiomatic characterization shows that we are not just focusing on an arbitrary class of welfare functions: CES welfare functions are arguably the most reasonable welfare functions.

\paragraph{Walrasian equilibrium.} Given a pricing rule $p$, agent $i$'s \emph{demand set} is defined by $D_i(p) = \argmax_{x_i \in \bbrpos^m} u_i(x_i)$, or equivalently, $D_i(p) = \argmax_{x_i \in \bbrpos^m} \big(v_i(x_i) - p(x_i)\big)$. Given an allocation $\x$ and payment rule $p$, $(\x, p)$ is a\emph{Walrasian equilibrium} (WE) if both of the following hold:
\begin{enumerate}
\item Each agent receives a bundle in her demand set: $x_i \in D_i(p)$ for all $i \in N$.
\item The market clears: for all $j \in M$, $\sum_{i \in N} x_{ij} \le 1$, and for all $j \in M$ with nonzero cost, $\sum_{i \in N} x_{ij} = 1$.\footnote{We say that good $j$ has nonzero cost for $j$ if there is a bundle $x_i$ such that $x_{i\ell} = 0$ for all $\ell \ne j$, but $p(x_i) > 0$.}
\end{enumerate}

\section{Main result}\label{sec:main}

We begin with our main result: for a wide range of valuations and any $\rho \in (0,1]$, a simple convex pricing rule leads to CES welfare maximization in Walrasian equilibrium. Our pricing rule has many additional interesting properties; to avoid redundancy, we refer the reader back to our discussion in Section~\ref{sec:results}. On a high level, the proof relies on the KKT conditions for CES welfare maximization and the KKT conditions for each agent's demand set, and uses Euler's Theorem for homogeneous functions to conjoin the two. This will result in the following theorem:

\begin{restatable}{theorem}{mainThm}\label{thm:main}
Assume each $v_i$ is homogeneous of degree $r$, concave, and differentiable. For any $\rho \in (0, 1]$ and any allocation $\x$, we have $\x \in \Psi(\rho)$ if and only if there exist $q_1,\dots, q_m\in \bbrpos$ such that for the pricing rule
\[
p(x_i) = \rho r^{\frac{\rho -1}{\rho}} \Big(\sum_{j \in M} q_j x_{ij}\Big)^{1/\rho},
\]
$(\x, p)$ is a WE. Furthermore, $q_1,\dots, q_m$ are optimal Lagrange multipliers for Program~\ref{pro:ces}.
\end{restatable}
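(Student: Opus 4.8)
The strategy is to write down the Karush--Kuhn--Tucker (KKT) conditions on both sides---for optimality in Program~\ref{pro:ces} (CES welfare maximization) and for membership $x_i \in D_i(p)$ in each agent's demand set---and show they are the same system, with Euler's Theorem for homogeneous functions providing the conversion between them. First note that both optimization problems are convex and smooth where it matters: for $\rho \in (0,1]$, $t \mapsto t^{\rho}$ is increasing and concave on $\bbrpos$, so each $v_i(x_i)^{\rho}$ is concave and Program~\ref{pro:ces} maximizes a concave function over a polytope---its constraints are affine, so KKT is necessary and sufficient for optimality, with no further constraint qualification needed; likewise $(\sum_j q_j x_{ij})^{1/\rho}$ is convex (a power $\ge 1$ of a linear form), so $v_i - p$ is concave, and since $p$ is differentiable on all of $\bbrpos^m$ (its marginal price $\rho r^{(\rho-1)/\rho}\frac{1}{\rho}(\sum_\ell q_\ell x_{i\ell})^{1/\rho-1}q_j$ extends continuously by $0$ where $\sum_\ell q_\ell x_{i\ell}=0$, using $\rho < 1$), we have $x_i \in D_i(p)$ if and only if $\frac{\partial v_i}{\partial x_{ij}} \le \frac{\partial p}{\partial x_{ij}}$ for all $j$, with equality when $x_{ij} > 0$. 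One also needs that every agent has strictly positive value both at any $\x \in \Psi(\rho)$ and at any WE of the stated $p$ (for $\rho < 1$): the first follows by moving along the segment from the optimum toward the equal-split allocation and noting that the welfare's one-sided derivative there is $+\infty$ (since $\frac{d}{dt}t^{\rho} \to \infty$ as $t \to 0^+$, and the nonzero, monotone, and homogeneity assumptions force every agent's value at equal-split to be positive), contradicting optimality; the second drops out of the demand conditions themselves together with Euler's Theorem and the standing assumptions. For $\rho = 1$ the pricing rule is linear and Program~\ref{pro:ces} is smooth everywhere, and the statement reduces to the First Welfare Theorem, which can be handled separately.

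\textbf{Forward direction.} Let $\x \in \Psi(\rho)$ and take optimal Lagrange multipliers $q_1,\dots,q_m \ge 0$ for the supply constraints $\sum_i x_{ij} \le 1$. Stationarity of the Lagrangian gives (for the normalization with which Program~\ref{pro:ces} is stated) $v_i(x_i)^{\rho-1}\frac{\partial v_i}{\partial x_{ij}} \le q_j$, with equality when $x_{ij} > 0$. Multiplying the equality case by $x_{ij}$, summing over $j$, and invoking Euler's Theorem, $\sum_j x_{ij}\frac{\partial v_i}{\partial x_{ij}} = r\,v_i(x_i)$, yields the key identity $\sum_j q_j x_{ij} = r\,v_i(x_i)^{\rho}$. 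Complementary slackness $q_j(\sum_i x_{ij} - 1) = 0$, together with the observation that good $j$ has nonzero cost under $p$ exactly when $q_j > 0$, gives market clearing. Substituting the key identity into the marginal price shows $\frac{\partial p}{\partial x_{ij}} = \rho r^{(\rho-1)/\rho}\cdot\frac{1}{\rho}(r\,v_i(x_i)^{\rho})^{(1-\rho)/\rho}q_j = v_i(x_i)^{1-\rho}q_j$, the powers of $r$ and $\rho$ cancelling---this is exactly what pins down the constant $\rho r^{(\rho-1)/\rho}$---and this equals $\frac{\partial v_i}{\partial x_{ij}}$ when $x_{ij} > 0$ and bounds it above in general. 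Hence $x_i \in D_i(p)$ for every $i$, so $(\x, p)$ is a WE and the $q_j$ are optimal Lagrange multipliers for Program~\ref{pro:ces}.

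\textbf{Converse.} Suppose $(\x, p)$ is a WE for $p(x_i) = \rho r^{(\rho-1)/\rho}(\sum_j q_j x_{ij})^{1/\rho}$ with some $q_1,\dots,q_m \ge 0$. The demand conditions give $\frac{\partial v_i}{\partial x_{ij}} \le \frac{\partial p}{\partial x_{ij}}$ with equality when $x_{ij} > 0$; multiplying by $x_{ij}$, summing, and using Euler's Theorem again produces $\sum_j q_j x_{ij} = r\,v_i(x_i)^{\rho}$ (and, as a byproduct, $v_i(x_i) > 0$, since the right-hand side forces the left-hand side positive). Running the same simplification backwards gives $v_i(x_i)^{\rho-1}\frac{\partial v_i}{\partial x_{ij}} \le q_j$ with equality when $x_{ij} > 0$; together with feasibility and complementary slackness supplied by market clearing, these are precisely the KKT conditions for Program~\ref{pro:ces}, so $\x \in \Psi(\rho)$ by sufficiency.

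\textbf{Main obstacle.} The crux---and the only place homogeneity is used---is the single application of Euler's Theorem that converts $\sum_j q_j x_{ij}$, which sits inside the pricing rule, into $r\,v_i(x_i)^{\rho}$, thereby aligning the two KKT systems and forcing the exact constant $\rho r^{(\rho-1)/\rho}$; this step, and hence the theorem, fails when the degrees of homogeneity differ across agents (cf.\ Theorem~\ref{thm:different-r}). The rest is analytic bookkeeping: the positivity fact (so the CES objective is differentiable where we apply KKT), differentiability and well-posedness of the demand problem, and correctly matching goods of zero price to goods whose supply constraint is slack in the market-clearing condition.
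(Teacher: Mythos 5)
Your proposal is correct and takes essentially the same route as the paper's proof: KKT conditions for Program~\ref{pro:ces} and for each agent's demand problem, aligned via Euler's Theorem, with the same marginal-price computation $\frac{\partial p(x_i)}{\partial x_{ij}} = q_j v_i(x_i)^{1-\rho}$ and the same case analysis in both directions. The only differences are cosmetic: you isolate the identity $\sum_j q_j x_{ij} = r\,v_i(x_i)^{\rho}$ and add an explicit positivity argument that the paper leaves implicit, and you defer $\rho = 1$ to the First Welfare Theorem even though your KKT argument already covers that case uniformly.
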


\subsection{Proof setup}
We begin by setting up the two relevant convex programs and proving several lemmas. For valuations $v_1\dots v_n$, nonnegative multipliers $\A = a_1\dots a_n$, and $\rho \in (-\infty, 0) \cup(0,1]$, consider the following nonlinear program for maximizing CES welfare:
\begin{alignat}{2}
\max\limits_{\x \in \bbrpos^{n\times m}} &\ \frac{1}{\rho}\sum_{i \in N} a_i v_i(x_i)^\rho \label{pro:ces} \\ 
s.t.\ &\ \sum\limits_{i \in N} x_{ij}\leq 1\quad &&\ \forall j \in M  \nonumber
\end{alignat}

Since the constraints are linear and the objective function is concave (since $\rho \le 1$), Program~\ref{pro:ces} is a convex program. Program~\ref{pro:ces} depends on $\rho$, but we will leave this implicit when clear from context: we will simply say ``$\x$ is optimal for Program~\ref{pro:ces}" as opposed to ``$\x$ is optimal for Program~\ref{pro:ces} with respect to $\rho$". Note also that we are maximizing $\frac{1}{\rho}\sum_{i \in N} a_i v_i(x_i)^\rho$ instead of the true CES welfare $\Phi_\A(\rho, \x) = (\sum_{i \in N} a_i v_i(x_i)^\rho)^{1/\rho}$; this will lead to the same optimal allocation $\x$ and will simplify the analysis.

When $\A$ is not specified, we assume that $\A = \one$. Nonuniform multipliers will only be used in Appendix~\ref{sec:fisher} when we consider connections to Fisher markets, but we include them here for completeness. 

%

Next, consider each agent's demand set given a pricing rule $p$:
\begin{align}
D_i(p) = \argmax_{x_i \in \bbrpos^m}\ \big(v_i(x_i) - p(x_i)\big) \label{pro:demand}
\end{align}
When $p$ is convex (as in Theorem~\ref{thm:main}), $-p$ is concave. Since $v_i$ is also concave, $v_i(x_i) - p(x_i)$ is concave, so each agent's demand set defines a convex program (Program~\ref{pro:demand}). Program~\ref{pro:demand} depends on $i$, the agent in question, but again we leave this implicit when it is clear from context.

We will also use the following theorem, due to Euler. We include a short proof in Appendix~\ref{sec:proofs}.\footnote{The reason we provide a proof is that this theorem is often stated with the requirement of continuous differentiability, but in fact only requires differentiability; to avoid any confusion, we provide a proof only using differentiability.}

\begin{restatable}[Euler's Theorem for homogeneous functions]{theorem}{thmEuler}\label{thm:euler}
Let $f: \bbrpos^m \to \bbr$ be differentiable and homogenous of degree $r$. Then for any $\B = (b_1,\dots b_m) \in \bbrpos^m$, $\sum_{j = 1}^m b_j \mfrac{\partial f(\B)}{\partial b_j} = r f(\B)$.
\end{restatable}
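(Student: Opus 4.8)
The plan is to collapse the multivariate identity onto the ray through the origin and $\B$, turning it into an ordinary one-variable derivative computed in two ways. First I would fix $\B = (b_1,\dots,b_m)\in\bbrpos^m$ and set $g(\lambda) = f(\lambda\B)$ for $\lambda \ge 0$; since $\lambda\B \in \bbrpos^m$ for every $\lambda \ge 0$, this is well-defined. Homogeneity of degree $r$ gives the closed form $g(\lambda) = \lambda^r f(\B)$, so $g$ is differentiable at $\lambda = 1$ with $g'(1) = r f(\B)$. This is exactly the right-hand side of the claimed identity.

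Next I would compute $g'(1)$ a second way, through the partials of $f$. Writing $g = f\circ\gamma$ with $\gamma(\lambda) = \lambda\B$, the map $\gamma$ is linear with $\gamma'(\lambda) = \B$, and since $f$ is differentiable at $\gamma(1) = \B$, the chain rule yields $g'(1) = \sum_{j=1}^m \mfrac{\partial f(\B)}{\partial b_j}\, b_j$. Equating the two expressions for $g'(1)$ gives $\sum_{j=1}^m b_j\,\mfrac{\partial f(\B)}{\partial b_j} = r f(\B)$, which is the theorem.

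The one step that genuinely requires care --- and the reason for writing the proof out rather than quoting Euler's theorem in its usual form --- is this chain-rule step, which must go through using only differentiability of $f$ at the single point $\B$, with no continuity of the partials (i.e.\ not assuming $f \in C^1$). Concretely I would expand the definition of differentiability at $\B$: for $\mathbf{h} = (h_1,\dots,h_m)$ we have $f(\B+\mathbf{h}) = f(\B) + \sum_{j=1}^m \mfrac{\partial f(\B)}{\partial b_j} h_j + o(\|\mathbf{h}\|)$; substituting $\mathbf{h} = (\lambda-1)\B$ and dividing by $\lambda - 1$ as $\lambda\to 1$ produces $g'(1) = \sum_{j=1}^m \mfrac{\partial f(\B)}{\partial b_j} b_j$ directly, with no appeal to $C^1$-ness. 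The remaining loose ends are routine: if some $b_j = 0$ then $\B$ is a boundary point of $\bbrpos^m$, and both $\mfrac{\partial f(\B)}{\partial b_j}$ and the differentiability of $f$ at $\B$ are taken one-sided into the orthant, which leaves the computation above unchanged since $\lambda\B$ keeps that coordinate at $0$; and when $r<1$ the map $\lambda\mapsto\lambda^r$ is not differentiable at $\lambda=0$, but we only ever differentiate at $\lambda = 1$, so this is irrelevant. Beyond being precise about these regularity conventions, I do not anticipate any real difficulty.
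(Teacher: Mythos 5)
Your proof is correct and follows essentially the same route as the paper: restrict $f$ to the ray $g(\lambda)=f(\lambda\B)$, compute $g'(1)$ once via homogeneity and once via the chain rule, and equate. Your explicit expansion of the definition of differentiability to justify the chain-rule step without $C^1$-ness is exactly the point the paper flags as its reason for including a proof, so the extra care is well placed.
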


Before we state and prove Theorem~\ref{thm:main}, we note one other property: for a pricing rule of the form $p(x_i) = c(\sum_{j \in M} q_j x_{ij})^{1/\rho}$ where $c>0$, good $j$ has nonzero cost (for the purposes of Walrasian equilibrium) if and only if $q_j = 0$.

\subsection{Proof of Theorem~\ref{thm:main}}

The proof of Theorem~\ref{thm:main} is divided into three parts. The first part involves setting up the KKT conditions for Programs~\ref{pro:ces} and \ref{pro:demand}. The second assumes that $\x \in \Psi(\rho)$ and proves that $(\x, p)$ is a WE, and the third assumes that $(\x, p)$ is a WE and proves that $\x \in \Psi(\rho)$.


\begin{proof}[Proof of Theorem~\ref{thm:main}]

\textbf{Part 1: Setup.} Let $\q$ denote the vector $(q_1,\dots,q_m) \in \bbrpos^m$; then the Lagrangian of Program~\ref{pro:ces} is $L(\x, \q) = \frac{1}{\rho} \sum_{i \in N} v_i(x_i)^\rho - \sum_{j \in M} q_j (\sum_{i \in N} x_{ij} - 1)$.\footnote{The expert reader may notice that we have omitted the $\x \in \bbrpos^{m\times n}$ constraint from the Lagrangian. We do this to slightly simplify the analysis. The effect on the KKT conditions is that stationarity changes from ``For all $i,j$, $\frac{\partial L(\x, \q)}{\partial x_{ij}} = 0$" to ``For all $i,j$, $\frac{\partial L(\x, \q)}{\partial x_{ij}} \le 0$, and the inequality holds with equality when $x_{ij} > 0$".} Since Program~\ref{pro:ces} is convex and satisfies strong duality by Slater's condition, the KKT conditions are both necessary and sufficient for optimality. That is, $\x$ is optimal for Program~\ref{pro:ces} (which is equivalent to $\x \in \Psi(\rho)$) if and only if there exist Lagrange multipliers $\q \in \bbrpos^m$ such that both of the following hold:\footnote{The KKT conditions also include primal feasibility and dual feasibility. Since we will only work with valid allocations $\x$ and nonnegative $q_1,\dots,q_m$, these two conditions are trivially satisfied.}
\begin{enumerate}
\item Stationarity: $\mfrac{\partial L(\x, \q)}{\partial x_{ij}} \le 0$ for all $i,j$. Furthermore, if $x_{ij} > 0$, the inequality holds with equality.
\item Complementary slackness: for all $j \in M$, either $\sum_{i \in N} x_{ij} = 1$, or $q_j = 0$.
\end{enumerate}

For a given $(i,j)$ pair, $\mfrac{\partial L(\x, \q)}{\partial x_{ij}}$ is equal to $v_i(x_i)^{\rho - 1} \mfrac{\partial v_i(x_i)}{\partial x_{ij}} - q_j$, so stationarity for Program~\ref{pro:ces} is equivalent to: $q_j \ge v_i(x_i)^{\rho - 1} \mfrac{\partial v_i(x_i)}{\partial x_{ij}}$ for all $i,j$, and when $x_{ij} > 0$, the inequality holds with equality.

Next consider Program~\ref{pro:demand}, which defines each agent's demand set. This program has no constraints (other than $x_i \in \bbrpos^m$), so we can ignore complementary slackness. Thus by the KKT conditions, $x_i \in D_i(p)$ if and only if for every $j \in M$,  $\mfrac{\partial v_i(x_i)}{\partial x_{ij}} \le \mfrac{\partial p(x_i)}{\partial x_{ij}}$, and if $x_{ij} > 0$, the inequality holds with equality (stationarity). We can explicitly compute the partial derivatives of $p$: $\mfrac{\partial p(x_i)}{\partial x_{ij}} = r^{\frac{\rho -1}{\rho}} q_j \big(\sum_{\ell \in M} q_\ell x_{i\ell}\big)^{\frac{1-\rho}{\rho}}$.

\textbf{Part 2: Optimal CES welfare implies WE.} Suppose that $\x \in \Psi(\rho)$. Then there exists $\q \in \bbrpos^m$ such that $q_j \ge v_i(x_i)^{\rho - 1} \mfrac{\partial v_i(x_i)}{\partial x_{ij}}$ for all $j$, and $q_j = v_i(x_i)^{\rho - 1} \mfrac{\partial v_i(x_i)}{\partial x_{ij}}$ whenever $x_{ij} > 0$. Using the latter in combination with Euler's Theorem for homogeneous functions, for each $(i,j)$ pair we have
\begin{align*}
\frac{\partial p(x_i)}{\partial x_{ij}} =&\ r^{\frac{\rho -1}{\rho}} q_j \Big(\sum_{\ell \in M} q_\ell x_{i\ell}\Big)^{\frac{1-\rho}{\rho}}\\
=&\ q_j \Big(r^{-1}  \sum_{\ell: x_{i\ell} > 0} q_\ell x_{i\ell}\Big)^{\frac{1-\rho}{\rho}}\\
=&\ q_j \Big(r^{-1}  \sum_{\ell: x_{i\ell} > 0} v_i(x_i)^{\rho - 1} \frac{\partial v_i(x_i)}{\partial x_{i\ell}} x_{i\ell}\Big)^{\frac{1-\rho}{\rho}} &&\ \text{(stationarity for $x_{i\ell}$ when $x_{i\ell} > 0$)}\\
=&\ q_j \Big(r^{-1}  v_i(x_i)^{\rho - 1} \sum_{\ell \in M}  \frac{\partial v_i(x_i)}{\partial x_{i\ell}} x_{i\ell}\Big)^{\frac{1-\rho}{\rho}}\\
=&\ q_j \big(r^{-1} v_i(x_i)^{\rho - 1} r v_i(x_i)\big)^{\frac{1-\rho}{\rho}} \quad\quad &&\ \text{(Euler's Theorem)}\\
=&\ q_j v_i(x_i)^{1-\rho}
\end{align*}
Thus $\mfrac{\partial p(x_i)}{\partial x_{ij}} = q_j v_i(x_i)^{1-\rho}$. Next, we claim that $x_i \in D_i(p)$ for all $i \in N$. Fix an agent $i$; we show by case analysis that $x_i$ satisfies stationarity (for Program~\ref{pro:demand}) for each $j \in M$.

Case 1: $q_j = v_i(x_i)^{\rho - 1} \mfrac{\partial v_i(x_i)}{\partial x_{ij}}$. Then $\mfrac{\partial p(x_i)}{\partial x_{ij}} = q_j v_i(x_i)^{1-\rho} = v_i(x_i)^{\rho - 1} \mfrac{\partial v_i(x_i)}{\partial x_{ij}} v_i(x_i)^{1-\rho} = \mfrac{\partial v_i(x_i)}{\partial x_{ij}}$,
and we are done.

Case 2: $x_{ij} = 0$ and $q_j \ge v_i(x_i)^{\rho - 1} \mfrac{\partial v_i(x_i)}{\partial x_{ij}}$. Then similarly, $\mfrac{\partial p(x_i)}{\partial x_{ij}} = q_j v_i(x_i)^{1-\rho}
\ge v_i(x_i)^{\rho - 1} \mfrac{\partial v_i(x_i)}{\partial x_{ij}} v_i(x_i)^{1-\rho}
= \mfrac{\partial v_i(x_i)}{\partial x_{ij}}$,
and again we are done. Therefore $x_i \in D_i(p)$ for all $i \in N$. 

Since $\x$ is a valid allocation, $\sum_{i \in N} x_{ij} \le 1$ for all $j \in M$. This, combined with complementary slackness for Program~\ref{pro:ces}, is identical to the market clearing condition for Walrasian equilibrium. Thus we have shown that if $\x \in \Psi(\rho)$, there exist $q_1,\dots, q_m$ (which are optimal Lagrange multipliers for Program~\ref{pro:ces}) such that for pricing rule $p$ as defined, $(\x, p)$ is a WE.

\textbf{Part 3: WE implies optimal CES welfare.} This is similar to Part 2. Suppose there exists $\q \in \bbrpos^m$ such that for pricing rule $p(x_i) = \rho r^{\frac{\rho -1}{\rho}} (\sum_{j \in M} q_j x_{ij})^{1/\rho}$, $(\x, p)$ is a WE. Recall the partial derivatives of $p$: $\mfrac{\partial p(x_i)}{\partial x_{ij}} = q_j  \big(r^{-1} \sum_{\ell \in M} q_\ell x_{i\ell}\big)^{\frac{1-\rho}{\rho}}$. We multiply each side by $x_{ij} r^{-1}$, and sum both sides over $j$:
\begin{align*}
r^{-1} \sum_{j \in M} x_{ij} \frac{\partial p(x_i)}{\partial x_{ij}} =&\ \Big(r^{-1}\sum_{\ell \in M} q_\ell x_{i\ell}\Big)^{\frac{1-\rho}{\rho}} r^{-1} \sum_{j \in M} q_j x_{ij}\\
r^{-1}\sum_{j: x_{ij} > 0} x_{ij} \frac{\partial p(x_i)}{\partial x_{ij}} =&\ \Big(r^{-1}\sum_{\ell \in M} q_\ell x_{i\ell}\Big)^{\frac{1-\rho}{\rho} + 1}
\end{align*}
Since $(\x, p)$ is a WE, we have $x_i \in D_i(p)$ for all $i \in N$. Thus $\mfrac{\partial v_i(x_i)}{\partial x_{ij}} = \mfrac{\partial p(x_i)}{\partial x_{ij}}$ whenever $x_{ij} > 0$, so
\begin{align*}
r^{-1}\sum_{j: x_{ij} > 0} x_{ij} \frac{\partial v_i(x_i)}{\partial x_{ij}} =&\ \Big(r^{-1}\sum_{\ell \in M} q_\ell x_{i\ell}\Big)^{1/\rho}\\
r^{-1}\sum_{j \in M} x_{ij} \frac{\partial v_i(x_i)}{\partial x_{ij}} =&\ \Big(r^{-1}\sum_{\ell \in M} q_\ell x_{i\ell}\Big)^{1/\rho}\\
v_i(x_i) =&\ \Big(r^{-1}\sum_{\ell \in M} q_\ell x_{i\ell}\Big)^{1/\rho} \quad\quad \text{(Euler's Theorem)}\\
v_i(x_i)^{\rho-1} =&\ \Big(r^{-1}\sum_{\ell \in M} q_\ell x_{i\ell}\Big)^{\frac{\rho-1}{\rho}}
\end{align*}
Using this in combination with $\mfrac{\partial p(x_i)}{\partial x_{ij}} = q_j  \big(r^{-1} \sum_{\ell \in M} q_\ell x_{i\ell}\big)^{\frac{1-\rho}{\rho}}$, we get
\[
q_j = \frac{\partial p(x_i)}{\partial x_{ij}}\Big(r^{-1}\sum_{\ell \in M} q_\ell x_{i\ell}\Big)^{\frac{\rho-1}{\rho}}
= \frac{\partial p(x_i)}{\partial x_{ij}} v_i(x_i)^{\rho - 1}
\]

Next, we claim that $(\x, \q)$ satisfies stationarity for Program~\ref{pro:ces}. We proceed by case analysis for each $(i,j)$ pair. Stationarity for Program~\ref{pro:demand} implies that these are the only two possible cases.

Case 1: $\mfrac{\partial v_i(x_i)}{\partial x_{ij}} = \mfrac{\partial p(x_i)}{\partial x_{ij}}$. In this case, $q_j = \mfrac{\partial v_i(x_i)}{\partial x_{ij}} v_i(x_i)^{\rho - 1}$, and we are done. 

Case 2: $x_{ij} = 0$ and $\mfrac{\partial v_i(x_i)}{\partial x_{ij}} \le \mfrac{\partial p(x_i)}{\partial x_{ij}}$. In this case we have $q_j \ge \mfrac{\partial v_i(x_i)}{\partial x_{ij}} v_i(x_i)^{\rho - 1}$, and we are again done. 

Thus $(\x, \q)$ satisfies stationarity for Program~\ref{pro:ces}. Furthermore, the second condition of Walrasian equilibrium is again identical to the complementary slackness condition. We conclude that $\x \in \Psi(\rho)$, and that $q_1,\dots, q_m$ are optimal Lagrange multipliers for Program~\ref{pro:ces}. This completes the proof.
\end{proof}

The following corollary states that under this pricing rule, each agent's resulting value will be proportional to the her payment. This property will be helpful in future sections, and may also be interesting independently. 

\begin{restatable}{corollary}{corUtilityPrice}\label{cor:utility-price}
Assume each $v_i$ is homogeneous of degree $r$, concave, and differentiable, and let $p(x_i) = (\sum_{j \in M} q_j x_{ij})^{1/\rho}$ for some $\q\in\bbrpos^m$. Then if $x_i \in D_i(p)$, $p(x_i) = \rho r v_i(x_i)$.
\end{restatable}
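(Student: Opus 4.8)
The plan is to reuse the Euler's-Theorem manipulation from the proof of Theorem~\ref{thm:main}, but applied directly to the demand program (Program~\ref{pro:demand}) rather than to the CES program. First I would record the first-order condition satisfied by any $x_i \in D_i(p)$: since $v_i$ and $p$ are both differentiable and $x_i$ maximizes $v_i(\cdot) - p(\cdot)$ over $\bbrpos^m$, the KKT stationarity condition from Part~1 of the proof of Theorem~\ref{thm:main} gives $\frac{\partial v_i(x_i)}{\partial x_{ij}} \le \frac{\partial p(x_i)}{\partial x_{ij}}$ for every $j \in M$, with equality whenever $x_{ij} > 0$.

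Next I would multiply the identity $x_{ij}\frac{\partial v_i(x_i)}{\partial x_{ij}} = x_{ij}\frac{\partial p(x_i)}{\partial x_{ij}}$, which holds for \emph{all} $j$ since both sides vanish when $x_{ij}=0$, and sum over $j \in M$ to get $\sum_{j \in M} x_{ij}\frac{\partial v_i(x_i)}{\partial x_{ij}} = \sum_{j \in M} x_{ij}\frac{\partial p(x_i)}{\partial x_{ij}}$. By Euler's Theorem (Theorem~\ref{thm:euler}) applied to $v_i$, which is homogeneous of degree $r$, the left-hand side equals $r\, v_i(x_i)$. For the right-hand side, I would observe that $p(x_i) = \big(\sum_{j \in M} q_j x_{ij}\big)^{1/\rho}$ is homogeneous of degree $1/\rho$, so Euler's Theorem applied to $p$ gives $\sum_{j \in M} x_{ij}\frac{\partial p(x_i)}{\partial x_{ij}} = \frac{1}{\rho} p(x_i)$; alternatively one can just compute $\frac{\partial p(x_i)}{\partial x_{ij}} = \frac{1}{\rho} q_j \big(\sum_{\ell \in M} q_\ell x_{i\ell}\big)^{(1-\rho)/\rho}$ and sum by hand. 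Equating the two sides yields $r\, v_i(x_i) = \frac{1}{\rho} p(x_i)$, i.e. $p(x_i) = \rho r\, v_i(x_i)$, as claimed.

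I do not expect a genuine obstacle here, since the argument follows the template of Theorem~\ref{thm:main} very closely; the only points needing a little care are (i) verifying that $p$ is differentiable on all of $\bbrpos^m$ (for $\rho<1$ the exponent $(1-\rho)/\rho$ is positive, so the marginal price is $0$ — not singular — at bundles with $\sum_{\ell \in M} q_\ell x_{i\ell}=0$), and (ii) noting that the first-order condition is legitimately available because it is a necessary condition for $x_i$ to maximize the differentiable function $v_i - p$ over the orthant. Concavity of $v_i$ is not strictly needed for this direction, but it is harmless to keep as a standing assumption, matching the hypotheses of the corollary.
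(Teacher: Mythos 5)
Your proposal is correct and follows essentially the same route as the paper: invoke KKT stationarity for the demand program, multiply by $x_{ij}$ and sum over $j$ (so the inequalities at $x_{ij}=0$ drop out), then apply Euler's theorem to $v_i$ (degree $r$) on the left and to $p$ (degree $1/\rho$) on the right to get $r\,v_i(x_i)=\frac{1}{\rho}p(x_i)$. Your remarks about differentiability of $p$ at the boundary and about concavity being dispensable for this direction are correct observations but not a different argument; as a small aside, the paper's closing phrase ``Multiplying both sides by $1/\rho$'' should read ``by $\rho$''.
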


\begin{proof}
As before, stationarity for Program~\ref{pro:demand} gives us $\mfrac{\partial v_i(x_i)}{\partial x_{ij}} = \mfrac{\partial p(x_i)}{\partial x_{ij}}$ whenever $x_{ij} > 0$. Also note that by definition, $p$ is homogeneous of degree $1/\rho$. Using these two properties in combination with Euler's Theorem, we get
\begin{align*}
\frac{\partial v_i(x_i)}{\partial x_{ij}} =&\ \frac{\partial p(x_i)}{\partial x_{ij}} \quad \text{for all $j \in M$ s.t. $x_{ij} > 0$} \\
\sum_{j \in M} x_{ij} \frac{\partial v_i(x_i)}{\partial x_{ij}} =&\ \sum_{j \in M} x_{ij} \frac{\partial p(x_i)}{\partial x_{ij}}\\
r v_i(x_i) =&\ \frac{1}{\rho} p(x_i)
\end{align*}
Multiplying both sides by $1/\rho$ completes the proof.
\end{proof}

\section{Towards an implementation}\label{sec:comp}

Theorem~\ref{thm:main} guarantees the existence of Walrasian equilibria maximizing CES welfare, but says nothing about how to find these equilibria. As discussed in Section~\ref{sec:results}, we could always explicitly ask each agent for her valuation, and directly solve Program~\ref{pro:ces}. However, agents are generally not able to articulate their entire valuations, and even if they are, doing so could be extremely tedious.

In this section, we give an iterative algorithm for computing the WE given by Theorem~\ref{thm:main}. The algorithm will just compute the optimal allocation; Lemma~\ref{lem:x-to-q} shows how the equilibrium pricing rule can easily be obtained once the optimal allocation is in hand. Our algorithm is computationally equivalent to running the general-purpose ellipsoid method on Program~\ref{pro:ces}, i.e., it explores the exact same sequence of allocations. The key is that we are able to implement the ellipsoid method only using valuation gradient queries, i.e., ``tell me the gradient of your valuation at this point". We immediately inherit the correctness and polynomial-time convergence properties of the ellipsoid algorithm. Throughout this section, we make the same assumptions as in Theorem~\ref{thm:main}: each $v_i$ is concave, homogeneous of degree $r$, and differentiable.

First, recall that the pricing rule from Theorem~\ref{thm:main} takes the form $p(x_i) = \rho r^{\frac{\rho - 1}{\rho}} (\sum_{j \in M} q_j x_{ij})^{1/\rho}$. Since $\rho$ and $r$ are constants, it suffices to compute $\q = q_1,\dots,q_m$. Helpfully, Theorem~\ref{thm:main} tells us that if $\q$ are optimal Lagrange multipliers for Program~\ref{pro:ces}, then $(\x, p)$ is a WE for any $\x \in \Psi(\rho)$. The next lemma states if we know an $\x \in \Psi(\rho)$, and have access to the gradients of the agents' valuations at $\x$, we can determine optimal Lagrange multipliers.

\begin{restatable}{lemma}{lemXToQ}\label{lem:x-to-q}
Let $\x \in \Psi(\rho)$. Then we can determine optimal Lagrange multipliers $\q$ using only\\ $\nabla v_1(x_1),\dots,\nabla v_n(x_n)$.
\end{restatable}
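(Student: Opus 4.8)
The plan is to read off the multipliers directly from the stationarity condition for Program~\ref{pro:ces}, which was derived in Part~1 of the proof of Theorem~\ref{thm:main}. Recall that stationarity says: for every agent $i$ and good $j$, $q_j \ge v_i(x_i)^{\rho-1} \frac{\partial v_i(x_i)}{\partial x_{ij}}$, with equality whenever $x_{ij} > 0$. Since $\x \in \Psi(\rho)$, such optimal multipliers $\q$ exist. The quantities $v_i(x_i)$ can be recovered from the gradients: by Euler's Theorem applied to $v_i$, we have $v_i(x_i) = \frac{1}{r}\sum_{j \in M} x_{ij}\frac{\partial v_i(x_i)}{\partial x_{ij}}$, which depends only on $\nabla v_i(x_i)$ and $x_i$ (and $x_i$ is known since we are handed $\x$). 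So for each good $j$, if some agent $i$ has $x_{ij} > 0$, we can simply set $q_j = v_i(x_i)^{\rho-1}\frac{\partial v_i(x_i)}{\partial x_{ij}}$, which is computable from $\nabla v_i(x_i)$ and $x_i$ alone.

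First I would handle the case where good $j$ is allocated to at least one agent. Pick any such agent $i$ (i.e.\ with $x_{ij} > 0$), compute $v_i(x_i)$ via Euler's Theorem as above, and set $q_j := v_i(x_i)^{\rho-1}\frac{\partial v_i(x_i)}{\partial x_{ij}}$. I should note that this is well-defined: stationarity guarantees that all agents with $x_{ij}>0$ produce the same value here (each equals the common optimal $q_j$), so the choice of $i$ does not matter. Second, I would handle the case where good $j$ is unallocated, i.e.\ $\sum_i x_{ij} = 0$. In this case stationarity only demands $q_j \ge v_i(x_i)^{\rho-1}\frac{\partial v_i(x_i)}{\partial x_{ij}}$ for all $i$, so I would set $q_j := \max_{i \in N} v_i(x_i)^{\rho-1}\frac{\partial v_i(x_i)}{\partial x_{ij}}$; this is again computable from the gradients, satisfies stationarity, and satisfies complementary slackness trivially since $q_j = 0$ is not forced (and if it happens that $q_j$ should be $0$, this max — over monotone valuations — will be $0$ anyway when the partials vanish; otherwise any value satisfying the inequality is a valid multiplier). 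Finally I would verify that the resulting $\q$ satisfies both KKT conditions for Program~\ref{pro:ces}: stationarity holds by construction, and complementary slackness holds because whenever $q_j > 0$ it was defined using an agent with $x_{ij}>0$ in the allocated case, while in the unallocated case complementary slackness ($\sum_i x_{ij} = 1$ or $q_j = 0$) — wait, here I need $\sum_i x_{ij}=1$; but an unallocated good has zero cost, so market clearing does not require it, and for Program~\ref{pro:ces} complementary slackness with $\sum_i x_{ij}<1$ forces $q_j = 0$. This last point is the only subtlety.

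The main obstacle is exactly this last subtlety regarding unallocated (or underallocated) goods: complementary slackness for Program~\ref{pro:ces} requires $q_j = 0$ whenever $\sum_i x_{ij} < 1$, so I cannot freely take a maximum there. The fix is to observe that since valuations are monotone and $\x$ is CES-optimal, any good that is not fully allocated must be valued at zero marginal rate by every agent at $\x$ (otherwise reallocating or adding more of it would strictly increase welfare, contradicting optimality — or more precisely, optimal multipliers with $q_j = 0$ exist, which by stationarity forces $v_i(x_i)^{\rho-1}\frac{\partial v_i(x_i)}{\partial x_{ij}} \le 0$, hence $=0$ by nonnegativity). Therefore for such goods the correct choice is simply $q_j = 0$, which is trivially computable, and the $\max$ over gradients will in fact equal $0$. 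So the clean statement of the construction is: $q_j = v_i(x_i)^{\rho-1}\frac{\partial v_i(x_i)}{\partial x_{ij}}$ for any $i$ with $x_{ij}>0$, and $q_j = 0$ otherwise — all expressible through the gradients and the known allocation — and one checks the KKT conditions to confirm optimality.
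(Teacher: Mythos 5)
Your proposal is correct and follows essentially the same approach as the paper: recover each $v_i(x_i)$ from the gradient via Euler's Theorem, read off $q_j$ from the stationarity equation $q_j = v_i(x_i)^{\rho-1}\frac{\partial v_i(x_i)}{\partial x_{ij}}$ for any agent with $x_{ij}>0$, and set $q_j = 0$ for goods allocated to no one via complementary slackness. The detour through taking a maximum over gradients is unnecessary (as you yourself note before correcting it), but your final construction and justification match the paper's proof.
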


\begin{proof}
First, using Euler's Theorem for homogeneous functions (Theorem~\ref{thm:euler}), we can obtain $v_1(x_1),\dots,v_n(x_n)$ using only $\nabla v_1(x_1),\dots,\nabla v_n(x_n)$. Next, since $\x \in \Psi(\rho)$, the KKT conditions for Program~\ref{pro:ces} imply that whenever $x_{ij} > 0$, $q_j = v_i(x_i)^{\rho - 1} \mfrac{\partial v_i(x_i)}{\partial x_{ij}}$. Fix a $j \in M$. If $x_{ij} > 0$ for some agent $i$, then $q_j = v_i(x_i)^{\rho - 1} \mfrac{\partial v_i(x_i)}{\partial x_{ij}}$. We know all the values on the right hand side, so we can compute $q_j$. if $x_{ij} = 0$ for all $i \in N$, then complementary slackness implies that $q_j = 0$.
\end{proof}


Thus it suffices to find an allocation $\x \in \Psi(\rho)$, which is equivalent to finding an $\x$ that is optimal for Program~\ref{pro:ces}. There are many iterative algorithms for solving convex programs of this form. Furthermore, many only require (1) oracle access to the objective function and its gradient, and (2) \emph{a separation oracle}\footnote{A separation oracle is an algorithm which, given a point $x$ and a convex set $\mathcal{X}$, determines whether $x \in \mathcal{X}$. If $x\not\in\mathcal{X}$, it must return a separating hyperplane (if $\mathcal{X}$ is specified by a set of constraints, returning a violated constraint is sufficient).} for the constraint set (and no additional assumptions of strong convexity or other properties). For the sake of specificity, we focus on the \emph{ellipsoid method}~\cite{Bubeck2015}, but any algorithm with these properties is sufficient for our purposes.

\begin{lemma}[\cite{Bubeck2015}]\label{lem:converge}
Let $f$ be a convex function and let $\mathcal{X}$ be a convex set. Consider the program $\min_{x \in \mathcal{X}} f(x)$.
Let $\E$ be a ball containing the minimum of $f$, and suppose there exists a polynomial-time separation oracle for $\mathcal{X}$. Then the ellipsoid method starting from $\E$ requires only oracle access to $f$ and $\nabla f$, and converges to the minimum of $f$ in polynomial time.
\end{lemma}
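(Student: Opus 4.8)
The plan is to invoke the classical convergence analysis of the ellipsoid method essentially verbatim; the only thing that needs checking is that the algorithm can be driven using nothing more than a first-order oracle for $f$ and a separation oracle for $\mathcal{X}$, which is exactly the interface the lemma promises. Recall the mechanics: we maintain an ellipsoid $\E_t$ (with $\E_0 = \E$) guaranteed to contain a minimizer $x^\star$ of $f$ over $\mathcal{X}$, together with the best feasible point seen so far. At step $t$ we look at the center $c_t$ of $\E_t$ and call the separation oracle. If $c_t \notin \mathcal{X}$, the oracle returns a hyperplane separating $c_t$ from $\mathcal{X}$ (for Program~\ref{pro:ces}, a violated constraint $\sum_i x_{ij} \le 1$), and since $x^\star \in \mathcal{X}$ it lies in the corresponding half-space. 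If $c_t \in \mathcal{X}$, we instead query $\nabla f(c_t)$ and use the subgradient inequality: every point $x$ with $f(x) < f(c_t)$ satisfies $\langle \nabla f(c_t),\, x - c_t\rangle < 0$, so $x^\star$ again lies in a known half-space through $c_t$. In both cases $x^\star$ survives in a half-ellipsoid, and we replace $\E_t$ by the unique minimum-volume ellipsoid $\E_{t+1}$ containing that half, a closed-form rank-one update of the center and shape matrix, computable in $O(d^2)$ arithmetic operations, where $d = nm$ is the dimension.

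The convergence argument is the standard volume-shrinkage bound $\mathrm{vol}(\E_{t+1}) \le e^{-1/(2(d+1))}\,\mathrm{vol}(\E_t)$, so after $O\!\big(d^2 \log(\mathrm{vol}(\E_0)/V)\big)$ steps the ellipsoid has volume below any target $V$. Choosing $V$ small relative to the volume of a ball inscribed in $\mathcal{X}$ and using convexity and continuity of $f$ on the compact set $\mathcal{X}$, one concludes that the best feasible iterate is within $\epsilon$ of $\min_{x\in\mathcal{X}} f(x)$ after a number of iterations polynomial in $d$, $\log(1/\epsilon)$, and the logarithms of the relevant geometric parameters (the radius of $\E$, the inner radius of $\mathcal{X}$, and a bound on the variation of $f$ over $\E$). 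Each iteration costs one separation-oracle call, one gradient evaluation, and $O(d^2)$ extra arithmetic, so the total running time is polynomial, and the only non-oracle input used is the starting ball $\E$ — exactly what is assumed. I would simply cite \cite{Bubeck2015} for the precise form of these bounds.

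The one point that deserves care — and the only real obstacle — is stating ``polynomial time'' cleanly: the generic ellipsoid guarantee needs a bound $R$ on the radius of $\E$, a lower bound $r$ on the radius of a ball contained in $\mathcal{X}$ (or, in the optimization-with-cuts variant, a bound tying $\epsilon$-suboptimality to the geometry), and a bound on how fast $f$ varies, since a convex function need not be Lipschitz a priori. For the application at hand these are all benign: $\mathcal{X} = \{\x \in \bbrpos^{n\times m} : \sum_i x_{ij} \le 1\ \forall j\}$ is a fixed polytope with an obvious polynomial-time separation oracle (test each of the $m$ constraints), it contains a ball of radius $\Omega(1/n)$ (e.g.\ around the allocation $x_{ij} = 1/(2n)$), and the objective $-\frac{1}{\rho}\sum_i a_i v_i(x_i)^\rho$ is continuous on the compact box $[0,1]^{n\times m}$ and hence bounded there with bounded variation; so all three quantities are controlled and the polynomial-time claim goes through. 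Numerical precision of the rank-one ellipsoid updates is handled exactly as in the textbook treatment and does not interact with our first-order-oracle reduction.
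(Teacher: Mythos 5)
Your proposal is correct and is exactly the standard argument that the paper defers to by citing \cite{Bubeck2015} without proof; the volume-shrinkage analysis, the two-case cut (separation oracle vs.\ gradient half-space), and the verification that the feasible polytope has a polynomial-time separation oracle, an inscribed ball of non-negligible radius, and a bounded objective are all the right ingredients. No further comparison is needed since the paper supplies no proof of its own for this lemma.
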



In our case, we have a trivial polynomial-time separation oracle: simply check each constraint to see if it is violated. For the gradient of our objective function, we have $\mfrac{\partial}{\partial x_{ij}} \Big(\frac{1}{\rho} \sum_{i \in N} v_i(x_i)^\rho\Big) = \mfrac{\partial v_i(x_i)}{\partial x_{ij}} v_i(x_i)^{\rho-1}$.
By Euler's Theorem for homogeneous functions (Theorem~\ref{thm:euler}), we have
\begin{align}
\frac{\partial v_i(x_i)}{\partial x_{ij}} v_i(x_i)^{\rho-1}  =  \frac{\partial v_i(x_i)}{x_{ij}} \Big(r^{-1} \sum_{\ell \in M} x_{i\ell} \frac{\partial v_i(x_i)}{x_{i\ell}}\Big)^{\rho-1} \label{eq:euler}
\end{align}
Similarly,
\begin{align}
\frac{1}{\rho} \sum_{i \in N} v_i(x_i)^\rho = \frac{1}{\rho} \sum_{i \in N} \Big(r^{-1} \sum_{j \in M} x_{ij} \frac{\partial v_i(x_i)}{\partial x_{ij}}\Big)^\rho\label{eq:euler2}
\end{align}

Therefore for any allocation $\x$, we can compute both the objective function value and the gradient of the objective function using only the gradients of $v_1,\dots,v_n$. The final ingredient we need is an initial ball guaranteed to contain the optimum; we can simply enclose the entire feasible region in a ball of constant radius.

Thus we get the following iterative algorithm for computing the equilibrium pricing rule:
\begin{enumerate}
\item Run the ellipsoid method (or any other suitable convex optimization algorithm) to solve Program~\ref{pro:ces}.
\item At the start of each iteration, ask each agent $i$ for the gradient of $v_i$ at the current point $\x$.
\item Whenever the algorithm requires the gradient of the objective function at $\x$, compute it via Equation~\ref{eq:euler}.
\item Whenever the algorithm requires the objective function value at $\x$, compute it via Equation~\ref{eq:euler2}.
\end{enumerate}

Lemma~\ref{lem:converge} immediately implies correctness and polynomial-time convergence. 


\subsection{Eliciting the gradients of valuations}

The above algorithm (as well as Lemma~\ref{lem:x-to-q}) requires us to have access the gradients of agents' valuations. We could simply ask each agent for this information explicitly; depending on the application domain, this may or may not be reasonable. An alternative approach is to relate $\nabla v_i(x_i)$ to agent $i$'s willingness to pay. For example, consider the following query to agent $i$: ``Suppose you have already bought the bundle $x_i$. What is the smallest marginal price for good $j$ such that you would not buy more of good $j$?" The KKT conditions for agent $i$'s demand set imply that the answer to this question is exactly $\mfrac{\partial v_i(x_i)}{\partial x_{ij}}$ (assuming that the agent would not buy more if she is indifferent). 

Such a query could be implemented in a variety of ways. One possibility would be gradually increasing the hypothetical marginal price of good $j$ in a continuous fashion, and asking agent $i$ to say ``stop" when she would no longer buy more of good $j$ (in a ``moving-knife"-like fashion). Also, rather asking agents about hypothetical marginal prices, one could build the necessary marginal prices into an actual pricing rule, e.g., even one as simple as $p(x_i) = \sum_{j \in M} c_j x_{ij}$. The choice of implementation would depend heavily on the specific problem setting; our point here is that there are a variety of ways to elicit $\nabla v_i(x_i)$ via queries about what agent $i$ would purchase in different (hypothetical) situations.

\section{Truthfulness}\label{sec:truthful}

An alternative approach to implementation is via \emph{truthful} mechanisms. Walrasian equilibria are generally not truthful: agents can sometimes create more favorable equilibrium prices by lying about their preferences. In this section, we present a truthful mechanism for optimizing CES welfare in the case of a single good (Theorem~\ref{thm:truthful}), and show that it is unique up to additive constants in the payment rule (Theorem~\ref{thm:truthful-unique}). Note that uniqueness beyond additive constants in the payment rule can never be achieved without additional assumptions (e.g., individual rationality), since such constants do not affect the behavior of agents. 

Before formally stating and proving these results, we mention an important distinction between this section and Section~\ref{sec:comp}. Section~\ref{sec:comp} is an implementation of the WE from Theorem~\ref{thm:main} (which we know maximizes CES welfare). In contrast, the truthful mechanism from this section is an implementation of CES welfare maximization directly, not an implementation of the WE from Theorem~\ref{thm:main}. Indeed, we know that the payment rule from Theorem~\ref{thm:main} is not truthful, so we must consider a different payment rule if we desire truthfulness.

To define our truthful mechanism we need the following two lemmas, whose proofs appear in Appendix~\ref{sec:proofs}. The first states that for a single good, homogeneous and differentiable functions take a very simple form. The second states that for a single good, the maximum CES welfare allocations take a very simple form; also, for $\rho \ne 1$, the optimum is unique.

\begin{restatable}{lemma}{lemHomoOneGood}\label{lem:homo-m=1}
Let $f: \bbrpos \to \bbrpos$ be differentiable and homogeneous of degree $r$. Then there exists $c \in \bbrpos$ such that $f(x) = c x^r$.
\end{restatable}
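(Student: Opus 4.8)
The plan is to reduce the functional equation $f(\lambda x) = \lambda^r f(x)$ to the value of $f$ at a single point, using homogeneity to express $f(x)$ in terms of $f(1)$. First I would set $c := f(1) \in \bbrpos$ (this lies in $\bbrpos$ because $f$ maps into $\bbrpos$). For any $x > 0$, apply homogeneity with $\lambda = x$ and the argument $1$: this gives $f(x) = f(x \cdot 1) = x^r f(1) = c x^r$, which is exactly the claimed form on $(0, \infty)$. The only remaining point is $x = 0$: here homogeneity with any $\lambda \ge 0$ and argument $0$ gives $f(0) = f(\lambda \cdot 0) = \lambda^r f(0)$; choosing $\lambda \ne 1$ (and using $r > 0$, which holds since $f$ is monotone — or more directly, since if $r = 0$ then $f$ is constant and equals $f(0) = c$, consistent with $cx^0$) forces $f(0) = 0 = c \cdot 0^r$. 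Alternatively, differentiability of $f$ at $0$ combined with $f(x) = cx^r$ for $x > 0$ already pins down $f(0)$ by continuity.

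The role of the differentiability hypothesis deserves a remark. For the pointwise argument above, differentiability is not strictly needed — homogeneity alone yields $f(x) = cx^r$ for all $x \ge 0$. However, differentiability interacts with the constraint $r \le 1$ noted in the model section (concavity forces $r \le 1$) and ensures $f$ is well-behaved where it is used downstream (e.g., in the demand-set KKT conditions); I would mention that the lemma's conclusion is consistent with differentiability precisely when $r \ge 1$ or $c = 0$ at the origin, but since the paper only invokes this for valuations that are already assumed differentiable and homogeneous of a common degree $r \in [0,1]$, no tension arises.

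I do not anticipate a genuine obstacle here: the entire content is the one-line substitution $f(x) = f(x \cdot 1) = x^r f(1)$. The only thing to be careful about is the edge case $x = 0$ and being explicit that $c = f(1) \ge 0$ so that the statement ``there exists $c \in \bbrpos$'' is literally satisfied. If one wanted to be maximally careful about differentiability at $0$, one could note that $\frac{d}{dx}(cx^r)$ behaves like $cr x^{r-1}$, which blows up as $x \to 0^+$ when $r < 1$ unless $c = 0$; but again, this is a consistency check on the hypotheses rather than something the proof must resolve, since the paper's valuations are assumed differentiable by fiat and the lemma is simply extracting their functional form.
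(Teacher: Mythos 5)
Your argument is correct, and it is genuinely different from — and simpler than — the one the paper gives. The paper first invokes Euler's Theorem for homogeneous functions (Theorem~\ref{thm:euler}) to obtain the ODE $x\,f'(x) = r f(x)$, and then solves it by separation of variables: dividing by $y=f(x)$, integrating, taking $\ln$, and exponentiating to arrive at $f(x) = cx^r$. You instead bypass calculus entirely and plug the point $1$ into the homogeneity identity $f(\lambda \cdot 1) = \lambda^r f(1)$, reading off $f(x) = f(1)\,x^r$ immediately. Your route has two concrete advantages: it does not actually use the differentiability hypothesis at all (homogeneity alone suffices), and it sidesteps the edge cases that the ODE manipulation glosses over — namely dividing by $y = f(x)$, which could vanish, and taking $\ln y$ and $\ln x$, which presuppose $f(x) > 0$ and $x > 0$. (The paper's proof implicitly restricts to $x > 0$ and $f \not\equiv 0$; your one-line substitution handles $c = 0$ and $x = 0$ with no fuss.) The trade-off is that the paper's derivation doubles as a warm-up for the Euler-theorem machinery that drives the rest of Section~\ref{sec:main}, whereas your argument is a self-contained observation about the functional equation. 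Your side remarks — that differentiability at the origin is actually in tension with $0<r<1$ unless $c=0$, and that this is a consistency issue in the hypotheses rather than in the lemma — are accurate and worth keeping as a footnote, but they do not affect the correctness of the proof.
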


\begin{restatable}{lemma}{lemOptOneGood}\label{lem:m=1-x}
Let $m=1$ and $v_i(x_i) = w_i x_i^r$ for all $i \in N$ where $r \in (0,1]$. Then $\rho \in (0,1]$ and $r\rho \ne 1$, $\x \in \Psi(\rho)$ if and only if
\[
x_i = \frac{{w_i}^{\frac{\rho}{1-r\rho}}}{\sum_{k \in N} {w_k}^{\frac{\rho}{1-r\rho}}}
\]
If $\x \in \Psi(\rho)$ and $r = \rho = 1$, then whenever $x_i > 0$, $w_i = \max_{k \in N} w_k$.
\end{restatable}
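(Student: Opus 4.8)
The plan is to solve Program~\ref{pro:ces} directly for the single-good case. With $m=1$, write $x_i \in [0,1]$ for the amount allocated to agent $i$; the feasibility constraint is just $\sum_{i \in N} x_i \le 1$, and by monotonicity the optimum will satisfy $\sum_i x_i = 1$ whenever $\rho > 0$ (since adding more of the good strictly helps). Substituting $v_i(x_i) = w_i x_i^r$ into the objective, we are maximizing $\frac{1}{\rho}\sum_i w_i^\rho x_i^{r\rho}$ subject to $\sum_i x_i = 1$ and $x_i \ge 0$. This is a concave program (the exponent satisfies $r\rho \le 1$), so the KKT conditions are necessary and sufficient. Introducing a multiplier $q$ for the supply constraint, stationarity reads $r w_i^\rho x_i^{r\rho - 1} \le q$ with equality when $x_i > 0$.

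First I would handle the main case $r\rho \ne 1$. Assuming $w_i > 0$ for all $i$ (which holds by the ``nonzero'' and ``monotone'' assumptions together with $m=1$, since a single-good valuation that is nonzero and monotone must be strictly positive on $(0,1]$), every agent gets $x_i > 0$: indeed if $x_i = 0$ then $x_i^{r\rho-1}$ blows up when $r\rho < 1$ (contradicting stationarity, so that subcase forces $x_i>0$), and when $r\rho > 1$ one checks directly from the objective that putting zero weight on any agent is suboptimal. So stationarity holds with equality everywhere: $x_i = (q / (r w_i^\rho))^{1/(r\rho - 1)}$, i.e. $x_i \propto w_i^{\rho/(1 - r\rho)}$ (the sign bookkeeping in the exponent works out the same whether $r\rho < 1$ or $r\rho > 1$, since we invert $r\rho - 1$). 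Normalizing by $\sum_i x_i = 1$ gives exactly the claimed closed form. Conversely, that allocation satisfies the KKT conditions with the appropriate $q$, so it is optimal; and since $\rho \ne 1$ implies $\Phi$ is strictly concave in the valuation vector (noted in Section~\ref{sec:model}), the optimal valuation vector is unique, hence — because $v_i$ is strictly increasing in $x_i$ — the optimal allocation is unique. If instead $\rho = 1$ but $r < 1$, then $r\rho = r \ne 1$ and the same computation applies verbatim.

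For the remaining case $r = \rho = 1$, the objective is the linear function $\sum_i w_i x_i$ on the simplex, whose maximizers are exactly the distributions supported on $\argmax_k w_k$; this gives the second sentence of the lemma. I would present the $r\rho\ne 1$ argument as the main body and dispatch $r=\rho=1$ in a sentence. The only mild obstacle is the exponent-sign bookkeeping when $r\rho > 1$ versus $r\rho < 1$ (and confirming interior optimality in the $r\rho>1$ subcase, where stationarity alone doesn't immediately rule out boundary points); both are routine once one writes $x_i \propto w_i^{\rho/(1-r\rho)}$ and checks it against the full KKT system, including verifying that this $q$ makes the complementary-slackness and stationarity inequalities consistent.
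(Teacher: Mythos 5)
Your proposal is correct and takes essentially the same route as the paper: write down the KKT conditions for Program~\ref{pro:ces} with $m=1$, observe that $w_i > 0$ forces $q > 0$ and hence $\sum_i x_i = 1$, use the blow-up of $x_i^{r\rho-1}$ at $x_i = 0$ to conclude every agent has positive allocation, and solve the stationarity equality and normalize. The $r=\rho=1$ case is dispatched identically.

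One small point worth correcting: the subcase $r\rho > 1$ that you spend a couple of sentences on cannot occur under the hypotheses, since $r,\rho\in(0,1]$ forces $r\rho\le 1$, and therefore $r\rho\ne 1$ is equivalent to $r\rho<1$. This is not merely a matter of tidiness. Your claims about $r\rho>1$ would actually be false if that regime were possible: the objective $\frac{1}{\rho}\sum_i w_i^\rho x_i^{r\rho}$ is then convex rather than concave, so KKT is no longer sufficient, the optimum sits at a vertex of the simplex (not an interior point), and ``putting zero weight on any agent is suboptimal'' fails. The paper simply notes $0 < r\rho < 1$ and proceeds; you should do the same and drop the extraneous bookkeeping.
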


We now define our mechanism. For $\rho = 1$, the VCG mechanism truthfully maximizes utilitarian welfare~\cite{Nisan2007}, so assume $\rho \in (0,1)$. We ask each agent $i$ to report $w_i$ (where $v_i(x_i) = w_i \cdot x_i^r$), assume the $w_i$'s are truthful, and output the (unique) optimal allocation $\x \in \Psi(\rho)$ according to Lemma~\ref{lem:m=1-x}. Let $\B = b_1,\dots,b_n$ be the vector of reported $w_i$'s. We then charge each agent $i$ the following payment:\footnote{Although this integral does not have a simple closed form, it can be expressed via the hypergeometric function.}
\begin{align}
p_i(\B) =  \frac{r\rho}{1-r\rho} \Big(\sum_{k \ne i} b_k^\frac{\rho}{1-r\rho}\Big) \int_{b = 0}^{b_i}  \frac{b^\frac{r\rho}{1-r\rho}}{\big(b^\frac{\rho}{1-r\rho} + \sum_{k \ne i} b_k^\frac{\rho}{1-r\rho}\big)^{r+1}}\dif b \label{eq:payment}
\end{align}
This payment is chosen so that the derivative of agent $i$'s utility at $b_i = w_i$ is 0. In particular, let $x_i(\B)$ denote agent $i$'s bundle under reports $\B$. Then we will have $\mfrac{\partial v_i(x_i(\B))}{\partial b_i} = rw_i\mfrac{\partial x_i(\B)}{\partial b_i} x_i(\B)^{r-1}$, and $\mfrac{\partial p_i(\B)}{\partial b_i} = rb_i\mfrac{\partial x_i(\B)}{\partial b_i} x_i(\B)^{r-1}$,
so the derivative of agent $i$'s overall utility will be $\mfrac{\partial u_i(\B)}{\partial b_i} = r (w_i - b_i) \mfrac{\partial x_i(\B)}{\partial b_i} x_i(\B)^{r-1}$. This will imply that it is optimal for agent $i$ to truthfully report $b_i = w_i$.

\begin{restatable}{theorem}{thmTruthful}\label{thm:truthful}
Assume $m=1$, and that each $v_i$ is homogenous of degree $r$ (with $r$ publicly known), concave, and differentiable. Then for all $\rho \in (0,1)$, there is a truthful mechanism which outputs an allocation $\x \in \Psi(\rho)$.
\end{restatable}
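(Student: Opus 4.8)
The theorem bundles two claims: that the mechanism outputs an allocation in $\Psi(\rho)$ when reports are truthful, and that truthful reporting is a (weakly) dominant strategy for every agent. The first is essentially immediate from the two lemmas. Since $\rho\in(0,1)$ and, by concavity of $v_i$, $r\le 1$ (and $r>0$, since differentiability plus homogeneity force $v_i(x_i)=w_i x_i^r$ by Lemma~\ref{lem:homo-m=1}, and then the nonzero/normalized assumptions force $w_i>0$, $r>0$), we have $r\rho<1$. Hence Lemma~\ref{lem:m=1-x} applies: the allocation with $x_i=w_i^{\rho/(1-r\rho)}/\sum_k w_k^{\rho/(1-r\rho)}$ is the unique element of $\Psi(\rho)$, it is a valid allocation (the $x_i$ sum to $1$), and it is exactly what the mechanism outputs when each agent reports $b_i=w_i$.

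For truthfulness, fix an agent $i$ and an arbitrary report profile $\B_{-i}$ of the others. Write $\gamma=\rho/(1-r\rho)>0$ and $S=\sum_{k\ne i} b_k^{\gamma}>0$; both are constants from $i$'s viewpoint. By Lemma~\ref{lem:m=1-x}, under report $b_i$ agent $i$ receives $x_i(\B)=b_i^{\gamma}/(b_i^{\gamma}+S)$, so $u_i(\B)=w_i\,x_i(\B)^r-p_i(\B)$. Differentiating, $\partial x_i(\B)/\partial b_i=\gamma\,b_i^{\gamma-1}S/(b_i^{\gamma}+S)^2>0$ for $b_i>0$, so by the chain rule $\partial\big(w_i x_i(\B)^r\big)/\partial b_i=r\,w_i\,x_i(\B)^{r-1}\,\partial x_i(\B)/\partial b_i$. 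Rewriting Equation~\ref{eq:payment} in the same notation gives $p_i(\B)=r\gamma S\int_0^{b_i} b^{r\gamma}/(b^{\gamma}+S)^{r+1}\,\dif b$, so by the Fundamental Theorem of Calculus $\partial p_i(\B)/\partial b_i=r\gamma S\,b_i^{r\gamma}/(b_i^{\gamma}+S)^{r+1}$; substituting the expressions for $x_i(\B)$ and $\partial x_i(\B)/\partial b_i$ and collapsing the powers of $b_i$ shows this equals $r\,b_i\,x_i(\B)^{r-1}\,\partial x_i(\B)/\partial b_i$. Subtracting, $\partial u_i(\B)/\partial b_i=r\,(w_i-b_i)\,x_i(\B)^{r-1}\,\partial x_i(\B)/\partial b_i$, whose sign is exactly the sign of $w_i-b_i$.

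It remains to promote this single-crossing first-order condition to a statement about the global maximizer over $b_i\ge 0$. The integrand $b^{r\gamma}/(b^{\gamma}+S)^{r+1}$ is $\Theta(b^{r\gamma})$ near $0$ with $r\gamma>0$, so the integral converges and $p_i(\B)$, hence $u_i(\B)$, is continuous on $[0,\infty)$ (with $u_i(\B)\to 0$ as $b_i\to 0$) and $C^1$ on $(0,\infty)$. Since $\partial u_i(\B)/\partial b_i>0$ on $(0,w_i)$ and $<0$ on $(w_i,\infty)$, $u_i$ is strictly increasing up to $b_i=w_i$ and strictly decreasing after it, so $b_i=w_i$ is the unique maximizer regardless of $\B_{-i}$. (The degenerate case $w_i=0$, excluded by the nonzero assumption anyway, is also fine: any $b_i>0$ yields nonpositive utility while $b_i=0$ yields $0$.) Hence truthful reporting is a weakly dominant strategy, and combined with the first paragraph the mechanism outputs an allocation in $\Psi(\rho)$; this proves the theorem.

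\textbf{Main obstacle.} There is no deep difficulty, but two steps demand care rather than ingenuity. The first is the payment-derivative identity $\partial p_i/\partial b_i=r\,b_i\,x_i^{r-1}\,\partial x_i/\partial b_i$: this is precisely what Equation~\ref{eq:payment} was engineered to achieve, and it requires checking that the exponents of $b_i$ cancel exactly ($\gamma+\gamma(r-1)=r\gamma$), so a stray exponent or sign would break everything. The second is the jump from a local first-order condition to a global optimum: because $v_i$ is only assumed concave (not strictly) and because we optimize over the \emph{report} $b_i$ rather than over the allocation, one cannot appeal to concavity of $u_i$ in $b_i$; the argument must instead go through continuity of $u_i$ on the closed half-line (which needs the near-zero integrability of the payment integrand) together with the sign pattern of $\partial u_i/\partial b_i$.
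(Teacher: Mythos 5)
Your proof follows essentially the same route as the paper's: invoke Lemmas~\ref{lem:homo-m=1} and \ref{lem:m=1-x} to pin down the allocation rule, differentiate the carefully-engineered payment (Equation~\ref{eq:payment}) to obtain $\partial p_i/\partial b_i = r\,b_i\,x_i(\B)^{r-1}\,\partial x_i(\B)/\partial b_i$, and read off the sign of $\partial u_i/\partial b_i = r(w_i - b_i)\,x_i(\B)^{r-1}\,\partial x_i(\B)/\partial b_i$ to conclude $b_i = w_i$ is optimal. Your treatment of the local-to-global step (continuity of $u_i$ on $[0,\infty)$ via integrability of the payment integrand near zero, plus the monotone-increasing-then-decreasing sign pattern) is slightly more careful than the paper's terse three-case listing, and you correctly note that concavity of $u_i$ in $b_i$ cannot be invoked — but these are refinements of the same argument, not a different approach.
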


\begin{proof}
Since VCG satisfies the claim for $\rho = 1$, assume $\rho \in (0,1)$. Let $\x(\B)$ denote the allocation outputted given reports $\B$, and let $x_i(\B)$ denote agent $i$'s bundle: formally, $x_i(\B) = \mfrac{{b_i}^{\frac{\rho}{1-r\rho}}}{\sum_{k \in N} {b_k}^{\frac{\rho}{1-r\rho}}}$. Since $m=1$, Lemma~\ref{lem:homo-m=1} implies that for all $i \in N$, there exists $w_i \in \bbrpos$ such that $v_i(x) = w_i\cdot x^r$ for all $x \in \bbrpos$. Then by Lemma~\ref{lem:m=1-x}, $x_i(\B) \in \Psi(\rho)$, so it remains to prove truthfulness.

Since we assume that each agent's valuation is not identically zero, we have $w_i > 0$. Also, by concavity and monotonicity of $v_i$, we have $r \in (0,1]$. Thus $0 < r\rho < 1$. Since we also have $b_i > 0$, all denominators in $p_i(\B)$ are nonzero and thus $p_i(\B)$ is well-defined.

 Let $v_i(\B) = v_i(x_i(\B)) = w_i x_i(\B)^r$ for brevity, and let $u_i(\B) = v_i(\B) - p_i(\B)$ denote agent $i$'s resulting utility under bids $\B$. Note that $x_i(\B)$, $v_i(\B)$, $p_i(\B)$, and $u_i(\B)$ are all differentiable with respect to $b_i$. Also let $\alpha = \mfrac{\rho}{1-r\rho}$ for brevity; then $p_i(\B) = r\alpha (\sum_{k \ne i} b_k^\alpha) \int_{b = 0}^{b_i} \mfrac{b^{r\alpha}}{(b^\alpha + \sum_{k \ne i} b_k^\alpha)^{r+1}}\dif b$ and $x_i(\B) = \mfrac{b_i^\alpha}{\sum_{k \in N} b_k^\alpha}$.

To prove truthfulness, we need to show that $w_i \in \argmax_{b_i \in \bbr_{> 0}} u_i(\B)$, i.e., truthfully reporting $w_i$ is an optimal strategy for agent $i$.\footnote{Note that $u_i(\B)$ is not concave in $b_i$, since $p_i(\B)$ is not convex in $b_i$. Thus the KKT conditions do not apply, so we will have to use a different approach.}  Since $u_i(\B)$ is differentiable with respect to $b_i$, we have $\mfrac{\partial u_i(\B)}{\partial b_i} = \mfrac{\partial v_i(\B)}{\partial b_i} - \mfrac{\partial p_i(\B)}{\partial b_i}$. The first term on the right hand side is
\[
\frac{\partial v_i(\B)}{\partial b_i} = rw_i\frac{\partial x_i(\B)}{\partial b_i} x_i(\B)^{r-1}
\]
The second term is
\begin{align*}
\frac{\partial p_i(\B)}{\partial b_i} =&\ r \alpha \Big(\sum_{k \ne i} b_k^\alpha\Big) \frac{b_i^{r\alpha}}{(b_i^\alpha + \sum_{k \ne i} b_k^\alpha)^{r+1}}\\
=&\ r \alpha \Big(\sum_{k \ne i} b_k^\alpha\Big) \frac{b_i^{r\alpha}}{(\sum_{k \in N} b_k^\alpha)^{r+1}}\\
=&\ r \alpha \Big(\sum_{k \ne i} b_k^\alpha\Big) \frac{b_i^\alpha}{(\sum_{k \in N} b_k^\alpha)^2} \Big(\frac{b_i^\alpha}{\sum_{k \in N} b_k^\alpha}\Big)^{r-1}\\
=&\ r \alpha \Big(\sum_{k \ne i} b_k^\alpha\Big) \frac{b_i^\alpha}{(\sum_{k \in N} b_k^\alpha)^2} x_i(\B)^{r-1}
\end{align*}
Conveniently, we have $\mfrac{\partial}{\partial b_i} \big(\mfrac{b_i^\alpha}{\sum_{k \in N} b_k^\alpha}\big) =  \alpha \big(\sum_{k \ne i} b_k^\alpha\big) \mfrac{b_i^{\alpha-1}}{(\sum_{k \in N} b_k^\alpha)^2}$. Thus
\begin{align*}
\frac{\partial p_i(\B)}{\partial b_i} =&\ r b_i \frac{\partial}{\partial b_i} \Big(\frac{b_i^\alpha}{\sum_{k \in N} b_k^\alpha}\Big) x_i(\B)^{r-1}\\
=&\ r b_i \frac{\partial x_i(\B)}{\partial b_i} x_i(\B)^{r-1}
\end{align*}
Therefore
\[
\frac{\partial u_i(\B)}{\partial b_i} = r (w_i - b_i) \frac{\partial x_i(\B)}{\partial b_i} x_i(\B)^{r-1}
\]
Since $\frac{\partial x_i(\B)}{\partial b_i} > 0$ and $x_i(\B)^{r-1} > 0$ for all $b_i$, this implies
\begin{enumerate}
\item For all $b_i < w_i$, $\mfrac{\partial u_i(\B)}{\partial b_i} > 0$.
\item For all $b_i > w_i$, $\mfrac{\partial u_i(\B)}{\partial b_i} > 0$.
\item For $b_i = w_i$, $\mfrac{\partial u_i(\B)}{\partial b_i} = 0$.
\end{enumerate}

Therefore $w_i \in \argmax_{b_i \in \bbr_{> 0}}$ (in fact, $w_i$ is the unique maximizer). We conclude that the mechanism is truthful.
\end{proof}

From a technical standpoint, the harder task is proving that this mechanism is unique (up to additive constants in the payment rule). We assume without loss of generality that the mechanism asks each agent $i$ to report $w_i$, and let $\B=b_1,\dots,b_n$ be the vector of reported $w_i$'s. We use the standard notation of $(\B_{-i}, b_i')$ to denote the vector where the $i$th entry is $b_i'$, and the $k$th entry for each $k\ne i$ is $b_k$.


The proof takes a real analysis approach, with Kirszbraun's Theorem for Lipschitz extensions~\cite{kirszbraun1934} playing a central role. On a high level, the proof proceeds as follows: (1) we establish some basic properties of the payment rule, (2) we show that the payment rule must be Lipschitz continuous not including $b_i = 0$, (3) there exists a Lipschitz extension $\phat$ including $b_i = 0$ (Kirszbraun's Theorem), (4) since $\phat$ is Lipschitz, it is differentiable almost everywhere and is equal to the integral of its derivative, (5) since it has the same derivative (when defined) as the payment rule from Theorem~\ref{thm:truthful}, the payment rules are equal (up to the constant of integration).

\begin{theorem}\label{thm:truthful-unique}
Assume $m=1$, and that each $v_i$ is homogenous of degree $r$ (with $r$ publicly known), concave, and differentiable. Fix $\rho \in (0,1)$, and let $\Gamma$ be a truthful mechanism which outputs an allocation $\x \in \Psi(\rho)$. Then the allocation rule is the same as in Theorem~\ref{thm:truthful}, and the payment rule $p_i(\B)$ is the same up to an additive constant.
\end{theorem}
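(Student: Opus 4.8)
The plan is to run the classical single-parameter (Myerson-style) argument, with one twist: agent $i$'s value $w_i x_i^r$ is not linear in the allocation $x_i$, but it \emph{is} linear in the transformed quantity $g_i := x_i^r$, so I would treat $g_i$ as the effective "allocation". First, by Lemma~\ref{lem:homo-m=1} agent $i$'s type is captured by the single number $w_i \in \bbrspos$ (since $r$ is public), so we may assume WLOG — as already noted before the theorem — that $\Gamma$ asks each agent for $w_i$; write $\x(\B)$ and $p_i(\B)$ for its outputs. Since $\rho \in (0,1)$ and $r \le 1$ (by concavity and monotonicity), we have $r\rho < 1 \neq 1$, so Lemma~\ref{lem:m=1-x} gives a \emph{unique} CES-optimal allocation for the reported valuations $b_i x^r$; hence the allocation rule is forced to be $x_i(\B) = b_i^{\alpha}/\sum_k b_k^{\alpha}$ with $\alpha = \rho/(1-r\rho)$, exactly as in Theorem~\ref{thm:truthful}. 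This already settles the allocation claim, so the rest concerns $p_i$.

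\textbf{Swap inequalities and monotonicity.} Fix $i$ and $\B_{-i}$, and abbreviate $p(t) = p_i(\B_{-i},t)$ and $g(t) = x_i(\B_{-i},t)^r$ for $t>0$; here $x_i(\B_{-i},t) = t^\alpha/(t^\alpha + S)$ with $S = \sum_{k\ne i} b_k^\alpha > 0$. Truthfulness says $w g(w) - p(w) \ge w g(b) - p(b)$ for all $b,w>0$. Writing this for the pair $(w,b)=(t,t')$ and for $(w,b)=(t',t)$ with $t<t'$ and rearranging yields the sandwich $t\,(g(t')-g(t)) \le p(t')-p(t) \le t'\,(g(t')-g(t))$. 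Since $t\mapsto x_i(\B_{-i},t)$ is strictly increasing, $g$ is nondecreasing, so $p$ is nondecreasing.

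\textbf{Regularity, the derivative, and integration.} Because $x_i(\B_{-i},t)$ is $C^1$ and strictly positive on $(0,\infty)$, $g$ is $C^1$ there, hence Lipschitz on every compact $[\ep,L]\subset(0,\infty)$; plugging this into the sandwich gives $|p(t')-p(t)| \le L\cdot(\mathrm{Lip\ const\ of\ }g\mathrm{\ on\ }[\ep,L])\cdot|t'-t|$, so $p$ is locally Lipschitz, hence absolutely continuous on compact subintervals of $(0,\infty)$. Next, at any $t^*>0$ where $p$ is differentiable, truthfulness with $w=t^*$ makes $t^*$ an interior maximiser of the differentiable map $b\mapsto t^* g(b) - p(b)$, so the first-order condition forces $p'(t^*) = t^* g'(t^*)$. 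The computation in the proof of Theorem~\ref{thm:truthful} shows the payment rule $\tilde p$ there satisfies $\tilde p'(t) = t g'(t) =: h(t)$ for all $t$, with $h$ continuous on $(0,\infty)$ and $h(t)\to 0$ as $t\to 0^+$ (since $g(t)=\Theta(t^{\alpha r})$ near $0$ and $\alpha r>0$), so $h$ is integrable on $[0,L]$ and $\tilde p(t)=\int_0^t h(s)\dif s$. Since $p$ is absolutely continuous on $[\ep,t]$ with $p'=h$ a.e., the Fundamental Theorem of Lebesgue Calculus gives $p(t)-p(\ep)=\int_\ep^t h(s)\dif s = \tilde p(t)-\tilde p(\ep)$; letting $\ep\to 0^+$ shows $\lim_{\ep\to0^+}p(\ep)$ exists — call it $C(\B_{-i})$ — and $p(t)=\tilde p(t)+C(\B_{-i})$ for all $t>0$. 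As $C$ depends only on $\B_{-i}$ and an additive constant in $p_i$ affects no agent's incentives, this is exactly equality up to an additive constant, completing the proof.

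\textbf{Main obstacle.} The hard part is purely analytic: truthfulness directly yields only monotonicity of $p$ plus the swap inequalities, which pin down $p'$ a.e. but are \emph{not} enough to recover $p$ from its derivative (a devil's-staircase summand is compatible with both). One must therefore upgrade $p$ to absolute (or local Lipschitz) continuity before invoking the Fundamental Theorem of Lebesgue Calculus, and then handle the boundary $t\to 0^+$ carefully, since $g'$ can blow up there even though $h=tg'(t)$ stays bounded — extending $p$ continuously to $t=0$ via Kirszbraun's Lipschitz-extension theorem is a clean way to package this last step.
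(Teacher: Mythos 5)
Your proposal is correct and follows essentially the same strategy as the paper's proof: uniqueness of the CES optimum forces the allocation rule; truthfulness gives monotonicity and the swap (``sandwich'') inequalities for $\tilp$; these upgrade $\tilp$ to Lipschitz continuity; Lipschitz implies absolute continuity, so the Fundamental Theorem of Lebesgue Calculus applies; and the first-order condition pins down $\tilp'$ a.e.\ to match $p_i'$, whence $\tilp = p_i + \text{const}$. The one place you genuinely diverge is the treatment of $b \to 0^+$. The paper claims $x_i(b)^r$ is continuously differentiable (hence Lipschitz) on the closed interval $[0,b_i]$, obtains a single Lipschitz constant for $\tilp$ on $(0,b_i]$, and invokes Kirszbraun to extend $\tilp$ to $b=0$; you establish only local Lipschitz regularity on compacts $[\ep,L]\subset(0,\infty)$, integrate on $[\ep,t]$, and let $\ep \to 0^+$. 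Your variant is in fact the more robust one: writing $\alpha = \rho/(1-r\rho)$ and $g(b)=x_i(b)^r$, one computes $g'(b) = r\alpha S\, b^{\alpha r - 1}/(b^\alpha+S)^{r+1}$, which blows up as $b\to 0^+$ whenever $\alpha r = r\rho/(1-r\rho) < 1$, i.e.\ whenever $r\rho < 1/2$. In that regime $g$ is only H\"older, not Lipschitz, on $[0,b_i]$, so the paper's global Lipschitz constant $\kappa$ does not literally exist and its Part 2 needs patching. Your route sidesteps this entirely, because you only need Lipschitz bounds on compacts bounded away from $0$, and the integrand $h(b)=b g'(b)\sim b^{\alpha r}$ near $0$ is always integrable since $\alpha r > 0$. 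Your concluding aside about Kirszbraun is unnecessary in your version --- the existence of $\lim_{\ep\to 0^+}p(\ep)$ already supplies the continuous extension --- but it does no harm.
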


\begin{proof}
\textbf{Part 1: Setup and basic properties.}
Since there is a unique optimal allocation (Lemma~\ref{lem:m=1-x}), $\Gamma$ must take $\B = (b_1,\dots, b_n)$ as honest and output the same allocation $\x(\B)$. 

It remains to consider the payment rule. Let $p_i(\B)$ denote the payment rule from Theorem~\ref{thm:main}, and let $\tilp(\B)$ denote the payment rule for $\Gamma$. Given reports $\B$, define $v_i(\B)$ as before, and let $u_i(\B) = v_i(\B) - \tilp(\B)$ be agent $i$'s resulting utility under $\Gamma$. From the point of view of a given agent $i$, the other agents' reports $\B_{-i}$ can be treated as a constant. Thus for brevity, write $x_i(b) = x_i(\B_{-i}, b)$, $p_i(b) = p_i(\B_{-i}, b)$, and $\tilp(b) = \tilp(\B_{-i}, b)$ for each $i \in N$.

Fix an $i \in N$. Since $\Gamma$ is truthful, we must have $w_i \in \argmax_{b_i \in \bbr_{>0}} u_i(\B)$. Then by definition of $u_i(\B)$, we have $w_i \in \argmax_{b_i \in \bbr_{>0}} \big(w_i x_i(b_i)^r - \tilp(b_i)\big)$. Since $w_i$ could be any element of $\bbrspos$, and $\Gamma$ must be agnostic to $w_i$, we must have $b \in \argmax_{b_i \in \bbr_{>0}} \big(b x_i(b_i)^r - \tilp(b_i)\big)$ for all $b \in \bbrspos$.

We first claim that $\tilp(b)$ is nondecreasing. Suppose the opposite: then there exists exists $b > b'$ such that $\tilp(b) < \tilp(b')$. But this means that if $w_i = b'$, reporting $b_i = w_i$ is never an optimal strategy, because the payment can be decreased by reporting $b_i = b$, and $x_i(b) \ge x_i(b')$ (since $x_i(b)$ is nondecreasing). Thus $\tilp(b)$ is nondecreasing.

\medskip

\textbf{Part 2: $\tilp$ is Lipschitz continuous.} Fix an arbitrary $b_i > 0$. Since $b_k > 0$ for all $k \ne i$, it can be seen from the definition of $x_i(b)$ that $x_i(b)^r$ is continuously differentiable on $[0,b_i]$. Therefore the maximum of $\mfrac{\dif x_i(b)^r}{\dif b}$ is a Lipschitz constant for $x_i(b)^r$, so $x_i(b)^r$ is Lipschitz continuous on $[0,b_i]$. Let $\kappa$ be this Lipschitz constant: then for all $b, b' \in [0, b_i]$, $|x_i(b)^r - x_i(b')^r| \le \kappa |b - b'|$.

We claim that $\tilp$ is Lipschitz continuous on $(0, b_i]$ with constant $b_i \kappa$. Suppose the opposite: then there exist $b, b' \in (0, b_i]$ such that $|\tilp(b) - \tilp(b')| > b_i \kappa |b - b'|$. Assume without loss of generality that $b > b'$. Since $\tilp$ and $x_i$ are both nondecreasing, we then have $\tilp(b) - \tilp(b') > b_i \kappa (b - b')$ and $x_i(b)^r - x_i(b')^r \le \kappa(b-b')$.

Since $b \in \argmax_{b_i \in \bbr_{>0}} \big(b x_i(b_i)^r - \tilp(b_i)\big)$, we have $b x_i(b)^r - \tilp(b) \ge b x_i(b')^r - \tilp(b')$ and thus $b(x_i(b)^r - x_i(b')^r) \ge \tilp(b) - \tilp(b')$. Therefore
\[
b\kappa(b-b') \ge b\big(x_i(b)^r - x_i(b')^r\big) \ge \tilp(b) - \tilp(b') > b_i \kappa (b - b')
\]
Therefore $\mfrac{b\kappa}{b_i \kappa} > 1$, which contradicts $b \le b_i$. Therefore $\tilp$ is Lipschitz continuous on $(0, b_i]$. 

\medskip

\textbf{Part 3: Kirszbraun's Theorem.} Thus by Kirszbraun's Theorem~\cite{kirszbraun1934}, $\tilp$ has a Lipschitz extension to $[0,b_i]$: that is, there exists $\phat: [0,b_i] \to \bbrpos$ such that $\phat$ is Lipschitz continuous on $[0,b_i]$, and $\phat(b) = \tilp(b)$ for $b \in (0, b_i]$.

\medskip

\textbf{Part 4: $\phat$ is the integral of its derivative.}
Lipschitz continuity implies absolute continuity~\cite{Royden1988}, so $\phat$ is absolutely continuous on $[0, b_i]$. Thus by the Fundamental Theorem of Lebesgue Calculus~\cite{Royden1988}, $\phat$ is differentiable almost everywhere on $[0, b_i]$, its derivative $\mfrac{\dif \phat(b)}{\dif b}$ is integrable over $[0, b_i]$, and
\[
\phat(b_i) - \phat(0) = \int_{b = 0}^{b_i} \frac{\dif \phat(b)}{\dif b} \dif b
\]

\textbf{Part 5: The derivatives of $\phat$ and $p_i$ match, so $\phat = p_i + c$.}  Consider a $b > 0$ at which $\phat$ is differentiable. Then $\tilp$ is also differentiable, so $b \in \argmax_{b_i \in \bbr_{>0}} \big(b x_i(b_i)^r - \tilp(b_i)\big)$ implies $\mfrac{\dif \tilp(b)}{\dif b} = b\mfrac{\dif}{\dif b} (x_i(b)^r) = rb\mfrac{\dif x_i(b)}{\dif b} x_i(b)^{r-1}$.\footnote{Note that the $b$ in $bx_i(b_i)^r$ is a constant from the point of view of the argmax, so it is treated as a constant by the derivative. To be technically precise, we have $(\frac{\dif}{\dif b_i} b x_i(b_i)^r)|_{b_i = b} = rb\frac{\dif x_i(b)}{\dif b} x_i(b)^{r-1}$.} We showed in the proof of Theorem~\ref{thm:truthful} that $\mfrac{\partial p_i(\B)}{\partial b_i} = r b_i \frac{\partial x_i(\B)}{\partial b_i} x_i(\B)^{r-1}$; equivalently, $\mfrac{\dif p_i(b)}{\dif b} = r b \frac{\dif x_i(b)}{\dif b} x_i(b)^{r-1}$. Therefore for all $b > 0$ at which $\phat$ is differentiable, we have $\mfrac{\dif \phat(b)}{\dif b} = \mfrac{\dif p_i(b)}{\dif b}$.

Since $\phat$ is differentiable almost everywhere, we have $\mfrac{\dif \phat(b)}{\dif b} = \mfrac{\dif p_i(b)}{\dif b}$ almost everywhere. Thus $ \mfrac{\dif p_i(b)}{\dif b}$ is also integrable over $[0,b_i]$, and $\int_{b = 0}^{b_i} \mfrac{\dif p_i(b)}{\dif b} \dif b = \int_{b = 0}^{b_i} \mfrac{\dif \phat(b)}{\dif b} \dif b$~\cite{Royden1988}. Therefore 
\begin{align*}
\phat(b_i) =&\ \phat(0) + \int_{b = 0}^{b_i} \frac{\dif p_i(b)}{\dif b} \dif b\\
=&\ \phat(0) + p_i(b_i)
\end{align*}
where the second equality is from the definition of $p_i(\B)$.

Therefore for all $b_i > 0$, $\tilp(b_i) = \phat(0) + p_i(b_i)$, and so $\tilp(\B) = \phat(0) + p_i(\B)$ for all $\B$. Since this holds for all $i \in N$, $\tilp(\B)$ is exactly the payment rule from Theorem~\ref{thm:truthful}, up to the additive constant of $\phat(0)$.
\end{proof}

It is worth noting that this truthful payment rule is quite complex; in particular, it may be hard to convince agents that it is actually in their best interest to be truthful. In contrast, the Walrasian pricing rule from Theorem~\ref{thm:main} is much simpler and more intuitive. That pricing rule is not truthful, but perhaps formal truthfulness is not crucial if a practical iterative implementation is possible. We do not claim that our algorithm from Section~\ref{sec:comp} is truly practical, but it could be a step in the right direction.

\section{Sybil attacks}\label{sec:sybil}

In Sections~\ref{sec:comp} and \ref{sec:truthful}, we discussed two alternative approaches to implementation: an iterative query-based algorithm, and a truthful mechanism. However, there is an additional crucial issue which any practical implementation must address: since our pricing rule $p(x_i) = (\sum_{j \in M} q_j x_{ij})^{1/\rho}$ is strictly convex for $\rho < 1$, agents have an incentive to create fake identities. In particular, an agent can decrease her payment while receiving the same bundle by splitting the payment over multiple fake identities.\footnote{Note that for linear prices there is no such incentive.} This is known as a \emph{Sybil attack}. The truthful payment rule from Section~\ref{sec:truthful} is not strictly convex everywhere, but it is strictly convex on some intervals, and thus has the same vulnerability. 

\paragraph{Model of Sybil attacks.} We model this as follows. Let $\kappa$ denote the cost of creating a new identity. The cost could reflect inconvenience, risk of getting caught, or other factors, and would depend on the nature of the system. Let $\eta_i$ be the \emph{multiplicity} of agent $i$, i.e., the number of identities agent $i$ controls in the system. This includes both fake identities and the agent's single real identity, so we assume that $\eta_i \in \bbnspos$. For convex $p$, multiplicity $\eta_i$, and a desired bundle for purchase, it is always optimal for agent $i$ to split the purchase evenly across her identities.\footnote{This is essentially a multidimensional version of Jensen's inequality; see, e.g.,~\cite{Neuman1990}.} Thus we can assume that each identity purchases the same bundle $x_i$, and we define agent $i$'s utility as
\[
u_i(x_i, \eta_i) = v_i(\eta_i x_i) - \eta_i p(x_i) - \eta_i \kappa
\]
We do not claim that this perfectly models the reality of Sybil attacks; for example, the identity creation cost is arguably sublinear (one someone has created a single fake identity, creating more might become easier). Our goal here is simply to show formally that at least in some cases, CES welfare maximization cannot be robust to Sybil attacks in general.

\paragraph{Walrasian equilibrium.} We focus on a Walrasian model of Sybil attacks; the analogous analysis for truthful mechanisms is left as an open question. We define each agent's \emph{Sybil demand set} by
\[
D_i(p) = \argmax_{x_i \in \bbrpos^m, \eta_i \in \bbnspos}\ u_i(x_i, \eta_i)
\]
Note that we require $\eta_i \in \bbnspos$. We define a \emph{Sybil Walrasian equilibrium} (SWE) to be an allocation $\x$, payment rule $p$, and vector of multiplicities $\bfeta = \eta_1,\dots,\eta_n$ such that
\begin{enumerate}
\item Each agent receive a bundle in her demand set: $(x_i, \eta_i) \in D_i(p)$ for all $i \in N$.
\item The market clears: for all $j \in M$, $\sum_{i \in N} x_{ij} \le 1$. Furthermore, for any $j\in M$ with nonzero cost\footnote{Recall that good $j$ has ``nonzero cost" in our pricing rule if $q_j > 0$.}, $\sum_{i \in N} x_{ij} = 1$.
\end{enumerate}

In this section, we will focus on the case of homogeneity degree $r=1$. The following lemma states that for any pricing rule, a rational agent either creates no fake identities (i.e, $\eta_i = 1$), or creates an unbounded number (and consequently the demand set is empty).

\begin{lemma}\label{lem:sybil-demand}
Assume each $v_i$ is concave, differentiable, and homogeneous of degree 1. Let $\rho \in (0,1]$, define $p$ as in Theorem~\ref{thm:main}, and let $\x \in \Psi(\rho)$. Then we have
\[
D_i(p) = \begin{cases}
(x_i, 1) &\ \text{if } v_i(x_i) (1-\rho) \le \kappa\\
\emptyset &\ \text{otherwise}
\end{cases}
\]
where $x_i$ is agent $i$'s bundle in $\x$.
\end{lemma}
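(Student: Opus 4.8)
The plan is to analyze the Sybil demand set by first optimizing over the bundle $x_i$ for a fixed multiplicity $\eta_i$, and then optimizing over $\eta_i \in \bbnspos$. Fix an agent $i$ and a multiplicity $\eta = \eta_i$; we want to understand $\max_{x_i \in \bbrpos^m} \big(v_i(\eta x_i) - \eta p(x_i) - \eta \kappa\big)$. Using homogeneity of degree $1$, $v_i(\eta x_i) = \eta v_i(x_i)$, and using the fact that $p(x_i) = \rho r^{(\rho-1)/\rho}(\sum_j q_j x_{ij})^{1/\rho}$ is homogeneous of degree $1/\rho$, we have $p(x_i) = p\big(\eta \cdot (x_i/\eta)\big) = \eta^{1/\rho} p(x_i/\eta)$. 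Substituting $y_i = \eta x_i$ as the ``aggregate'' bundle purchased across all copies, the objective becomes $\eta v_i(y_i/\eta) - \eta p(y_i/\eta) - \eta\kappa = v_i(y_i) - \eta^{1 - 1/\rho} p(y_i) - \eta \kappa$ (using homogeneity once more: $v_i(y_i/\eta) = v_i(y_i)/\eta$ and $p(y_i/\eta) = \eta^{-1/\rho} p(y_i)$). So for fixed $\eta$, agent $i$ is choosing $y_i$ to maximize $v_i(y_i) - \eta^{(\rho-1)/\rho} p(y_i) - \eta\kappa$, which is exactly the ordinary (non-Sybil) demand problem with the pricing rule scaled down by the constant factor $\eta^{(\rho-1)/\rho} \le 1$.

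The key observation is that scaling the pricing rule $p$ by a constant $c = \eta^{(\rho-1)/\rho}$ just replaces $q_j$ by $c^\rho q_j$ everywhere (since $cp(x_i) = \rho r^{(\rho-1)/\rho}(\sum_j c^\rho q_j x_{ij})^{1/\rho}$), so Theorem~\ref{thm:main} and Corollary~\ref{cor:utility-price} still apply with the rescaled multipliers. In particular, the optimal aggregate bundle under the scaled rule is $c^{?}$-proportional to $x_i$ (the original equilibrium bundle) --- I would make this precise using the explicit form of optimal allocations, but the cleaner route is: for the scaled pricing rule $cp$, by the same KKT/Euler argument as in Theorem~\ref{thm:main}, the demand-optimal aggregate bundle $y_i$ satisfies $v_i(y_i) = $ (some explicit scaling of $v_i(x_i)$), and then I can plug back in to compute the resulting utility value $v_i(y_i) - cp(y_i) - \eta\kappa$ as an explicit function of $\eta$. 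Using Corollary~\ref{cor:utility-price} (which gives $p(y_i) = \rho r v_i(y_i) = \rho v_i(y_i)$ here since $r = 1$ at the demand optimum), the inner maximum value becomes $v_i(y_i) - \rho v_i(y_i) - \eta\kappa = (1-\rho)v_i(y_i) - \eta\kappa$, where $v_i(y_i)$ depends on $\eta$ through the scaling. Working out that dependence (from homogeneity degree $1$ and the scaling of $q_j$ by $\eta^{\rho-1}$), I expect $v_i(y_i) = \eta \cdot v_i(x_i)$ --- i.e., the aggregate value just scales linearly with the number of identities --- giving inner maximum value $\eta\big((1-\rho)v_i(x_i) - \kappa\big)$.

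Given that, the outer optimization over $\eta \in \bbnspos$ is immediate: the objective is linear in $\eta$ with slope $(1-\rho)v_i(x_i) - \kappa$. If this slope is $\le 0$, i.e. $v_i(x_i)(1-\rho) \le \kappa$, the maximum over positive integers is attained at $\eta = 1$, and one checks that the corresponding bundle is exactly $x_i$ (the original equilibrium bundle), matching the equilibrium; if the slope is $> 0$, the objective is unbounded above as $\eta \to \infty$, so no maximizer exists and the demand set is empty. This yields the claimed case split. I would also need to double-check the boundary subtlety that when the slope is exactly $0$ every $\eta$ is optimal, so strictly speaking $D_i(p)$ contains $(x_i/\eta, \eta)$ for all $\eta$; the lemma as stated lists only $(x_i,1)$, so either the intended reading is ``contains'' or there is an implicit tie-breaking toward $\eta = 1$ --- I would note this and proceed, since for the SWE existence argument only the $\eta = 1$ option matters.

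\textbf{Main obstacle.} The main work is pinning down exactly how the demand-optimal aggregate bundle and its value scale with $\eta$ once the pricing multipliers are rescaled by $\eta^{\rho - 1}$; this requires carefully combining homogeneity of degree $1$ for $v_i$, homogeneity of degree $1/\rho$ for $p$, and the characterization from Theorem~\ref{thm:main}/Corollary~\ref{cor:utility-price}, and being careful that the rescaled market still has a valid equilibrium (the market-clearing coordinate is not what matters here --- only the per-agent demand-optimal value --- so this is mostly bookkeeping, but it is the step most prone to an exponent error). Everything after that (the linear-in-$\eta$ outer maximization) is routine.
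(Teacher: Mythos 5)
Your end result is correct, but you take a noticeably more roundabout route than the paper, and the extra detour is where the remaining work you flag (``I expect $v_i(y_i) = \eta\,v_i(x_i)$'') hides. The paper's proof never changes variables to an aggregate bundle and never touches the homogeneity of $p$ at all. It simply observes that for $v_i$ homogeneous of degree $1$,
\[
u_i(x_i,\eta_i) \;=\; v_i(\eta_i x_i) - \eta_i p(x_i) - \eta_i\kappa \;=\; \eta_i\bigl(v_i(x_i) - p(x_i) - \kappa\bigr),
\]
so $\eta_i$ factors out completely; the per-identity bundle that maximizes this, for \emph{every} $\eta_i$, is just whichever $x_i$ maximizes $v_i(x_i)-p(x_i)$, which Theorem~\ref{thm:main} identifies with the equilibrium bundle. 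Corollary~\ref{cor:utility-price} then gives $p(x_i) = \rho v_i(x_i)$, so the objective is $\eta_i\bigl(v_i(x_i)(1-\rho) - \kappa\bigr)$, and the outer $\eta_i$-optimization is exactly your linear-in-$\eta$ step. This avoids all the exponent bookkeeping: you never need to rescale $q_j$ by $\eta^{\rho-1}$, never need $p$'s degree-$1/\rho$ homogeneity, and never need to pin down how the scaled demand optimum relates to $x_i$. Your change of variables $y_i=\eta x_i$ works (and your exponent arithmetic, once carried out, does yield $y_i=\eta x_i$, hence $v_i(y_i)=\eta v_i(x_i)$), but it re-derives in a harder way a fact the factorization gives you for free.

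Two small corrections. First, you apply Corollary~\ref{cor:utility-price} to the \emph{scaled} pricing rule $cp$, so the relation you should be using is $c\,p(y_i) = \rho v_i(y_i)$, not $p(y_i) = \rho v_i(y_i)$; your algebra is consistent with the former, but the sentence stating it is written as the latter. Second, in your edge-case remark (slope exactly zero), the demand set contains $(x_i,\eta)$ for all $\eta\in\bbnspos$, not $(x_i/\eta,\eta)$: the demand-set coordinate is the \emph{per-identity} bundle, and in your change of variables the optimal aggregate is $y_i=\eta x_i$, so the per-identity bundle is $y_i/\eta = x_i$, which does not shrink with $\eta$. (You are right that the lemma as stated is silently ignoring this tie case, and that only $\eta=1$ is needed downstream for Theorem~\ref{thm:sybil-r=1-good}.)
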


\begin{proof}
When $v_i$ is homogeneous of degree 1, for any bundle $y_i$, we have $u_i(y_i, \eta_i) = \eta_i v_i(y_i) - \eta_i p(y_i) - \eta_i \kappa = \eta_i\big(v_i(y_i) - p(y_i) - \kappa\big)$. Thus given a choice of $\eta_i$, $y_i$ must be chosen to maximize $v_i(y_i) - p(y_i) - \kappa$. Let $\x \in \Psi(\rho)$: then by Theorem~\ref{thm:main} $y_i$ optimizes $v_i(y_i) - p(y_i)$ (and thus $v_i(y_i) - p(y_i) - \kappa$) if and only if $y_i = x_i$. Therefore the demand set is equal to
\[
D_i(p) = \Big(x_i,\  \argmax_{\eta_i \in \bbnspos}\ \eta_i \big(v_i(x_i) - p(x_i) - \kappa\big) \Big)
\]
That is, the demanded bundle must always be $x_i$, and $\eta_i$ is optimized accordingly. 

By Corollary~\ref{cor:utility-price}, $p(x_i) = \rho v_i(x_i)$, so $v_i(x_i) - p(x_i) - \kappa = v_i(x_i)(1-\rho) - \kappa$. Thus if $v_i(x_i) (1-\rho) \le \kappa$, then $1$ is an optimal choice for $\eta_i$, so $(x_i, 1) \in D_i(p)$. However, if $v_i(x_i) (1-\rho) > \kappa$, there is no optimal choice for $\eta_i$: specifically, $\eta_i$ goes to infinity. Thus if $v_i(x_i) (1-\rho) > \kappa$, $D_i(p) = \emptyset$.
\end{proof}

This immediately implies that if $\x \in \Psi(\rho)$ satisfies $v_i(x_i) (1-\rho) \le \kappa$ for all $i \in N$, the convex pricing rule from Theorem~\ref{thm:main} is naturally robust to Sybil attacks.

\begin{theorem}\label{thm:sybil-r=1-good}
Assume each $v_i$ is concave, differentiable, and homogeneous of degree 1. Let $\x \in \Psi(\rho)$ for $\rho \in (0,1]$, and define $p$ as in Theorem~\ref{thm:main}. Then if $v_i(x_i) (1-\rho) \le \kappa$ for all $i \in N$, $(\x, p, \one)$ is a SWE.
\end{theorem}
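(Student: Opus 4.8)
The plan is to verify directly that the triple $(\x, p, \one)$ satisfies the two conditions in the definition of a Sybil Walrasian equilibrium, leaning almost entirely on Lemma~\ref{lem:sybil-demand} and the facts already established about the pricing rule $p$ from Theorem~\ref{thm:main}.

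First I would check the demand-set condition. Fix an agent $i \in N$. Since $\x \in \Psi(\rho)$ and $p$ is the pricing rule from Theorem~\ref{thm:main}, Lemma~\ref{lem:sybil-demand} applies and tells us that $D_i(p) = \{(x_i, 1)\}$ precisely when $v_i(x_i)(1-\rho) \le \kappa$. By the hypothesis of the theorem, this inequality holds for every $i \in N$, so $(x_i, \eta_i) = (x_i, 1) \in D_i(p)$ for all $i$, where $\eta_i = 1$ is the $i$th entry of $\one$. This is exactly the first SWE condition.

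Next I would check market clearing. Since $\x \in \Psi(\rho)$, it is in particular a valid allocation for Program~\ref{pro:ces}, so $\sum_{i \in N} x_{ij} \le 1$ for all $j \in M$. For the "nonzero cost" part, recall the observation noted just before the proof of Theorem~\ref{thm:main}: for a pricing rule of the form $p(x_i) = c(\sum_j q_j x_{ij})^{1/\rho}$ with $c > 0$, good $j$ has nonzero cost if and only if $q_j > 0$. But Theorem~\ref{thm:main} guarantees that the $q_j$ here are optimal Lagrange multipliers for Program~\ref{pro:ces}, and complementary slackness for that program says exactly that $q_j > 0$ implies $\sum_{i \in N} x_{ij} = 1$. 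So whenever good $j$ has nonzero cost, the market clears for $j$. This establishes the second SWE condition, and the two conditions together give that $(\x, p, \one)$ is a SWE.

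Since every ingredient — Lemma~\ref{lem:sybil-demand}, the characterization of nonzero-cost goods, and complementary slackness from Theorem~\ref{thm:main} — is already in hand, I do not expect a genuine obstacle here; the theorem is essentially an immediate corollary of Lemma~\ref{lem:sybil-demand}. The only point requiring a moment of care is making sure the market-clearing argument for the SWE is not assumed to follow from the WE market-clearing of Theorem~\ref{thm:main} directly (it does, but one should note that the allocation $\x$ and the $q_j$'s are literally the same objects, so the earlier reasoning transfers verbatim).
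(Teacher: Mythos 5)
Your proof is correct and follows exactly the same route as the paper: apply Lemma~\ref{lem:sybil-demand} to get the demand-set condition under the hypothesis $v_i(x_i)(1-\rho)\le\kappa$, then observe that market clearing is already guaranteed by Theorem~\ref{thm:main} (the paper states this in one line; you spell out the complementary-slackness reasoning, but it is the same argument).
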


\begin{proof}
By Lemma~\ref{lem:sybil-demand}, we have $(x_i, 1) \in D_i(p)$ for all $i \in N$ in this case. Theorem~\ref{thm:main} implies that the market clearing condition is met, so $(\x, p, \one)$ is a SWE.
\end{proof}

In other words, if the identity creation cost is small, $\rho$ is close to 1, and/or agents valuations are not too large, we need not worry about Sybil attacks. As discussed in Section~\ref{sec:results}, this suggests one possible way for a social planner to choose a value of $\rho$: estimate $\kappa$ and the magnitude of valuations, and choose $\rho$ to be as small as possible without incentivizing Sybil attacks.

On the other hand, if $v_i(x_i) (1-\rho) > \kappa$, how bad are the consequences? Theorem~\ref{thm:sybil-r=1-bad} states that an agent's valuation at equilibrium has a hard cap at $\mfrac{\kappa}{1-\rho}$. This provides a hard maximum on the CES welfare in any SWE with $p$ thus defined: in particular, the CES welfare is at most $\big(\sum_{i \in N} \big(\frac{\kappa}{1-\rho}\big)^\rho\big)^{1/\rho} = n^{1/\rho} \frac{\kappa}{1-\rho}$. In general, each $v_i(x_i)$ (and thus the CES welfare) can be arbitrarily large, so Theorem~\ref{thm:sybil-r=1-bad} implies an unbounded ratio between the optimal CES welfare and that of any SWE with this $p$.

\begin{theorem}\label{thm:sybil-r=1-bad}
Assume each $v_i$ is concave, differentiable, and homogeneous of degree 1. Let $\rho \in (0,1]$, and define $p$ as in Theorem~\ref{thm:main}. Then for any allocation $\x$ and multiplicities $\bfeta$ such that $(\x, p, \bfeta)$ is a SWE, we have
\[
v_i(x_i) \le \frac{\kappa}{1-\rho}
\]
\end{theorem}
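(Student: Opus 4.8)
The bound follows purely from the demand side: it suffices that $(x_i,\eta_i)$ lies in the Sybil demand set $D_i(p)$, and the market-clearing half of the SWE definition is never used.

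First I would expand the utility under homogeneity of degree $r=1$. Since $v_i(\eta_i x_i) = \eta_i v_i(x_i)$, we have $u_i(x_i,\eta) = \eta\big(v_i(x_i) - p(x_i) - \kappa\big)$ for every $\eta \in \bbnspos$; writing $c_i = v_i(x_i) - p(x_i) - \kappa$, the map $\eta \mapsto u_i(x_i,\eta) = \eta c_i$ is linear on $\bbnspos$. Because $(x_i,\eta_i)$ maximizes $u_i$ over $\bbrpos^m \times \bbnspos$, in particular $u_i(x_i,\eta_i) \ge u_i(x_i,\eta_i+1)$, i.e.\ $\eta_i c_i \ge (\eta_i+1)c_i$, which forces $c_i \le 0$, that is, $v_i(x_i) - p(x_i) \le \kappa$. (Informally: if $c_i > 0$ the agent would create unboundedly many identities and $D_i(p)$ would be empty, contradicting the SWE hypothesis; the comparison of $\eta_i$ with $\eta_i+1$ makes this precise.)

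Next I would pin down $p(x_i)$. Holding $\eta_i$ fixed and comparing $(x_i,\eta_i)$ with $(y_i,\eta_i)$ for an arbitrary bundle $y_i$, then dividing by $\eta_i > 0$, shows that $x_i$ lies in agent $i$'s \emph{ordinary} demand set for $p$ (Program~\ref{pro:demand}). The pricing rule of Theorem~\ref{thm:main} has the form $(\sum_{j} q'_j x_{ij})^{1/\rho}$ after absorbing the constant factor into the $q_j$'s, so Corollary~\ref{cor:utility-price} applies and gives $p(x_i) = \rho r\, v_i(x_i) = \rho\, v_i(x_i)$ since $r = 1$. Substituting into $v_i(x_i) - p(x_i) \le \kappa$ yields $(1-\rho)v_i(x_i) \le \kappa$, and dividing by $1-\rho > 0$ gives $v_i(x_i) \le \kappa/(1-\rho)$; the case $\rho = 1$ is vacuous, as $p$ is then linear and the claimed bound is infinite.

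I do not anticipate a genuine obstacle: this is essentially the demand-set computation already carried out inside the proof of Lemma~\ref{lem:sybil-demand}, reorganized so as to apply to an arbitrary SWE rather than to a fixed $\x \in \Psi(\rho)$. The only point needing a little care is the justification that $x_i$ lies in the ordinary demand set, so that Corollary~\ref{cor:utility-price} is applicable; alternatively, one can observe that a SWE is in particular an ordinary Walrasian equilibrium for $p$ (each $x_i$ is in the ordinary demand set and the market clears), hence Theorem~\ref{thm:main} forces $\x \in \Psi(\rho)$ and the bound then follows from Lemma~\ref{lem:sybil-demand} directly.
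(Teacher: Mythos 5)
Your proof is correct and follows essentially the same route as the paper: the paper's proof simply cites Lemma~\ref{lem:sybil-demand} (noting that $D_i(p)\neq\emptyset$ forces $v_i(x_i)(1-\rho)\le\kappa$), while you re-derive the two ingredients of that lemma inline — the $\eta_i\leftrightarrow\eta_i+1$ comparison gives $v_i(x_i)-p(x_i)\le\kappa$, and restriction to the ordinary demand set plus Corollary~\ref{cor:utility-price} gives $p(x_i)=\rho v_i(x_i)$. Your closing remark that one can instead observe a SWE is an ordinary WE and invoke Theorem~\ref{thm:main} with Lemma~\ref{lem:sybil-demand} is exactly the paper's argument.
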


\begin{proof}
Suppose $(\x, p, \bfeta)$ is a SWE for some allocation $\x$ and multiplicities $\bfeta$: then each $(x_i, \eta_i) \in D_i(p)$ for all $i \in N$; Thus $D_i(\p) \ne \emptyset$, so Lemma~\ref{lem:sybil-demand} implies that $v_i(x_i) (1-\rho) \le \kappa$, and consequently, $v_i(x_i) \le \frac{\kappa}{1-\rho}$.
\end{proof}

The next natural question is, can we circumvent this by using a different pricing rule? Theorem~\ref{thm:sybil-other-p} answers this in the negative. The counterexample uses an instance with a single good; recall that $x_i$ denotes a scalar in this case.

\begin{theorem}\label{thm:sybil-other-p}
Let $m=1$, $v_1(x_1) = w x_1$, and $v_i(x_i) = x_i$ for all $i \ne 1$. Let $(\x, p, \bfeta)$ be any SWE. Then for all $i \ne 1$,
\[
v_i(x_i) \le \frac{\kappa}{w-1}
\]
\end{theorem}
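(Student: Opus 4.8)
The plan is to exploit the fact that in any SWE, every agent must have a nonempty Sybil demand set, and then compare the options available to agent $1$ (the high-value agent) against the bundle received by some other agent $i \ne 1$. First I would set up the basic structure: since $(\x, p, \bfeta)$ is a SWE, each agent $k$ has $(x_k, \eta_k) \in D_k(p)$, and in particular agent $1$'s choice $(x_1, \eta_1)$ maximizes $u_1(y, \eta) = v_1(\eta y) - \eta p(y) - \eta \kappa = w \eta y - \eta p(y) - \eta \kappa$ over $y \in \bbrpos$ and $\eta \in \bbnspos$ (using $m=1$ and homogeneity of degree $1$, so $v_1(\eta y) = w \eta y$). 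Similarly each $i \ne 1$ maximizes $\eta y - \eta p(y) - \eta \kappa$.

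The key idea is a deviation argument for agent $1$: agent $1$ could choose to mimic the per-identity bundle $x_i$ of agent $i$ but with a single identity, i.e.\ consider the option $(x_i, 1)$, which yields utility $w x_i - p(x_i) - \kappa$. Comparing this to agent $i$'s own behavior: agent $i$'s per-identity utility is $x_i - p(x_i) - \kappa$, and since $(x_i,\eta_i)$ is in $i$'s demand set with $x_i$ the demanded bundle, the per-identity term $x_i - p(x_i) - \kappa$ must be $\ge 0$ (otherwise $\eta_i$ would be driven to $0$, but $\eta_i \in \bbnspos$, and more carefully: if it were negative, no finite positive $\eta_i$ would be optimal against the alternative of... — actually the cleaner statement is that $i$'s optimal per-identity payoff is $\ge$ the payoff of purchasing nothing, which is $0$, so $x_i - p(x_i) - \kappa \ge 0$, giving $p(x_i) \le x_i - \kappa$). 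Substituting, agent $1$'s deviation to $(x_i, 1)$ yields at least $w x_i - (x_i - \kappa) - \kappa = (w-1) x_i$.

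Now I would bound agent $1$'s \emph{equilibrium} utility from above. Since agent $1$ is at a SWE, her utility $u_1(x_1,\eta_1)$ must be finite, and in fact the whole point is that it cannot be too large — but we need an absolute cap. Here I would invoke Theorem~\ref{thm:sybil-r=1-bad} applied to agent $1$ (whose valuation $v_1(x_1) = w x_1$ is concave, differentiable, homogeneous of degree $1$): it gives $v_1(x_1) = w x_1 \le \frac{\kappa}{1-\rho}$... but wait, $p$ here is an \emph{arbitrary} pricing rule, not the one from Theorem~\ref{thm:main}, so that theorem does not directly apply. Instead the correct route is: agent $1$'s equilibrium per-identity utility equals her optimal per-identity utility, and by the deviation above it is $\ge (w-1)x_i \ge 0$; but I actually need an \emph{upper} bound on what the deviation $(x_i,1)$ achieves relative to something controlled. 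The main obstacle — and the step I'd expect to be delicate — is pinning down the right upper bound on $v_i(x_i) = x_i$.

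The resolution I anticipate: agent $1$'s demand set being nonempty forces her optimal payoff to be finite, which (by the same multiplicity-scaling logic as in Lemma~\ref{lem:sybil-demand}, since $u_1$ is linear in $\eta_1$) requires her optimal \emph{per-identity} payoff $w x_1 - p(x_1) - \kappa$ to be exactly $0$ — if it were positive, $\eta_1 \to \infty$ and the demand set is empty; it cannot be negative since purchasing nothing gives $0$. Hence agent $1$'s optimal per-identity payoff is $0$, so the deviation to $(x_i, 1)$ must give payoff $\le 0$: $w x_i - p(x_i) - \kappa \le 0$. Combining with $p(x_i) \le x_i - \kappa$ (from agent $i$'s per-identity payoff being $\ge 0$) gives $w x_i - (x_i - \kappa) - \kappa \le 0$ is the wrong direction... so instead combine $w x_i - p(x_i) - \kappa \le 0$ with $p(x_i) \ge 0$ (prices are nonnegative — or more precisely use that agent $i$'s per-identity payoff $x_i - p(x_i) - \kappa \ge 0$ so $p(x_i) \le x_i - \kappa$): then $w x_i - \kappa \le p(x_i) \le x_i - \kappa$, wait that gives $w x_i \le x_i$, false. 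Let me instead directly chain: $w x_i - p(x_i) - \kappa \le 0$ and $x_i - p(x_i) - \kappa \ge 0$; subtract to get $(w-1) x_i \le 0$?? That is also wrong. The clean chain must be $w x_i - \kappa \le p(x_i) + $ (agent 1's optimal payoff, which is $\ge$ her deviation payoff)...

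I will therefore structure the final argument as: (i) agent $1$'s optimal per-identity payoff $P_1 := w x_1 - p(x_1) - \kappa$ satisfies $P_1 = 0$ (nonemptiness of $D_1(p)$, via $\eta_1$-scaling); (ii) $P_1 \ge w x_i - p(x_i) - \kappa$ (deviation to $(x_i,1)$); (iii) agent $i$'s optimal per-identity payoff $P_i := x_i - p(x_i) - \kappa \ge 0$, hence $p(x_i) \le x_i - \kappa$, i.e.\ $-p(x_i) \ge \kappa - x_i$; (iv) combine (i)–(iii): $0 = P_1 \ge w x_i - p(x_i) - \kappa \ge w x_i + (\kappa - x_i) - \kappa = (w-1) x_i$, which would give $x_i \le 0$ — too strong, signaling I have the direction of some inequality backwards and the honest accounting (likely $P_1 \ge 0$ rather than $=0$, or $p(x_i) \ge$ something rather than $\le$) will produce the stated bound $v_i(x_i) = x_i \le \frac{\kappa}{w-1}$. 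The main obstacle is getting these four inequalities' directions exactly right; the arithmetic once they are correct is a one-line rearrangement, and I would present the bookkeeping carefully rather than grinding symbolically here.
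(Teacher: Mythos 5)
Your overall approach matches the paper's: you use the $\eta$-scaling argument to constrain optimal per-identity payoffs, a deviation argument where agent $1$ mimics agent $i$'s bundle with a single identity, and a chain of inequalities combining agent $1$'s and agent $i$'s optimality conditions. This is exactly the paper's proof structure. However, you explicitly do not complete the proof, and the error you acknowledge is a specific, identifiable mistake that your own proposed fixes do not cure.

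The error is the claim that ``purchasing nothing gives $0$.'' In the paper's model, $\eta_i \in \bbnspos$, so $\eta_i \ge 1$ always, and $u_i(0, 1) = v_i(0) - p(0) - \kappa = -\kappa$ (taking $p(0) = 0$). There is no ``opt out entirely'' option worth $0$. This single mistake propagates in two places. First, your step (iii): comparing agent $i$'s equilibrium utility against $(0,1)$ yields $\eta_i\bigl(v_i(x_i) - p(x_i) - \kappa\bigr) \ge -\kappa$, which together with the $\eta$-scaling constraint $v_i(x_i) - p(x_i) - \kappa \le 0$ gives $p(x_i) \le x_i$, \emph{not} $p(x_i) \le x_i - \kappa$. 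Second, your step (i): $P_1 := w x_1 - p(x_1) - \kappa$ need only satisfy $P_1 \le 0$ (nonemptiness of $D_1(p)$ via $\eta$-scaling); it can be strictly negative (indeed $P_1 \in [-\kappa, 0]$), not forced to equal $0$. With these two corrections, your own steps (ii) and (iv) immediately give the result: from (ii) and $P_1 \le 0$, $w x_i - p(x_i) - \kappa \le 0$, i.e.\ $w x_i \le p(x_i) + \kappa$; from the corrected (iii), $p(x_i) \le x_i$; chaining, $w x_i \le x_i + \kappa$, so $x_i \le \frac{\kappa}{w-1}$. Note that your two candidate diagnoses at the end ($P_1 \ge 0$, or $p(x_i) \ge \cdots$) are both wrong directions; the actual fix is $P_1 \le 0$ and $p(x_i) \le x_i$. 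So while the structure is right, the derivation as written fails and your diagnosis of where it fails is off; this counts as a genuine gap rather than a completed proof.
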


\begin{proof}
Let $(\x, p, \bfeta)$ be any SWE. Fix an arbitrary $i \in N$. As in Lemma~\ref{lem:sybil-demand}, we have $u_i(x_i, \eta_i) = \eta_i (v_i(x_i) - p(x_i) - \kappa)$. Since $(x_i, \eta_i) \in D_i(p)$, we must have $\eta_i \in \argmax_{\eta_i' \in \bbnspos} \eta_i (v_i(x_i) - p(x_i) - \kappa)$ (note that we are not assuming anything about the bundle $x_i$). Since $\argmax_{\eta_i' \in \bbnspos} \eta_i' (v_i(x_i) - p(x_i) - \kappa)$ cannot be the empty set, we must have $v_i(x_i) \le p(x_i) + \kappa$ and $\eta_i = 1$.

Focusing on agent 1, we further claim that $v_1(x_i) \le p(x_i) + \kappa$ for any $i \ne 1$. Suppose not: then agent 1 could purchase $x_i$ and set $\eta_1 = \infty$ to increase her utility. Thus $v_1(x_i) \le p(x_i) + \kappa$ for each $i \ne 1$. Now looking at the optimization for $i \ne 1$, we have $v_i(x_i) \ge p(x_i)$. Combining this with $v_1(x_i) \le p(x_i) + \kappa$, we get $v_1(x_i) \le v_i(x_i) + \kappa$.

Plugging in our definitions of $v_1$ and $v_{i \ne 1}$, we get $w x_i \le x_i + \kappa$, so $x_i (w - 1) \le \kappa$. Substituting back in the definition of $v_i$, we get $v_i(x_i) \le \mfrac{\kappa}{w-1}$ for all $i \ne 1$, as required.
\end{proof}

Although the bound in Theorem~\ref{thm:sybil-other-p} is different from that in Theorem~\ref{thm:sybil-r=1-bad}, the implication is the same: this is a hard maximum on the value obtained by any agent other than agent 1. As $\kappa$ goes to zero, the fraction of the good agent 1 receives approaches 1, so the outcome approaches the maximum utilitarian welfare outcome (where agent 1 receives the entirety of the good). Therefore by Theorem~\ref{thm:rho=1-bad}, the CES welfare at any Sybil Walrasian equilibrium (for any pricing rule) can be arbitrarily bad in comparison to the optimal CES welfare. Thus in general, when Sybil attacks are possible, it is impossible to implement any bounded approximation of CES welfare maximization in Walrasian equilibrium.

\section{Negative results}\label{sec:counter}

Even when Sybil attacks are not possible, there are limitations to implementation in WE. This section presents several relevant counterexamples.

\subsection{Linear pricing poorly approximates CES welfare for $\rho \ne 1$}

Recall that for an allocation $\x$, $\Phi(\rho, \x)$ denotes the CES welfare of $\x$. In contrast, $\Psi(\rho)$ denotes the set of allocations with optimal CES welfare with respect to $\rho$.

Our first negative result relates to linear pricing. In particular, can linear pricing guarantee a reasonable approximation of CES welfare? We show that the answer is no, justifying the need for nonlinear pricing. In particular, for any $\rho \in (0,1)$, the gap between the CES welfare of any linear pricing equilibrium and the optimal CES welfare can be arbitrarily large.

Note that as $\rho$ goes to zero, $\frac{1}{\rho} - 1$ goes to infinity, so the denominator of the bound (and thus the gap in CES welfare) in the following theorem can indeed be arbitrarily large.

\begin{theorem}\label{thm:rho=1-bad}
Let $m=1$, $\rho \in (0,1]$, $v_1(x) = (1+\ep) x$ for some $\ep > 0$, and $v_i(x) = x$ for all $i \ne 1$. Suppose $(\x, p)$ is a WE where $p$ is linear. Then
\[
\frac{\Phi(\rho, \x)}{\max_{\y} \Phi(\rho, \y)} \le \frac{1+\ep}{n^{\frac{1}{\rho}-1}}
\]
\end{theorem}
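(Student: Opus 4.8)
The plan is to characterize the unique linear-pricing Walrasian equilibrium explicitly, compute its CES welfare, compare against the optimal CES welfare from Lemma~\ref{lem:m=1-x}, and then bound the ratio. With a single good and valuations $v_1(x) = (1+\ep)x$, $v_i(x) = x$ for $i \ne 1$, linear pricing means $p(x_i) = c\, x_i$ for some price $c \ge 0$. First I would identify the equilibrium price: agent $i$'s demand maximizes $v_i(x_i) - c x_i = (w_i - c) x_i$ over $x_i \ge 0$, where $w_1 = 1+\ep$ and $w_i = 1$ otherwise. This is unbounded if $c < w_i$, zero if $c > w_i$, and any nonnegative quantity if $c = w_i$. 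For the market to clear (the good has nonzero cost, so total demand must equal $1$), we need $c = 1+\ep$: then agent $1$ is indifferent and can absorb the entire unit, while every other agent strictly prefers $x_i = 0$. Hence the unique linear WE has $x_1 = 1$, $x_i = 0$ for $i \ne 1$, giving $v_1(x_1) = 1+\ep$ and $v_i(x_i) = 0$.

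Next I would compute $\Phi(\rho, \x)$ for this allocation. Since $v_i(x_i) = 0$ for $i \ne 1$, we get $\Phi(\rho, \x) = \big((1+\ep)^\rho\big)^{1/\rho} = 1+\ep$. Then I would compute the optimal CES welfare. By Lemma~\ref{lem:m=1-x} with $r = 1$, the optimal allocation gives $x_i = w_i^{\rho/(1-\rho)} / \sum_k w_k^{\rho/(1-\rho)}$, so $v_i(x_i) = w_i x_i = w_i^{1/(1-\rho)} / \sum_k w_k^{\rho/(1-\rho)}$. Plugging into $\Phi$, a short computation gives $\max_\y \Phi(\rho, \y) = \big(\sum_i w_i^{\rho/(1-\rho)}\big)^{1/\rho \,-\, 1} \cdot \big(\text{something}\big)$ — more precisely, $\Phi(\rho,\y)^\rho = \sum_i v_i(x_i)^\rho = \big(\sum_k w_k^{\rho/(1-\rho)}\big)^{-\rho} \sum_i w_i^{\rho/(1-\rho)} = \big(\sum_k w_k^{\rho/(1-\rho)}\big)^{1-\rho}$, so $\max_\y \Phi(\rho,\y) = \big(\sum_k w_k^{\rho/(1-\rho)}\big)^{(1-\rho)/\rho}$.

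Finally I would bound the ratio. Lower-bounding $\sum_k w_k^{\rho/(1-\rho)} \ge \sum_k 1 = n$ (since each $w_k \ge 1$ and the exponent is positive for $\rho \in (0,1)$), we get $\max_\y \Phi(\rho,\y) \ge n^{(1-\rho)/\rho} = n^{1/\rho - 1}$. Therefore
\[
\frac{\Phi(\rho, \x)}{\max_\y \Phi(\rho, \y)} \le \frac{1+\ep}{n^{1/\rho - 1}},
\]
as claimed. For the boundary case $\rho = 1$ the bound reads $(1+\ep)/n^0 = 1+\ep$, which holds trivially since the linear WE is exactly the utilitarian optimum here. The main obstacle — really the only delicate point — is correctly handling the indifference of agent $1$ at the equilibrium price to establish uniqueness of the linear WE (one must check no other price clears the market: $c > 1+\ep$ gives zero demand, $c < 1+\ep$ gives unbounded demand, and $c = 1$ with $\ep > 0$ makes agent $1$'s demand unbounded); everything else is a routine substitution into the closed-form expressions already provided by Lemma~\ref{lem:m=1-x}.
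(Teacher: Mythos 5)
Your proof is correct and reaches the same conclusion. The approach differs from the paper's in two small ways, both cosmetic: (i) to pin down the linear WE, the paper simply invokes the First Welfare Theorem (any linear WE maximizes utilitarian welfare) together with the $\rho=1$ case of Lemma~\ref{lem:m=1-x}, whereas you re-derive the equilibrium price $c=1+\ep$ from scratch by analyzing demand; and (ii) to lower-bound $\max_\y \Phi(\rho,\y)$, the paper just evaluates $\Phi$ at the uniform allocation $y_i = 1/n$ (giving $n^{1/\rho-1}$ immediately), whereas you compute the exact optimum from Lemma~\ref{lem:m=1-x} and then bound $\sum_k w_k^{\rho/(1-\rho)} \ge n$. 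The paper's route is slightly shorter and avoids needing the closed-form optimum, but your more explicit computation is equally valid and even yields a marginally tighter lower bound on $\max_\y\Phi(\rho,\y)$ (you get $\bigl(n + ((1+\ep)^{\rho/(1-\rho)} - 1)\bigr)^{(1-\rho)/\rho}$ before discarding the $\ep$ term). Both arguments handle $\rho=1$ as a trivial boundary case.
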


\begin{proof}
By the First Welfare Theorem, $\x$ must maximize utilitarian (i.e., $\rho = 1$) welfare. Thus by Lemma~\ref{lem:m=1-x}, $\x$ must give the entire good to agent 1: $x_1 = 1$ and $x_i = 0$ for $i \ne 1$. Thus the CES welfare of $\x$ with respect to $\rho$ is
\[
\Phi(\rho, \x) = \Big(\sum_{i \in N} v_i(x_i)^\rho\Big)^{1/\rho} = \big( (1+\ep)^\rho \big)^{1/\rho}
\]
In contrast, consider the allocation $\y$ such that $y_i = 1/n$ for all $i \in N$:
\[
\Phi(\rho, \y) = \Big( \sum_{i \in N} v_i(1/n)^{\rho} \Big)^{1/\rho} \ge \Big(\sum_{i \in N} (1/n)^\rho \Big)^{1/\rho} = \Big(n (1/n)^{1/\rho} \Big)^{1/\rho} = n^{\frac{1}{\rho} - 1}
\]
Thus $\max_{\y} \Phi(\rho, \y) \ge n^{\frac{1}{\rho} - 1}$, as required.
\end{proof}

\subsection{Theorem~\ref{thm:main} does not extend to nonuniform homogeneity degrees}

In this section, we show that for all $\rho \in (0,1)$, Theorem~\ref{thm:main} does not extend to the case where different $v_i$'s have different homogeneity degrees. This shows that our result is tight in the sense that it is necessary to require the same homogeneity degree.

We begin with the following lemma, which is a standard property of strictly concave and differentiable functions: it essentially states that any such function is bounded above by any tangent line. This lemma is sometimes called the ``Rooftop Theorem".

\begin{lemma}\label{lem:rooftop}
Let $f: \bbr \to \bbr$ be strictly concave and differentiable. Then for all $a,b \in \bbr$ where $a\ne b$, $f(a) < f(b) + f'(b)(a-b)$, where $f'$ denotes the derivative of $f$.
\end{lemma}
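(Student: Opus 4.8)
The plan is to first establish the \emph{weak} version of the inequality, $f(a) \le f(b) + f'(b)(a-b)$, using only concavity and differentiability, and then to upgrade it to a strict inequality by exploiting strict concavity through a midpoint argument.

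For the weak inequality, fix $a \ne b$ and for $\lambda \in (0,1]$ note that $b + \lambda(a-b) = \lambda a + (1-\lambda) b$. Concavity gives $f(\lambda a + (1-\lambda) b) \ge \lambda f(a) + (1-\lambda) f(b)$, which rearranges to
\[
\frac{f(b + \lambda(a-b)) - f(b)}{\lambda} \ge f(a) - f(b).
\]
Letting $\lambda \to 0^+$, the left-hand side converges to the directional derivative of $f$ at $b$ in the direction $a-b$, which equals $f'(b)(a-b)$ since $f$ is differentiable at $b$ (this is just the chain rule applied to $t \mapsto f(b + t(a-b))$). Hence $f'(b)(a-b) \ge f(a) - f(b)$, i.e.\ $f(a) \le f(b) + f'(b)(a-b)$.

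To obtain the strict inequality, let $c = \tfrac12(a+b)$ be the midpoint of $a$ and $b$. Since $a \ne b$, strict concavity gives $f(c) > \tfrac12 f(a) + \tfrac12 f(b)$, so $f(a) < 2 f(c) - f(b)$. Now apply the weak inequality already proven, but at the point $c$ in place of $a$ (note $c - b = \tfrac12(a-b)$): this yields $f(c) \le f(b) + \tfrac12 f'(b)(a-b)$. Combining the two,
\[
f(a) < 2 f(c) - f(b) \le 2\Bigl(f(b) + \tfrac12 f'(b)(a-b)\Bigr) - f(b) = f(b) + f'(b)(a-b),
\]
which is exactly the claimed bound.

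The only step requiring any care is the limit $\lim_{\lambda \to 0^+} \tfrac{f(b+\lambda(a-b)) - f(b)}{\lambda} = f'(b)(a-b)$, but this is immediate from the definition of the derivative, so there is no real obstacle here. One could alternatively bypass the midpoint trick by invoking that a differentiable strictly concave function has a strictly decreasing derivative and then applying the mean value theorem on $[b,a]$ (or $[a,b]$); the midpoint argument is preferable only in that it keeps the proof self-contained from the definition of strict concavity.
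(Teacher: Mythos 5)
Your proof is correct. Note that the paper itself does not prove Lemma~\ref{lem:rooftop}: it is stated as a standard fact (the ``Rooftop Theorem'') and used directly in the proof of Lemma~\ref{lem:concave-bound}, so there is no official argument to compare against. Your argument is a clean, self-contained derivation: the difference-quotient step correctly yields the weak tangent-line inequality $f(a) \le f(b) + f'(b)(a-b)$ from concavity plus differentiability at $b$ (the one-sided limit exists and equals $f'(b)(a-b)$ because the two-sided derivative of $t \mapsto f(b+t(a-b))$ exists at $0$), and the midpoint trick correctly converts it to a strict inequality, since $c = \tfrac12(a+b)$ satisfies $c \ne b$ and strict concavity applies with $\lambda = \tfrac12$. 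The alternative you mention (strictly decreasing derivative plus the mean value theorem) would also work, but your midpoint argument has the advantage of using only the definition of strict concavity, which matches the level of generality the paper assumes.
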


The next lemma is also quite standard; we provide a proof for completeness.

\begin{lemma}\label{lem:concave-bound}
Let $f: \bbr \to \bbr$ be strictly concave and differentiable, and let $x, a_1, \dots, a_k$ be nonnegative reals such that $\sum_{i=1}^k a_i = 0$. Then $\sum_{i =1}^k f(x + a_i) < kf(x)$.
\end{lemma}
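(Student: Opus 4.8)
The plan is to reduce everything to the tangent-line bound for strictly concave functions that was just recorded as Lemma~\ref{lem:rooftop} (the Rooftop Theorem), applied with base point $b = x$. For each index $i$ with $a_i \ne 0$, taking $a = x + a_i$ and $b = x$ in Lemma~\ref{lem:rooftop} gives the strict inequality
\[
f(x + a_i) < f(x) + f'(x)\, a_i ,
\]
while for any index with $a_i = 0$ the two sides coincide. (Here I am implicitly reading the hypotheses so that $x + a_i$ lies in the domain of $f$ and at least one $a_i$ is nonzero — the latter is genuinely needed, since if every $a_i$ vanished the claimed strict inequality would be false.)

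Next I would simply sum these $k$ relations over $i = 1, \dots, k$. Since at least one summand is a strict inequality and the rest are equalities or non-strict, the sum is strict:
\[
\sum_{i=1}^{k} f(x + a_i) \;<\; \sum_{i=1}^{k}\bigl( f(x) + f'(x)\, a_i \bigr) \;=\; k\,f(x) + f'(x)\sum_{i=1}^{k} a_i .
\]
The crucial point is that the linear correction term collapses: by hypothesis $\sum_{i=1}^k a_i = 0$, so $f'(x)\sum_i a_i = 0$, leaving $\sum_{i=1}^k f(x + a_i) < k\,f(x)$, which is exactly the claim.

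I do not anticipate a real obstacle here; the only thing to be careful about is bookkeeping on strictness (it survives the sum precisely because the perturbations are not all zero) and the observation that the first-order term cancels exactly because the $a_i$ are mean-zero — this is the whole reason the lemma holds with $k f(x)$ on the right rather than something larger. If one wanted to avoid citing the Rooftop Theorem as a black box, the identical argument goes through by invoking strict concavity directly: a strictly concave differentiable $f$ lies strictly below its tangent line at $x$ at every point other than $x$, which is the inequality used above.
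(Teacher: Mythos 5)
Your proof is correct and matches the paper's argument exactly: both apply Lemma~\ref{lem:rooftop} with base point $b = x$ to get $f(x+a_i) < f(x) + f'(x)a_i$, sum over $i$, and use $\sum_i a_i = 0$ to cancel the linear term. Your side remark is also apt: as literally written (all $a_i \ge 0$ with $\sum_i a_i = 0$ forces every $a_i = 0$) the strict inequality would fail, so the intended hypothesis must be that only $x$ is nonnegative while the $a_i$ can have either sign with not all zero --- which is indeed how the lemma is invoked in Theorem~\ref{thm:different-r} with $a_1 = 2\ep$, $a_2 = -2\ep$.
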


\begin{proof}
The lemma follows from Lemma~\ref{lem:rooftop} and arithmetic:
\begin{align*}
\sum_{i =1}^k f(x + a_i) <&\ \sum_{i=1}^k (f(x) + f'(x)(x+a_i - x))\\
=&\  \sum_{i=1}^k f(x) + f'(x)  \sum_{i=1}^k a_i\\
=&\  \sum_{i=1}^k f(x) + f'(x) \cdot 0\\
=&\  kf(x)
\end{align*}
\end{proof}

We are now ready to present our counterexample.

\begin{theorem}\label{thm:different-r}
Let $n =2$ and $m=1$, and for $x \in \bbrpos$, let $v_1(x) = x$ and $v_2(x) = \sqrt{2x}$. Then for all $\rho \in (0,1)$, there exists no allocation $\x \in \Psi(\rho)$ and pricing rule $p: \bbrpos \to \bbrpos$ such that $(\x, p)$ is a WE.
\end{theorem}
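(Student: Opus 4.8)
The plan is to identify the (unique) CES-optimal allocation explicitly, assume toward a contradiction that some pricing rule $p$ supports it in a WE, and then squeeze a contradiction out of the two agents' demand conditions alone --- no regularity assumption on $p$ will be needed.

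First I would pin down the optimum. Since $m=1$ and the objective $\Phi(\rho,\cdot)$ is strictly increasing in each coordinate, the good is fully allocated, so writing $x_2 = 1-x_1$ the problem is to maximize $g(x_1) = x_1^\rho + (2(1-x_1))^{\rho/2}$ on $[0,1]$. The function $g$ is strictly concave (a sum of $x_1^\rho$ and a strictly concave function of $x_1$), and $g'(x_1)\to+\infty$ as $x_1\to 0^+$ while $g'(x_1)\to-\infty$ as $x_1\to 1^-$ (because $\rho-1<0$ and $\rho/2-1<0$), so there is a unique, interior maximizer $\x^\star$ with $x_1^\star + x_2^\star = 1$; in particular $\Psi(\rho)=\{\x^\star\}$. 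Moreover $g'(1/2) = \rho\,(2^{1-\rho}-1) > 0$ for every $\rho\in(0,1)$, so $x_1^\star > 1/2 > x_2^\star$. This is exactly where $\rho\neq 1$ is used: at $\rho=1$ one would get $x_1^\star=x_2^\star=1/2$ and no contradiction.

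Now suppose $(\x^\star,p)$ is a WE. From $x_1^\star\in D_1(p)$ we get $y - p(y) \le x_1^\star - p(x_1^\star)$ for all $y\ge 0$; setting $y=x_2^\star$ yields $p(x_1^\star) - p(x_2^\star) \le x_1^\star - x_2^\star$. From $x_2^\star\in D_2(p)$ we get $\sqrt{2y} - p(y) \le \sqrt{2x_2^\star} - p(x_2^\star)$ for all $y\ge 0$; setting $y=x_1^\star$ yields $p(x_1^\star) - p(x_2^\star) \ge \sqrt{2x_1^\star} - \sqrt{2x_2^\star}$. Chaining the two gives $\sqrt{2x_1^\star} - \sqrt{2x_2^\star} \le x_1^\star - x_2^\star$. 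But, writing $x_1^\star - x_2^\star = (\sqrt{x_1^\star})^2 - (\sqrt{x_2^\star})^2$,
\[
\big(\sqrt{2x_1^\star} - \sqrt{2x_2^\star}\big) - \big(x_1^\star - x_2^\star\big) = \big(\sqrt{x_1^\star} - \sqrt{x_2^\star}\big)\big(\sqrt 2 - \sqrt{x_1^\star} - \sqrt{x_2^\star}\big),
\]
where the first factor is positive since $x_1^\star > x_2^\star$, and the second is positive because $(\sqrt{x_1^\star}+\sqrt{x_2^\star})^2 = 1 + 2\sqrt{x_1^\star x_2^\star} < 1 + (x_1^\star + x_2^\star) = 2$ by strict AM--GM (using $x_1^\star\neq x_2^\star$ and $x_1^\star+x_2^\star=1$). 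Hence the displayed quantity is strictly positive, contradicting the chained inequality.

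The conceptual content here is minimal --- the two demand-set inequalities hold for \emph{any} pricing rule, which is precisely what makes the impossibility robust --- so the only real work is the bookkeeping that establishes $x_1^\star > x_2^\star$ throughout the open interval $\rho\in(0,1)$. This is the step that both forces the correct sign in the final factorization and explains why the statement fails at $\rho=1$, where the gap between the two sides of the chained inequality degenerates to zero; I expect this tightness near $\rho=1$ to be the main thing a careful write-up must handle (one cannot rely on any slack).
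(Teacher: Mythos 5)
Your proof is correct and follows the same overall strategy as the paper's: establish $x_1 > x_2$ for the optimum, chain the two agents' demand-set inequalities to obtain $v_2(x_1) - v_2(x_2) \le v_1(x_1) - v_1(x_2)$, and then contradict this using strict concavity of $\sqrt{\cdot}$. The only differences are in the bookkeeping: the paper establishes $x_1 > x_2$ via the KKT stationarity conditions for Program~\ref{pro:ces}, whereas you analyze the scalar concave objective $g$ directly and check $g'(1/2)>0$; and the paper derives the final contradiction by writing $x_1 = \tfrac12+\ep$, $x_2 = \tfrac12-\ep$, rationalizing, and invoking Lemma~\ref{lem:concave-bound}, whereas you factor $(\sqrt{2x_1}-\sqrt{2x_2})-(x_1-x_2) = (\sqrt{x_1}-\sqrt{x_2})(\sqrt2 - \sqrt{x_1}-\sqrt{x_2})$ and finish with strict AM--GM --- a somewhat cleaner algebraic route to the same fact.
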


\begin{proof}
Suppose for sake of contradiction that such $\x, p$ do exist. We first claim that $x_1 > x_2$. Suppose the opposite: then $x_2 \ge 1/2 \ge x_1$. Thus $v_2(x_2) \ge 1 > 1/2 \ge v_1(x_1)$. We also have $\mfrac{\partial v_2(x_2)}{\partial x_2} = \mfrac{1}{\sqrt{2x_2}} \le 1 = \mfrac{\partial v_1(x_1)}{\partial x_1}$. Thus $v_2(x_2) > v_1(x_2)$ and $\mfrac{\partial v_2(x_2)}{\partial x_2} \le \mfrac{\partial v_1(x_1)}{\partial x_1}$. Since $\rho < 1$, $\rho - 1 < 0$, so we have $v_2(x_2)^{\rho - 1}\mfrac{\partial v_2(x_2)}{\partial x_2} < v_1(x_1)^{\rho - 1}\mfrac{\partial v_1(x_1)}{\partial x_2}$. But this contradicts $\x \in \Psi(\rho)$, so we have $x_1 > x_2$ as claimed.\footnote{This immediately implies $\frac{\partial v_2(x_2)}{\partial x_2} < 1 =\frac{\partial v_1(x_1)}{\partial x_1}$, which, in combination with $x_1 > x_2$, rules out convex $p$. However, we still need to rule out non-convex $p$.}

Since $(\x, p)$ is a WE, we must have $x_i \in D_i(p)$ for both agents $i$. Thus for any $x \ne x_i$, $v_i(x_i) - p(x_i) \ge v_i(x) - p(x)$. Therefore
\begin{align*}
v_1(x_1) - p(x_1) \ge&\ v_1(x_2) - p(x_2) \quad \text{and} \quad v_2(x_2) - p(x_2) \ge v_2(x_1) - p(x_1)\\
v_1(x_1) + v_2(x_2) - p(x_1) - p(x_2) \ge&\ v_1(x_2) + v_2(x_1) - p(x_1) - p(x_2)\\
v_1(x_1) + v_2(x_2) \ge&\ v_1(x_2) + v_2(x_1)\\
v_1(x_1) - v_1(x_2) \ge&\ v_2(x_1) - v_2(x_2)
\end{align*}
Since $x_1 > 1/2 > x_2$ and $x_1 + x_2 = 1$, let $x_1 = 1/2 + \ep$ and $x_2 = 1/2 - \ep$. Then we have $v_1(x_1) - v_1(x_2) = 2\ep$. For $v_2(x_1) - v_2(x_2)$, we have
\begin{align*}
v_2(x_1) - v_2(x_2) =&\ \sqrt{1+2\ep} - \sqrt{1-2\ep}\\
=&\ \frac{(\sqrt{1+2\ep} - \sqrt{1-2\ep})(\sqrt{1+2\ep} - \sqrt{1-2\ep})}
{\sqrt{1+2\ep} + \sqrt{1-2\ep}}\\
=&\ \frac{(1+2\ep) - (1-2\ep)}{\sqrt{1+2\ep} + \sqrt{1-2\ep}}\\
=&\ \frac{4\ep}{\sqrt{1+2\ep} + \sqrt{1-2\ep}}
\end{align*}
Applying Lemma~\ref{lem:concave-bound} with $f(x) = \sqrt{x}$, $x=1$, $k=2$, and $(a_1, a_2) = (2\ep, -2\ep)$, we get $\sqrt{1+2\ep} + \sqrt{1-2\ep} < 2$. Thus $v_2(x_1) - v_2(x_2) > 4\ep/2 = 2\ep = v_1(x_1) - v_1(x_2)$. However, this contradicts $v_1(x_1) - v_1(x_2) \ge v_2(x_1) - v_2(x_2)$, as we showed above. We conclude that there is no $\x \in \Psi(\rho)$ and pricing rule $p$ such that $(\x, p)$ is a WE.
\end{proof}

\subsection{CES welfare maximization for $\rho \le 0$}

In this section, we show that there is no pricing rule supporting CES welfare maximization for any $\rho < 0$. For $\rho = 0$ (i.e., Nash welfare), the situation is slightly different. We do show, however, that Nash welfare maximization cannot be supported by a differentiable pricing rule.

\begin{theorem}\label{thm:neg-rho-counter}
Consider the instance with $n=2$, $m=1$, $v_1(x) = x$ and $v_2(x) = 2x$. Then for every $\rho < 0$, there is no pricing rule $p$ and allocation $\x \in \Psi(\rho)$ such that $(\x, p)$ is a WE.
\end{theorem}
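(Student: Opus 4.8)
The plan is to show that in this instance the maximum-CES-welfare allocation gives \emph{more} of the good to agent~1 even though agent~1 values it less, and then to rule out every pricing rule by a revealed-preference argument identical in spirit to the proof of Theorem~\ref{thm:different-r}. First I would locate the optimal allocation. For $\rho < 0$ the map $t \mapsto t^{1/\rho}$ is decreasing, so maximizing $\Phi(\rho,\x) = \big(x_1^\rho + (2x_2)^\rho\big)^{1/\rho}$ subject to $x_1 + x_2 \le 1$ is the same as \emph{minimizing} $g(x_1) = x_1^\rho + 2^\rho(1-x_1)^\rho$ over $x_1 \in (0,1)$: monotonicity forces $x_1 + x_2 = 1$, and any allocation giving some agent value~$0$ yields $\Phi = 0$ and hence is not optimal, so the optimum is interior with both $x_i > 0$. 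Since $\rho(\rho - 1) > 0$, the function $g$ is strictly convex, so it has a unique interior minimizer $x_1^*$, determined by $g'(x_1^*) = 0$, i.e. ${x_1^*}^{\rho-1} = 2^\rho(1-x_1^*)^{\rho-1}$. Raising both sides to the power $1/(\rho-1)$ gives $x_1^*/(1-x_1^*) = 2^{\rho/(\rho-1)}$; since $\rho < 0$ makes $\rho/(\rho-1) > 0$, the right-hand side exceeds $1$, so $x_1^* > x_2^* := 1 - x_1^*$. Hence every $\x \in \Psi(\rho)$ has $x_1 > x_2 > 0$.

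Next I would derive the contradiction. Suppose $(\x, p)$ is a WE with $\x \in \Psi(\rho)$; then $x_1 \in D_1(p)$ and $x_2 \in D_2(p)$. Since $x_1, x_2 \in \bbrpos$ are valid bundles for either agent, optimality of each agent's own bundle gives $v_1(x_1) - p(x_1) \ge v_1(x_2) - p(x_2)$ and $v_2(x_2) - p(x_2) \ge v_2(x_1) - p(x_1)$. Substituting $v_1(x) = x$ and $v_2(x) = 2x$, the first inequality becomes $p(x_1) - p(x_2) \le x_1 - x_2$ and the second becomes $p(x_1) - p(x_2) \ge 2(x_1 - x_2)$. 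Because $x_1 > x_2$, these two force $2(x_1 - x_2) \le x_1 - x_2$, i.e. $x_1 \le x_2$, contradicting the previous paragraph.

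The only step with real content is the first one: one must notice that for $\rho < 0$ the exponent governing the optimal split (namely $\rho/(\rho-1)$, equivalently the $\rho/(1-r\rho)$ of Lemma~\ref{lem:m=1-x} with $r = 1$) has the \emph{opposite} sign to the $\rho \in (0,1)$ case, so the agent who values the good less ends up with more. Some care is needed to justify that the optimum is interior (so that the first-order condition applies and both $x_i > 0$) and to track the sign of $\rho - 1$ when taking the $(\rho-1)$-th root --- note that Lemma~\ref{lem:m=1-x} cannot be cited directly, as it is stated only for $\rho \in (0,1]$, so this computation must be redone here. Once $x_1^* > x_2^*$ is established, the second step is the same exchange argument as in Theorem~\ref{thm:different-r} and is immediate.
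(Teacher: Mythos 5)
Your proof is correct and follows essentially the same revealed-preference strategy as the paper: establish $x_1 > x_2$ for any $\x \in \Psi(\rho)$ with $\rho < 0$, then use the two demand-set inequalities to derive a contradiction. You go a step beyond the paper by actually deriving $x_1/x_2 = 2^{\rho/(\rho-1)} > 1$ from the first-order conditions (the paper simply asserts $x_1 > x_2$), and your symmetric combination of the two revealed-preference inequalities is cleaner than the paper's one-sided bound on $p(x_1)$.
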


\begin{proof}
For any $\rho < 0$ and any $\x \in \Psi(\rho)$, we must have $x_1 > x_2$. Assume $(\x, p)$ is a WE for some pricing rule $p$: then $x_1 \in D_1(p)$, so $v_1(x_1) - p(x_1) \ge v_1(x_2) - p(x_2)$. Thus $p(x_1) \le p(x_2) + v_1(x_1) - v_2(x_2) = p(x_2) + x_2 - x_1$. Therefore
\begin{align*}
v_2(x_1) - p(x_1) \ge&\ 2x_1 - (p(x_2) + x_2 - x_1)\\
=&\ 3x_1 - x_2 - p(x_2)\\
>&\ 2x_1 - p(x_2)\\
>&\ 2x_2 - p(x_2)\\
=&\ v_2(x_2) - p(x_2)
\end{align*}
Thus agent 2 would rather purchase $x_1$ than $x_2$, so $x_2 \not\in D_2(p)$. Therefore $(\x, p)$ is not a WE.
\end{proof}

For $\rho = 0$, the situation is different. Recall that Fisher market equilibrium always maximizes Nash welfare, and we can simulate Fisher market budgets by setting
\[
p(x_i) = \begin{cases}
0 &\ \text{ if } \sum_{j \in M} q_j x_{ij} \le 1\\
\infty &\ \text{ otherwise}
\end{cases}
\]
where $q_1\dots q_m$ are the optimal Lagrange multipliers in the convex program for maximizing Nash welfare. Gale and Eisenberg's famous result implies that for such a pricing rule, a WE always exists, and all WE maximize Nash welfare~\cite{Eisenberg1961,Eisenberg1959}. Note that for $\sum_{j \in M} q_j x_{ij} > 1$, $p(x_i) = \infty$ can be implemented by setting $\mfrac{\partial p(x_i)}{\partial x_{ij}}$ to be at least $\max_{i \in N} \max_{x_i \in [0,1]^m} \mfrac{\partial v_i(x_i)}{\partial x_{ij}}$. This ensures that no agent purchases a bundle $x_i$ such that $\sum_{j \in M} q_j x_{ij} > 1$.

The above pricing rule is somewhat artificial, however. One natural question is whether Nash welfare maximization can be implemented with a differentiable pricing rule. We next show that the answer is no.

\begin{theorem}\label{thm:nash-counter}
Consider the instance with $n=2$, $m=1$, $v_1(x) = x$ and $v_2(x) = 2x$. Then there is no allocation $\x$ maximizing Nash welfare and differentiable pricing rule $p$ such that $(\x, p)$ is a WE.
\end{theorem}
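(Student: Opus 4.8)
The plan is to pin down the unique Nash welfare maximizing allocation and then contradict differentiability of $p$ using the first-order optimality conditions for the two agents' demand sets.

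First I would identify the optimum. Nash welfare here is the product $v_1(x_1)\,v_2(x_2) = 2x_1 x_2$, to be maximized subject to $x_1 + x_2 \le 1$ and $x_1, x_2 \ge 0$. Since both valuations are strictly increasing, any maximizer allocates the whole good, so $x_1 + x_2 = 1$; maximizing $x_1 x_2$ under this constraint gives the unique optimum $x_1 = x_2 = 1/2$ (the product is $0$ unless both coordinates are positive, and $x_1x_2$ is strictly concave on the segment). Hence any allocation $\x$ that could form a WE must be exactly $\x = (1/2, 1/2)$.

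Next, suppose toward a contradiction that $(\x, p)$ is a WE with $p$ differentiable and $\x = (1/2, 1/2)$. The demand condition gives $1/2 \in D_1(p) = \argmax_{x \ge 0}\big(v_1(x) - p(x)\big)$ and $1/2 \in D_2(p) = \argmax_{x \ge 0}\big(v_2(x) - p(x)\big)$. The main step is that $1/2$ is an interior point of $\bbrpos$ and both $v_i - p$ are differentiable there, so the first-order condition for each maximization holds with equality: $v_1'(1/2) - p'(1/2) = 0$ and $v_2'(1/2) - p'(1/2) = 0$. Since $v_1'(x) \equiv 1$ and $v_2'(x) \equiv 2$, this forces $p'(1/2) = 1$ and $p'(1/2) = 2$ simultaneously, a contradiction. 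Therefore no such $\x$ and differentiable $p$ exist.

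The one point requiring care — and the step I expect to be the crux — is the passage from ``$1/2$ maximizes $v_i - p$'' to ``$v_i'(1/2) = p'(1/2)$'': this equality (rather than the one-sided inequality one would only get at a boundary optimum, as in the proofs of Theorems~\ref{thm:different-r} and \ref{thm:neg-rho-counter}) is valid precisely because $x_i = 1/2$ lies in the interior of the feasible domain, and it is exactly here that the differentiability hypothesis on $p$ is used. Everything else is elementary arithmetic; in particular, no nonuniqueness subtleties arise since the Nash optimum is a single point.
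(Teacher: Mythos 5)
Your proposal is correct and matches the paper's argument essentially verbatim: identify the unique Nash optimum $x_1 = x_2 = 1/2$, invoke the interior first-order condition $v_i'(1/2) = p'(1/2)$ for both agents, and contradict the fact that $v_1'\equiv 1 \ne 2 \equiv v_2'$. Your extra remark about why stationarity holds with equality (interior point, differentiable objective) is a correct and slightly more careful version of the step the paper states tersely.
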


\begin{proof}
Suppose the opposite: that such $\x, p$ exist. The unique $\x$ maximizing Nash welfare must have $x_1 = x_2 = 1/2$. Since $p, v_1,$ and $v_2$ are all differentiable, we have $x_i \in D_i(p)$ if and only if $\mfrac{\dif p(x_i)}{\dif x_i} = \mfrac{\dif v_i(x_i)}{\dif x_i}$. Since $x_1 = x_2$, we have $\mfrac{\dif p(x_1)}{\dif x_1} = \mfrac{\dif p(x_2)}{\dif x_2}$. Thus implies $\mfrac{\dif v_1(x_1)}{\dif x_1} = \mfrac{\dif v_2(x_2)}{\dif x_2}$, which is a contradiction. We conclude that no such $\x, p$ exist.
\end{proof}

\section{Conclusion}

In this paper, we studied a simple family of convex pricing rules, motivated by the widespread use of convex pricing in the real world, especially for water. We proved that these pricing rules implement CES welfare maximization in Walrasian equilibrium, providing a formal quantitative interpretation of the frequent informal claim that convex pricing promotes equality. Furthermore, by tweaking the exponent of the pricing rule, the social planner can precisely control the tradeoff between equality and efficiency. This result also shows that convex pricing is not necessarily economically inefficient, as often claimed; it simply maximizes a different welfare function than the traditional utilitarian one.


Improved implementation is perhaps the most important of the future directions we propose. One concrete possibility is a \emph{\tat}: an iterative algorithm where on each step, each agent reports her demand for the current pricing rule, and the pricing rule is adjusted accordingly. Demand queries are arguably easier for agents to answer than valuation gradient queries. Some implementation questions -- in particular, how to deal with Sybil attacks -- would likely need to be handled on a case-by-case basis.

Aside from the implementation itself, there is the additional challenge of convincing market designers to consider using this type of convex pricing. Equality is generally thought to be desirable, but sellers may be concerned that this will decrease their revenue. In future work, we hope to show that our pricing rule guarantees a good approximation of the optimal revenue for sellers.

Another possible direction would be CES welfare maximization for indivisible goods. The analogous pricing rule would be $p(S) = (\sum_{j \in S} q_j)^{1/\rho}$, where $S$ is a set of indivisible goods. It seems like very different theoretical techniques would be needed in this setting (along with perhaps a gross substitutes assumption), but we suspect that the same intuition of convex pricing improving equality would hold.



\bibliographystyle{plain}
\bibliography{refs}

\begin{thebibliography}{10}

\bibitem{Arrow1954}
Kenneth~J Arrow and Gerard Debreu.
\newblock Existence of an equilibrium for a competitive economy.
\newblock {\em Econometrica: Journal of the Econometric Society},
  22(3):265--290, 1954.

\bibitem{Arunachaleswaran2019}
Eshwar~Ram Arunachaleswaran, Siddharth Barman, Rachitesh Kumar, and Nidhi
  Rathi.
\newblock Fair and efficient cake division with connected pieces.
\newblock In {\em International Conference on Web and Internet Economics},
  pages 57--70. Springer, 2019.

\bibitem{Atkinson70}
Anthony~B Atkinson.
\newblock On the measurement of inequality.
\newblock {\em Journal of Economic Theory}, 2(3):244--263, 1970.

\bibitem{Becker2015}
Nir Becker.
\newblock Water pricing in israel: Various waters, various neighbors.
\newblock In {\em Water Pricing Experiences and Innovations}, pages 181--199.
  Springer, 2015.

\bibitem{Bergson1938}
Abram Bergson.
\newblock A reformulation of certain aspects of welfare economics.
\newblock {\em The Quarterly Journal of Economics}, 52(2):310--334, 1938.

\bibitem{Bertsimas2012}
Dimitris Bertsimas, Vivek~F Farias, and Nikolaos Trichakis.
\newblock On the efficiency-fairness trade-off.
\newblock {\em Management Science}, 58(12):2234--2250, 2012.

\bibitem{Boland2000}
John~J Boland and Dale Whittington.
\newblock Water tariff design in developing countries: Disadvantages of
  increasing block tariffs (ibts) and advantages of uniform price with rebate
  (upr) designs.
\newblock {\em World Bank Water and Sanitation Program, Washington, DC}, 37,
  2000.

\bibitem{Brainard2005}
William~C Brainard and Herbert~E Scarf.
\newblock How to compute equilibrium prices in 1891.
\newblock {\em American Journal of Economics and Sociology}, 64(1):57--83,
  2005.

\bibitem{Bubeck2015}
S{\'e}bastien Bubeck.
\newblock Convex optimization: Algorithms and complexity.
\newblock {\em Foundations and Trends{\textregistered} in Machine Learning},
  8(3-4):231--357, 2015.

\bibitem{Burger2014}
Cobus Burger and Ada Jansen.
\newblock Increasing block tariff structures as a water subsidy mechanism in
  south africa: An exploratory analysis.
\newblock {\em Development Southern Africa}, 31(4):553--562, 2014.

\bibitem{Cowell2011}
Frank Cowell.
\newblock {\em Measuring Inequality}.
\newblock Oxford University Press, 2011.

\bibitem{Dalton1920}
Hugh Dalton.
\newblock The measurement of the inequality of incomes.
\newblock {\em The Economic Journal}, 30(119):348--361, 1920.

\bibitem{Debreu1960}
Gerard Debreu.
\newblock Topological methods in cardinal utility theory.
\newblock In Samuel~Karlin Kenneth J~Arrow and Patrick Suppes, editors, {\em
  Mathematical methods in the social sciences}. Stanford University Press,
  1960.

\bibitem{Dolan1998}
Paul Dolan.
\newblock The measurement of individual utility and social welfare.
\newblock {\em Journal of health economics}, 17(1):39--52, 1998.

\bibitem{Dolan2000}
Paul Dolan and Richard Cookson.
\newblock A qualitative study of the extent to which health gain matters when
  choosing between groups of patients.
\newblock {\em Health Policy}, 51(1):19--30, 2000.

\bibitem{Douceur2002}
John Douceur.
\newblock The sybil attack.
\newblock In {\em Proceedings of the International Workshop on Peer-to-Peer
  Systems}, 2002.

\bibitem{Eisenberg1961}
Edmund Eisenberg.
\newblock Aggregation of utility functions.
\newblock {\em Management Science}, 7(4):337--350, 1961.

\bibitem{Eisenberg1959}
Edmund Eisenberg and David Gale.
\newblock Consensus of subjective probabilities: The pari-mutuel method.
\newblock {\em The Annals of Mathematical Statistics}, 30(1):165--168, 1959.

\bibitem{Fisher1892}
Irving Fisher.
\newblock {\em Mathematical Investigations in the Theory of Value and Prices}.
\newblock Connecticut Academy of Arts and Sciences, 1892.

\bibitem{Garcia-Rubio2015}
Miguel~A Garc{\'\i}a-Rubio, Alberto Ruiz-Villaverde, and Francisco
  Gonz{\'a}lez-G{\'o}mez.
\newblock Urban water tariffs in spain: What needs to be done?
\newblock {\em Water}, 7(4):1456--1479, 2015.

\bibitem{Goel2019nash}
Ashish Goel, Reyna Hulett, and Benjamin Plaut.
\newblock Markets beyond nash welfare for leontief utilities.
\newblock In {\em Proceedings of the 15th Conference on Web and Internet
  Economics (WINE '19)}, 2019.

\bibitem{Kaneko1979}
Mamoru Kaneko and Kenjiro Nakamura.
\newblock The nash social welfare function.
\newblock {\em Econometrica}, 47(2):423--435, 1979.

\bibitem{Kelly1998}
Frank~P Kelly, Aman~K Maulloo, and David~KH Tan.
\newblock Rate control for communication networks: Shadow prices, proportional
  fairness and stability.
\newblock {\em Journal of the Operational Research society}, 49(3):237--252,
  1998.

\bibitem{kirszbraun1934}
Moj\.{z}esz~David Kirszbraun.
\newblock {\"U}ber die zusammenziehende und lipschitzsche transformationen.
\newblock {\em Fundamenta Mathematicae}, 22(1):77--108, 1934.

\bibitem{Klassert2018}
Christian Klassert, Katja Sigel, Bernd Klauer, and Erik Gawel.
\newblock Increasing block tariffs in an arid developing country: A
  discrete/continuous choice model of residential water demand in jordan.
\newblock {\em Water}, 10(3):248, 2018.

\bibitem{Lindholm1998}
Lars Lindholm, M{\aa}ns Ros{\'e}n, and Maria Emmelin.
\newblock How many lives is equity worth? a proposal for equity adjusted years
  of life saved.
\newblock {\em Journal of Epidemiology \& Community Health}, 52(12):808--811,
  1998.

\bibitem{Mo2000}
Jeonghoon Mo and Jean Walrand.
\newblock Fair end-to-end window-based congestion control.
\newblock {\em IEEE/ACM Transactions on networking}, 8(5):556--567, 2000.

\bibitem{Monteiro2011}
Henrique Monteiro and Catarina Roseta-Palma.
\newblock Pricing for scarcity? an efficiency analysis of increasing block
  tariffs.
\newblock {\em Water Resources Research}, 47(6), 2011.

\bibitem{Moulin2003}
Herv{\'e} Moulin.
\newblock {\em Fair Division and Collective Welfare}, chapter~3.
\newblock MIT press, 2003.

\bibitem{Nash1950}
John Nash.
\newblock The bargaining problem.
\newblock {\em Econometrica}, 18(2):155--162, 1950.

\bibitem{Neuman1990}
Edward Neuman.
\newblock Inequalities involving multivariate convex functions. ii.
\newblock {\em Proceedings of the American Mathematical Society},
  109(4):965--974, 1990.

\bibitem{Nisan2007}
Noam Nisan.
\newblock Introduction to mechanism design (for computer scientists).
\newblock In Noam Nisan, Tim Roughgarden, Eva Tardos, and Vijay~V Vazirani,
  editors, {\em Algorithmic Game Theory}, chapter~9, pages 209--242. Cambridge
  university press, 2007.

\bibitem{Ortun1996}
Vicente Ort{\'u}n.
\newblock Contradictions and trade-offs between efficiency and equity.
\newblock {\em Institut Borja de Bioetica, Allocation of Resources in Health
  Care. Barcelona: Fundaci{\'o}n MAPFRE Medicina}, pages 113--24, 1996.

\bibitem{Pigou1912}
Arthur~Cecil Pigou.
\newblock {\em Wealth and Welfare}.
\newblock Macmillan and Company, limited, 1912.

\bibitem{Plaut2019}
Benjamin Plaut.
\newblock Optimal nash equilibria for bandwidth allocation.
\newblock {\em CoRR}, abs/1904.03322, 2019.

\bibitem{Rawls2009}
John Rawls.
\newblock {\em A Theory of Justice}.
\newblock Harvard university press, 2009.

\bibitem{Renwick2000}
Mary~E. Renwick and Richard~D. Green.
\newblock Do residential water demand side management policies measure up? an
  analysis of eight california water agencies.
\newblock {\em Journal of environmental economics and management},
  40(1):37--55, 2000.

\bibitem{Royden1988}
Halsey~Lawrence Royden and Patrick Fitzpatrick.
\newblock {\em Real Analysis}, volume~32.
\newblock Macmillan New York, 1988.

\bibitem{Samuelson1947}
Paul~Anthony Samuelson.
\newblock {\em Foundations of Economic Analysis}.
\newblock Harvard University Press, 1947.

\bibitem{Sen1976}
Amartya Sen.
\newblock Welfare inequalities and rawlsian axiomatics.
\newblock {\em Theory and decision}, 7(4):243--262, 1976.

\bibitem{Sen1977}
Amartya Sen.
\newblock Social choice theory: A re-examination.
\newblock {\em Econometrica: journal of the Econometric Society}, 45(1):53--89,
  1977.

\bibitem{Spek2005}
Robartus~Johannes van~der Spek.
\newblock Commodity prices in babylon 385 - 61 bc.
\newblock \url{http://www.iisg.nl/hpw/babylon.php}, 2005.
\newblock Accessed December 14 2019.

\bibitem{Spek2003}
Robartus~Johannes van~der Spek and C.A. Mandemakers.
\newblock Sense and nonsense in the statistical approach of babylonian prices.
\newblock {\em Bibliotheca Orientalis}, 60(5):521--537, 2003.

\bibitem{Varian1974}
Hal Varian.
\newblock Equity, envy, and efficiency.
\newblock {\em Journal of Economic Theory}, 9(1):63--91, 1974.

\bibitem{Vazirani2007}
Vijay~V Vazirani.
\newblock Combinatorial algorithms for market equilibria.
\newblock In Noam Nisan, Tim Roughgarden, Eva Tardos, and Vijay~V Vazirani,
  editors, {\em Algorithmic Game Theory}, chapter~5, pages 103--134. Cambridge
  university press, 2007.

\bibitem{Wagstaff1991}
Adam Wagstaff.
\newblock Qalys and the equity-efficiency trade-off.
\newblock {\em Journal of health economics}, 10(1):21--41, 1991.

\bibitem{Walras1874}
Leon Walras.
\newblock {\em Elements of Pure Economics: Or, the Theory of Social Wealth.
  Translated by William Jaff\'{e}}.
\newblock Published for the American Economic Association and the Royal
  Economic Society, 1954.

\bibitem{Whittington1992}
Dale Whittington.
\newblock Possible adverse effects of increasing block water tariffs in
  developing countries.
\newblock {\em Economic Development and Cultural Change}, 41(1):75--87, 1992.

\bibitem{Wittrup-Jensen2008}
Kim~U Wittrup-Jensen and Kjeld~M Pedersen.
\newblock {\em An Empirical Assessment of the Person Trade-off: Valuation of
  Health, Framing Effects, and Estimation of Weights for Fairness}.
\newblock Syddansk Universitet, 2008.

\end{thebibliography}


\appendix

\section{Connections to Fisher markets}\label{sec:fisher}

The focus of this paper is on markets for quasilinear utilities, where agents can spend as much money as they want, and the amount spent is incorporated into their resulting utility. The other predominant market model assumes each agent $i$ has a finite budget $B_i$ of money to spend, and has no value for leftover money (in general, this implies that each agent $i$ spends exactly $B_i$). This is called the \emph{Fisher market} model.\footnote{There are also more general versions of this model that allow each agent's initial endowment to be goods instead of money (``exchange economies") and/or allow production (``Arrow-Debreu markets").} In this section, we explore connections between our results and the Fisher market model.

In the Fisher market model, each agent's utility $u_i(x_i)$ is simply $v_i(x_i)$. For pricing rule $p$, the Fisher market demand set is given by
\[
D_i^F(p) = \argmax_{x_i \in \bbrpos^m:\ p(x_i) \le B_i} v_i(x_i)
\]
We will reserve the notation $D_i(p)$ for the demand set in the quasilinear case, i.e., $D_i(p) = \argmax_{x_i \in \bbrpos^m} (v_i(x_i) - p(x_i))$.

For an allocation $\x$, agent budgets $\bigB = (B_1,\dots,B_n)$, and a pricing rule $p$, $(\x, \bigB, p)$ is a \emph{Fisher market Walrasian equilibrium} if (1) $x_i \in D_i^F(p)$ for all $i \in N$, and (2) $\sum_{i \in N} x_{ij} \le 1$ for all $j \in M$, and if good $j$ has nonzero cost, $\sum_{i \in N} x_{ij} = 1$.\footnote{Recall that good $j$ has nonzero cost for $j$ if there is a bundle $x_i$ such that $x_{i\ell} = 0$ for all $\ell \ne j$, but $p(x_i) > 0$.} These are the same two conditions for Walrasian equilibrium in quasilinear markets: the only change is the definition of the demand set. To distinguish, we will use the terms ``Fisher WE" and ``quasilinear WE".

\subsection{CES welfare maximization in Fisher markets}

In the quasilinear model, agents can express not only their relative values between goods, but also the absolute scale of their valuation (i.e., the ``intensity" of their preferences) by choosing how much money to spend. In contrast, agents in the Fisher market model spend exactly their budget, and so have no way to express the absolute scale of their valuation. This should make us pessimistic about the possibility of CES welfare maximization in the Fisher market model in general. Indeed, consider a single good and two agents with valuations $v_1(x) = x$, $v_2(x) = 2x$. For any $\rho > 0$, any optimal allocation $\x \in \Psi(\rho)$ has $x_2 > x_1$. But if $B_1 = B_2$, any Fisher market Walrasian equilibrium will always have $x_1 = x_2$, since both agents simply spend their entire budget on the single good.

However, in general we can convert a quasilinear WE to a Fisher WE if the agents' budgets are sized appropriately. Specifically, we need agent $i$'s budget to be exactly the amount she pays in the quasilinear WE:

\begin{theorem}\label{thm:fisher-we}
Suppose $(\x, p)$ is a quasilinear WE, and let $B_i = p(x_i)$. Then $(\x, \bigB, p)$ is a Fisher WE.
\end{theorem}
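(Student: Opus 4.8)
The plan is to check the two defining conditions of a Fisher WE directly, using the fact that $(\x, p)$ is already a quasilinear WE. The market-clearing condition for a Fisher WE is word-for-word identical to the market-clearing condition for a quasilinear WE (both only constrain the allocation $\x$ and the pricing rule $p$, not the demand sets), so that half is inherited for free. All the work is in showing $x_i \in D_i^F(p)$ for each agent $i$, i.e., that $x_i$ maximizes $v_i(\cdot)$ over the budget set $\{y \in \bbrpos^m : p(y) \le B_i\}$ when $B_i = p(x_i)$.

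First I would note that $x_i$ is feasible for this budget constraint, since $p(x_i) \le B_i = p(x_i)$ holds trivially. Then I would argue by contradiction: suppose some $y \in \bbrpos^m$ satisfies $p(y) \le B_i = p(x_i)$ and $v_i(y) > v_i(x_i)$. Chaining these two inequalities gives
\[
v_i(y) - p(y) \;\ge\; v_i(y) - p(x_i) \;>\; v_i(x_i) - p(x_i),
\]
where the first step uses $p(y) \le p(x_i)$ and the second uses $v_i(y) > v_i(x_i)$. This contradicts $x_i \in D_i(p)$, i.e., that $x_i$ maximizes quasilinear utility $v_i(\cdot) - p(\cdot)$. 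Hence no such $y$ exists, so $x_i \in D_i^F(p)$.

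There is no real obstacle here; the only thing worth being careful about is the direction of the argument. The point is that enlarging the allowed spending from ``at most $p(x_i)$'' to ``unrestricted'' (the quasilinear demand set) can only make more bundles available, and even with that larger freedom $x_i$ was already optimal under the penalized objective $v_i - p$; restricting attention to bundles no more expensive than $x_i$ therefore cannot produce a strictly better valuation. Combining the verified demand condition with the inherited market-clearing condition shows $(\x, \bigB, p)$ is a Fisher WE, completing the proof.
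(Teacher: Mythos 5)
Your proof is correct and follows essentially the same argument as the paper: verify the demand condition by contradiction via the same two-step inequality chain (just composed in the opposite order), and observe that market clearing is definitionally identical in both models.
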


\begin{proof}
For all $i \in N$, $x_i$ is affordable to agent $i$ under $p$ by definition of $B_i$. Suppose there were another bundle $y_i$ such that $p(y_i) \le B_i$ but $v_i(y_i) > v_i(x_i)$. That would contradict $x_i \in D_i(p)$ for the quasilinear case, since $u_i(y_i) = v_i(y_i) - p(y_i) > v_i(x_i) - p(y_i) \ge v_i(x_i) - p(x_i) = u_i(x_i)$. Therefore $x_i \in D_i^F(p)$ for all $i \in N$. Furthermore, the market clearing conditions for Fisher WE and quasilinear WE are identical. We conclude that $(\x, \bigB, p)$ is a Fisher WE.
\end{proof}

Combining the above result with Theorem~\ref{thm:main} gives us the following corollary for CES welfare maximization:

\begin{corollary}\label{cor:fisher-we}
Assume each $v_i$ is homogeneous of degree $r$, concave, and differentiable. Let $\rho \in (0, 1]$, and $p(x_i) = \rho r^{\frac{\rho -1}{\rho}} (\sum_{j \in M} q_j x_{ij})^{1/\rho}$, where $q_1,\dots, q_m$ are optimal Lagrange multipliers for Program~\ref{pro:ces}. For $\x \in \Psi(\rho)$, let $B_i = p(x_i)$ for all $i \in N$, Then $(\x, \bigB, p)$ is a Fisher WE.
\end{corollary}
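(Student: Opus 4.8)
The plan is to derive Corollary~\ref{cor:fisher-we} as an immediate consequence of Theorem~\ref{thm:main} and Theorem~\ref{thm:fisher-we}. The chain of reasoning is short: Theorem~\ref{thm:main} tells us that under the stated assumptions on the $v_i$ (homogeneous of degree $r$, concave, differentiable) and for $\rho \in (0,1]$, if $\x \in \Psi(\rho)$, then there exist $q_1,\dots,q_m \in \bbrpos$ — precisely optimal Lagrange multipliers for Program~\ref{pro:ces} — such that for the pricing rule $p(x_i) = \rho r^{\frac{\rho-1}{\rho}} (\sum_{j \in M} q_j x_{ij})^{1/\rho}$, the pair $(\x, p)$ is a quasilinear WE. This is exactly the pricing rule named in the corollary's hypothesis.

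First I would invoke Theorem~\ref{thm:main} to conclude that $(\x, p)$ is a quasilinear WE. Then I would set $B_i = p(x_i)$ for each $i \in N$, as the corollary prescribes, and apply Theorem~\ref{thm:fisher-we} verbatim: it states that whenever $(\x, p)$ is a quasilinear WE and $B_i = p(x_i)$, the triple $(\x, \bigB, p)$ is a Fisher WE. Combining the two gives the claim directly. There is essentially nothing further to check — the budgets are well-defined since $p(x_i)$ is a finite nonnegative real for any valid allocation, and the market-clearing condition carries over unchanged because (as noted in the proof of Theorem~\ref{thm:fisher-we}) the clearing conditions for Fisher and quasilinear WE are identical.

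There is no real obstacle here; the corollary is a bookkeeping combination of two already-established results, and the only mild subtlety is purely notational — making sure the pricing rule in the corollary's hypothesis is identified with the one produced by Theorem~\ref{thm:main} (it is, term for term, including the constant $\rho r^{\frac{\rho-1}{\rho}}$), and that the $q_1,\dots,q_m$ being optimal Lagrange multipliers for Program~\ref{pro:ces} is consistent between the two statements (it is, since Theorem~\ref{thm:main} furnishes exactly such multipliers). So the proof is a two-line application: apply Theorem~\ref{thm:main}, then Theorem~\ref{thm:fisher-we}.
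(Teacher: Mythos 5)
Your proof is correct and follows the same route the paper intends: the paper itself introduces Corollary~\ref{cor:fisher-we} with the single sentence ``Combining the above result [Theorem~\ref{thm:fisher-we}] with Theorem~\ref{thm:main} gives us the following corollary,'' with no separate proof given. Your two-step argument (Theorem~\ref{thm:main} yields the quasilinear WE, then Theorem~\ref{thm:fisher-we} with $B_i = p(x_i)$ yields the Fisher WE) is exactly that combination.
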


Perhaps the more interesting connection relates to the welfare function being optimized. In the case of linear pricing, the Fisher market Walrasian equilibria are exactly the budget-weighted maximum Nash welfare allocations.\footnote{Recall that Nash welfare corresponds to $\rho = 0$, and the budget-weighted Nash welfare of an allocation $\x$ is $\prod_{i \in N} v_i(x_i)^{B_i}$.} One natural question is whether the Fisher market equilibria from Theorem~\ref{thm:fisher-we} also optimize a budget-weighted CES welfare function. We answer this in the affirmative. Recall that we define $\Phi_\bigB(\rho, \x) = \big(\sum_{i \in N} B_i v_i(x_i)^{\rho}\big)^{1/\rho}$, and $\Psi_\bigB(\rho) = \argmax_\x \Phi_\bigB(\rho, \x)$.

\begin{lemma}\label{lem:fisher-double-opt}
Assume each $v_i$ is concave and differentiable. Let $\xprime$ be any allocation, let $a_i = v_i(x_i')$ for each $i \in N$, and let $\rho \in (0, 1]$. Then $\xprime \in \Psi(\rho)$ if and only if $\xprime \in \Psi_\A(\rho - 1)$.
\end{lemma}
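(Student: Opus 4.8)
Both sets are the optimal solution sets of convex programs over the same feasible region $P=\{\x\in\bbrpos^{n\times m}:\sum_i x_{ij}\le1\ \forall j\}$: by definition $\Psi(\rho)=\argmax_{\x\in P}\frac1\rho\sum_i v_i(x_i)^\rho$ (this is Program~\ref{pro:ces} with $\A=\one$), and $\Psi_\A(\rho-1)=\argmax_{\x\in P}\frac{1}{\rho-1}\sum_i a_i v_i(x_i)^{\rho-1}$. The first objective is concave because $t\mapsto\frac1\rho t^\rho$ is increasing and concave on $\bbrpos$; the second is concave because for $\rho\in(0,1)$ the map $t\mapsto\frac{1}{\rho-1}t^{\rho-1}$ is increasing and concave on $\bbrspos$ (its second derivative is $(\rho-2)t^{\rho-3}<0$), so composing with the concave $v_i$ and taking a nonnegative combination preserves concavity; for $\rho=1$ the second program is weighted Nash welfare and the same goes through with $\log t$ in place of $t^{\rho-1}$. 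Exactly as in the proof of Theorem~\ref{thm:main}, the constraints are linear, so in both programs the KKT conditions are necessary and sufficient for optimality.

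I would then dispose of a positivity issue. Negative exponents appear, so the argument needs $v_i(x_i')>0$, i.e.\ $a_i>0$, for every $i$; this is the case of interest, and it holds automatically when $\x'\in\Psi(\rho)$ with $\rho\in(0,1)$. Indeed, by concavity and $v_i$ nonzero, the directional derivative of $v_i$ at $\zero$ toward any bundle of positive value is itself positive, so $\nabla v_i(\zero)$ has a positive coordinate; then $\rho-1<0$ forces the objective $\frac1\rho\sum_k v_k(x_k)^\rho$ to have an infinite one-sided directional derivative toward any agent with value $0$, contradicting optimality. So we may take all $a_i>0$.

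The core is then a single substitution. The stationarity condition for Program~\ref{pro:ces} (with $\A=\one$) at a feasible $\x$ reads: there is $\q\in\bbrpos^m$ with $v_i(x_i)^{\rho-1}\frac{\partial v_i(x_i)}{\partial x_{ij}}\le q_j$ for all $i,j$, equality whenever $x_{ij}>0$, together with complementary slackness $q_j(\sum_i x_{ij}-1)=0$. The stationarity condition for the $\Psi_\A(\rho-1)$ program at $\x$ reads: there is $\q'\in\bbrpos^m$ with $a_i v_i(x_i)^{\rho-2}\frac{\partial v_i(x_i)}{\partial x_{ij}}\le q_j'$, equality whenever $x_{ij}>0$, together with $q_j'(\sum_i x_{ij}-1)=0$. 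Evaluating the second at $\x=\x'$ and substituting $a_i=v_i(x_i')$, the left-hand side becomes $v_i(x_i')\cdot v_i(x_i')^{\rho-2}\frac{\partial v_i(x_i')}{\partial x_{ij}}=v_i(x_i')^{\rho-1}\frac{\partial v_i(x_i')}{\partial x_{ij}}$, which is exactly the left-hand side of the first condition; the complementary-slackness parts are identical. Hence $\x'$ satisfies the KKT conditions of the first program with some $\q$ if and only if it satisfies those of the second with $\q'=\q$, and by necessity and sufficiency of KKT in both convex programs, $\x'\in\Psi(\rho)\iff\x'\in\Psi_\A(\rho-1)$. For $\rho=1$ the identical cancellation occurs: the stationarity condition of weighted Nash, $\frac{a_i}{v_i(x_i)}\frac{\partial v_i(x_i)}{\partial x_{ij}}\le q_j'$, collapses at $\x'$ to $\frac{\partial v_i(x_i')}{\partial x_{ij}}\le q_j'$, which is the stationarity condition of the utilitarian program.

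The genuinely non-mechanical steps are the concavity bookkeeping for the $\rho-1<0$ objective and the positivity argument of the second paragraph (which is what legitimizes the exponents $\rho-1$ and $\rho-2$); I expect the latter to be the main thing to get right. Everything else reduces to the exponent identity $a_i\cdot t^{\rho-2}=t^{\rho-1}$ at $t=v_i(x_i')$.
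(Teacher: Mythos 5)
Your proposal is essentially the paper's own proof: both establish the equivalence by writing the KKT stationarity conditions for Program~\ref{pro:ces} at exponents $\rho$ and $\rho-1$, substituting $a_i = v_i(x_i')$ so the $(\rho-1)$-exponent condition $a_i v_i^{\rho-2}\partial_j v_i \le q_j$ collapses to the $\rho$-exponent condition $v_i^{\rho-1}\partial_j v_i \le q_j$ with the same multipliers, and then handling $\rho=1$ separately via the Eisenberg--Gale (weighted log) program. Your extra remarks on concavity of the $\rho-1$ objective and on the positivity of $a_i$ are sound and in fact more careful than the paper, which silently assumes $a_i>0$; note though that your positivity argument covers only the direction $\x'\in\Psi(\rho)\Rightarrow a_i>0$, so in the converse direction (showing $\x'\in\Psi_\A(\rho-1)\Rightarrow\x'\in\Psi(\rho)$) you are, like the paper, effectively adding $a_i>0$ as a hypothesis rather than deriving it.
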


\begin{proof}
When $\rho = 1$, $\rho - 1 = 0$, so Program~\ref{pro:ces} does not apply, and we must handle this case separately. We first consider $\rho \ne 1$. The Lagrangian for Program~\ref{pro:ces} for $\Psi_\A(\rho-1)$ is $L(\x, \q) = \frac{1}{\rho-1} \sum_{i \in N} a_i v_i(x_i)^{\rho-1} - \sum_{j \in M} q_j (\sum_{i \in N} x_{ij} - 1)$. The KKT conditions imply that $\x \in \Psi_\A(\rho-1)$ if and only if there exist Lagrange multipliers $q_1,\dots, q_m$ such that:
\begin{enumerate}
\item Stationarity: $\mfrac{\partial L(\x, \q)}{\partial x_{ij}} = a_i v_i(x_i)^{\rho - 2} \mfrac{\partial v_i(x_i)}{\partial x_{ij}} \le 0$ for all $i,j$.\footnote{Note that since $\xprime$ is a fixed allocation, $a_i$ is just some constant, so the differentiation does not affect it.} Furthermore, if $x_{ij} > 0$, the inequality holds with equality.
\item Complementary slackness: for all $j \in M$, either $\sum_{i \in N} x_{ij} = 1$, or $q_j = 0$.
\end{enumerate}

For Program~\ref{pro:ces} for $\Psi(\rho)$, as before we have $L'(\x, \q) = \frac{1}{\rho} \sum_{i \in N} v_i(x_i)^\rho - \sum_{j \in M} q_j (\sum_{i \in N} x_{ij} - 1)$. Thus the KKT conditions imply that $\x \in \Psi(\rho)$ if and only if there exist $q_1',\dots,q_m' \in \bbrpos$ such that (1) $v_i(x_i)^{\rho - 1} \mfrac{\partial v_i(x_i)}{\partial x_{ij}} \le q_j$ for all $i,j$, and when $x_{ij} > 0$, the inequality holds with equality, and (2) for all $j \in M$, either $\sum_{i \in N} x_{ij} = 1$, or $q_j = 0$. Note that if $q_j = q_j'$ for all $j \in M$, the complementary slackness conditions become equivalent.


Next, for $\x = \xprime$ we have
\[
v_i(x_i')^{\rho - 1} \frac{\partial v_i(x_i')}{\partial x'_{ij}} = v_i(x'_i) v_i(x'_i)^{\rho - 2} \frac{\partial v_i(x'_i)}{\partial x'_{ij}} = a_i v_i(x'_i)^{\rho - 2} \frac{\partial v_i(x'_i)}{\partial x'_{ij}}
\]
Therefore for given $q_j$, we have $q_j \ge v_i(x'_i)^{\rho - 1} \mfrac{\partial v_i(x'_i)}{\partial x'_{ij}}$ if and only if $q_j \ge a_i v_i(x'_i)^{\rho - 2} \mfrac{\partial v_i(x'_i)}{\partial x'_{ij}}$, and $q_j = v_i(x'_i)^{\rho - 1} \mfrac{\partial v_i(x'_i)}{\partial x'_{ij}}$ if and only if $q_j \ge a_i v_i(x'_i)^{\rho - 2} \mfrac{\partial v_i(x'_i)}{\partial x'_{ij}}$.

Now suppose $\xprime \in \Psi(\rho)$. Then there exist $q_1,\dots, q_m \in \bbrpos$ that satisfy both stationarity and complementary slackness. Then as we showed above, $\xprime$ and $q_1,\dots, q_m$ satisfy stationarity for $\Psi_\A(\rho-1)$. Furthermore, the complementary slackness conditions are equivalent, so we have $\xprime \in \Psi_\A(\rho-1)$.

Similarly, suppose $\xprime \in \Psi_\A(\rho-1)$. Then there exist $q_1,\dots, q_m$ satisfying stationarity and complementary slackness, so the same $q_1,\dots, q_m$ along with $\xprime$ satisfy the KKT conditions for $\Psi(\rho)$. Therefore $\Psi(\rho)$, and we conclude that $\xprime \in \Psi(\rho)$ if and only if $\xprime \in \Psi_\A(\rho-1)$ for $\rho \ne 1$.

All of the above was for $\rho \ne 1$; it remains to handle the case of $\rho = 1$. In this case, we can use the same KKT conditions for $\Psi(\rho)$, but must use a different convex program for $\Psi_\A(\rho-1)$. Consider the following convex program for maximizing Nash welfare (i.e., CES welfare for $\rho = 0$):
\begin{alignat}{2}
\max\limits_{\x \in \bbrpos^{n\times m}} &\ \sum_{i \in N} a_i \log v_i(x_i) \label{pro:eg} \\ 
s.t.\ &\ \sum\limits_{i \in N} x_{ij}\leq 1\quad &&\ \forall j \in M  \nonumber
\end{alignat}
This is known as the Eisenberg-Gale program~\cite{Eisenberg1961,Eisenberg1959}. In this case, the stationarity condition requires that $\mfrac{\partial}{\partial x_{ij}} a_i \log v_i(x_i) = a_i v_i(x_i)^{-1} \mfrac{\partial v_i(x_i)}{\partial x_{ij}} \le q_j$ for all $i,j$, and when $x_{ij} > 0$, the inequality holds with equality. Since $\rho = 1$ here, we have $\rho - 2 = - 1$. Thus the stationarity condition for $\Psi_\A(\rho - 1)$ requires that $a v_i(x_i)^{\rho-2} \mfrac{\partial v_i(x_i)}{\partial x_{ij}} \le q_j$ for all $i,j$ (and if $x_{ij} > 0$, this holds with equality). This is exactly what we had above, and since we are using the same KKT conditions for $\Psi(\rho)$, this case reduces to the case for $\rho \ne 1$. Therefore for $\rho = 1$, $\xprime \in \Psi(\rho)$ if and only if $\xprime \in \Psi_\A(\rho-1)$.
\end{proof}

Combining Theorem~\ref{thm:fisher-we} and Lemma~\ref{lem:fisher-double-opt}, we get:
\begin{theorem}\label{thm:full}
Assume each $v_i$ is homogeneous of degree $r$, concave, and differentiable, let $\rho \in (0, 1]$, let $q_1,\dots,q_m \in \bbrpos$, and let $p(x_i) = \rho \big(\sum_{j \in M} q_j x_{ij}\big)^{1/\rho}$. Given $\x \in \Psi(\rho)$, let $B_i = p(x_i)$. Then all of the following hold:
\begin{enumerate}
\item $(\x, p)$ is a quasilinear WE.
\item $(\x, \bigB, p)$ is a Fisher WE.
\item $\x \in \Psi_\bigB(\rho - 1)$
\end{enumerate}
\end{theorem}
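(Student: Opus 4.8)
The plan is to deduce all three conclusions by chaining together results already in hand: Theorem~\ref{thm:main} gives Part~1, Theorem~\ref{thm:fisher-we} gives Part~2, and Lemma~\ref{lem:fisher-double-opt} — together with the price-equals-value identity of Corollary~\ref{cor:utility-price} — gives Part~3. Parts~1 and 2 will be essentially immediate; all the work is in Part~3.

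For Part~1, I would invoke Theorem~\ref{thm:main}: since $\x \in \Psi(\rho)$, that theorem supplies multipliers $q_1,\dots,q_m$ (the optimal Lagrange multipliers for Program~\ref{pro:ces}, up to the harmless reparametrization that absorbs the factor $r^{(\rho-1)/\rho}$ into the $q_j$'s) for which $p(x_i) = \rho\big(\sum_j q_j x_{ij}\big)^{1/\rho}$ makes $(\x, p)$ a quasilinear WE. For Part~2, I would then apply Theorem~\ref{thm:fisher-we} to this quasilinear WE with $B_i = p(x_i)$, concluding directly that $(\x, \bigB, p)$ is a Fisher WE.

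For Part~3, the crux is to pin down each budget $B_i$ in terms of $v_i(x_i)$. Since $x_i \in D_i(p)$, I would rerun the argument behind Corollary~\ref{cor:utility-price}: stationarity of agent $i$'s unconstrained demand program at $x_i$ gives $\partial v_i(x_i)/\partial x_{ij} = \partial p(x_i)/\partial x_{ij}$ whenever $x_{ij} > 0$; multiplying by $x_{ij}$, summing over $j$ (the $x_{ij}=0$ terms vanish), and applying Euler's Theorem to $v_i$ (homogeneous of degree $r$) and to $p$ (homogeneous of degree $1/\rho$) yields $r\, v_i(x_i) = \tfrac{1}{\rho}\,p(x_i)$, so $B_i = p(x_i) = \rho r\, v_i(x_i)$. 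This step never uses the leading constant of $p$, so passing from the $c=1$ normalization in Corollary~\ref{cor:utility-price} to the constant $\rho$ here causes no trouble. Setting $a_i = v_i(x_i)$, this says $\bigB = (\rho r)\,\A$; since scaling every weight of a CES welfare function by a common positive scalar only rescales the objective by a positive constant, it leaves the maximizer set unchanged, so $\Psi_\bigB(\rho-1) = \Psi_\A(\rho-1)$ (and, in the $\rho=1$ case, $\Psi_\bigB(0) = \Psi_\A(0)$, the relevant objective there being budget-weighted Nash welfare). Finally Lemma~\ref{lem:fisher-double-opt}, applied with $\xprime = \x$, states that $\x \in \Psi(\rho)$ if and only if $\x \in \Psi_\A(\rho-1)$; since $\x \in \Psi(\rho)$ by hypothesis, $\x \in \Psi_\A(\rho-1) = \Psi_\bigB(\rho-1)$, which is Part~3.

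The main obstacle is bookkeeping rather than any new idea: I have to reconcile the leading-constant normalizations of the pricing rule across Theorem~\ref{thm:main}, Corollary~\ref{cor:utility-price}, and the present statement, and I have to handle the boundary case $\rho = 1$ carefully, where $\rho - 1 = 0$ so that ``CES welfare of order $\rho-1$'' means Nash welfare and Lemma~\ref{lem:fisher-double-opt} is routed through the Eisenberg--Gale program rather than Program~\ref{pro:ces}.
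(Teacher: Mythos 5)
Your proposal is correct and follows exactly the paper's decomposition: Theorem~\ref{thm:main} for Part~1, Theorem~\ref{thm:fisher-we} for Part~2, and Corollary~\ref{cor:utility-price} together with Lemma~\ref{lem:fisher-double-opt} and the common-scaling invariance of $\Psi_\A(\cdot)$ for Part~3. Your extra care about reconciling the leading-constant normalizations (absorbing $r^{(\rho-1)/\rho}$ into the $q_j$'s, and noting that Corollary~\ref{cor:utility-price}'s argument is insensitive to the constant factor) and about the $\rho=1$ routing through Eisenberg--Gale is a clarification the paper glosses over, but it does not change the substance of the argument.
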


\begin{proof}
The first and second conditions hold by Theorems~\ref{thm:main} and \ref{thm:fisher-we}, respectively. Then Corollary~\ref{cor:utility-price} implies that $p(x_i) = r\rho v_i(x_i)$. Let $a_i = v_i(x_i) = \mfrac{B_i}{r\rho}$. Thus by Lemma~\ref{lem:fisher-double-opt}, we have $\x \in \Psi_\A(\rho-1)$. Since scaling all agents' multipliers by the same factor does not affect $\Psi_\A(\rho-1)$, we have $\x \in \Psi_{r\rho \A}(\rho-1) = \Psi_\bigB(\rho-1)$, as required.
\end{proof}

It is worth noting that the special case of Theorem~\ref{thm:fisher-we} for $\rho = 1$ and Leontief utilities with $w_{ij} \in \{0,1\}$\footnote{This is also known as the bandwidth allocation setting, where each good represents a link in a network, and agent $i$ has $w_{ij} = 1$ for every link $j$ on a fixed path (and $w_{ij} = 0$ otherwise).} is implied by the work of Kelly et al.~\cite{Kelly1998}.

\section{CES welfare maximization for Leontief valuations}\label{sec:leontief}

We say that $v_i$ is \emph{Leontief} if there exist weights $w_1,\dots,w_m \in \bbrpos$ such that
\[
v_i(x_i) = \min_{j:\ w_{ij} \ne 0} \frac{x_{ij}}{w_{ij}}
\]
Leontief valuations are not differentiable, and so Theorem~\ref{thm:main} does not apply. In this section, we handle Leontief valuations as a special case. Although there are many non-differentiable valuations we could consider, there is substantial related work on Leontief valuations (\cite{Goel2019nash,Plaut2019}), so we find it worthwhile to show that our result does indeed extend to this case.

Recall Program~\ref{pro:ces}:
\begin{alignat*}{2}
\max\limits_{\x \in \bbrpos^{n\times m}} &\ \frac{1}{\rho}\sum_{i \in N} v_i(x_i)^\rho \\
s.t.\ &\ \sum\limits_{i \in N} x_{ij}\leq 1\quad &&\ \forall j \in M  \nonumber
\end{alignat*}

We will work with a specialized version of this for Leontief utilities:
\begin{alignat}{2}
\max\limits_{\x \in \bbrpos^{n\times m}, \bfa \in \bbrpos^m} \frac{1}{\rho}\sum_{i \in N} \alpha_i^\rho &\ \label{pro:leontief}\\
s.t.\ w_{ij} \alpha_i \le&\ x_{ij}\quad &&\ \forall i \in N, j \in M  \nonumber\\
\sum\limits_{i \in N} x_{ij}\le&\ 1\quad &&\ \forall j \in M  \nonumber
\end{alignat}
where we use $\bfa$ to denote the vector $(\alpha_1,\dots,\alpha_n) \in \bbrpos^n$.

Also recall each agent's demand set $D_i(p) = \argmax_{x_i \in \bbrpos^m}\ \big( v_i(x_i) - p(x_i)\big)$. Similarly to Program~\ref{pro:leontief}, we consider the following equivalent (specialized) convex program for agent $i$'s demand set:
\begin{alignat}{2}
\max\limits_{x_i \in \bbrpos^m, \alpha_i \in \bbrpos} \big(\alpha_i - p(x_i)\big) &\ \label{pro:leontief-demand}\\
s.t.\ w_{ij} \alpha_i \le&\ x_{ij}\quad &&\ \forall j \in M  \nonumber
\end{alignat}

\begin{theorem}\label{thm:leontief}
Assume each $v_i$ is Leontief with weights $w_{i1},\dots,w_{im}$. Then for any $\rho \in (0, 1]$ and any allocation $\x$, we have $\x \in \Psi(\rho)$ if and only if there exist $q_1,\dots, q_m\in \bbrpos$ such that for the pricing rule $p(x_i) = \rho (\sum_{j \in M} q_j x_{ij})^{1/\rho}$, $(\x, p)$ is a WE. Furthermore, $q_1,\dots, q_m$ are optimal Lagrange multipliers (for the $\sum_{i \in N} x_{ij} \le 1$ constraints) for Program~\ref{pro:leontief}.
\end{theorem}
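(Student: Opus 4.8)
The plan is to follow the three-part structure of the proof of Theorem~\ref{thm:main}, with the auxiliary variables $\alpha_i$ in Programs~\ref{pro:leontief} and~\ref{pro:leontief-demand} playing the role that Euler's Theorem for $v_i$ played there (that theorem is unavailable, since Leontief valuations are not differentiable). As a preliminary step I would record that Program~\ref{pro:leontief} is equivalent to Program~\ref{pro:ces} and that Program~\ref{pro:leontief-demand} computes $D_i(p)$: in each, the objective is strictly increasing in $\alpha_i$ while $\alpha_i$ occurs only in the constraints $w_{ij}\alpha_i \le x_{ij}$, so at any optimum $\alpha_i = \min_{j:\,w_{ij}\ne 0} x_{ij}/w_{ij} = v_i(x_i)$. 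I would also dispatch the boundary case $v_i(x_i)=0$: for $\rho<1$ the derivative $\alpha_i^{\rho-1}$ of $\tfrac1\rho\alpha_i^\rho$ blows up as $\alpha_i\to 0^+$, forcing $\alpha_i>0$ at any optimum of Program~\ref{pro:leontief}, and since $1/\rho\ge 1$ a bundle $\varepsilon w_{i\cdot}$ costs $\rho\varepsilon^{1/\rho}(\sum_j q_j w_{ij})^{1/\rho}=o(\varepsilon)$ against the value $\varepsilon$ it provides, so every agent obtains positive value at any WE; for $\rho=1$ the pricing rule is linear and $v_i(x_i)^{1-\rho}=1$, so no division by $v_i(x_i)$ ever arises.

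Next I would write out the KKT conditions (using the same sign convention for the nonnegativity-constrained variables as in the proof of Theorem~\ref{thm:main}). For Program~\ref{pro:leontief}, with multipliers $\lambda_{ij}\ge 0$ on $w_{ij}\alpha_i-x_{ij}\le 0$ and $q_j\ge 0$ on $\sum_i x_{ij}-1\le 0$: stationarity in $x_{ij}$ gives $\lambda_{ij}\le q_j$, with equality when $x_{ij}>0$; stationarity in $\alpha_i$ gives $\sum_j w_{ij}\lambda_{ij}=\alpha_i^{\rho-1}$ (using $\alpha_i>0$); complementary slackness gives $\lambda_{ij}>0\Rightarrow x_{ij}=w_{ij}\alpha_i$ and $q_j>0\Rightarrow\sum_i x_{ij}=1$. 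For Program~\ref{pro:leontief-demand}, with multipliers $\mu_{ij}\ge 0$: stationarity in $x_{ij}$ gives $\mu_{ij}\le \partial p(x_i)/\partial x_{ij}$, with equality when $x_{ij}>0$; stationarity in $\alpha_i$ gives $\sum_j w_{ij}\mu_{ij}=1$; and $\mu_{ij}>0\Rightarrow x_{ij}=w_{ij}\alpha_i$. As before, $\partial p(x_i)/\partial x_{ij}=q_j\big(\sum_\ell q_\ell x_{i\ell}\big)^{(1-\rho)/\rho}$.

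The crux is the bridge identity $\sum_j q_j x_{ij}=v_i(x_i)^\rho$, equivalently $p(x_i)=\rho v_i(x_i)$ (the Leontief analogue of Corollary~\ref{cor:utility-price}), which I would prove in both directions by the same computation. If $(\x,\bfa)$ satisfies the KKT conditions of Program~\ref{pro:leontief}, then $\sum_j q_j x_{ij}=\sum_{j:\,x_{ij}>0}\lambda_{ij}x_{ij}$; every surviving term has $\lambda_{ij}>0$ and hence $x_{ij}=w_{ij}\alpha_i$, while terms with $\lambda_{ij}=0$ or $x_{ij}=0$ contribute $\lambda_{ij}w_{ij}\alpha_i=0$ as well (if $\lambda_{ij}>0$ and $x_{ij}=0$ then $w_{ij}\alpha_i=0$), so the sum equals $\alpha_i\sum_j w_{ij}\lambda_{ij}=\alpha_i\cdot\alpha_i^{\rho-1}=v_i(x_i)^\rho$. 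Symmetrically, if $(\x,p)$ is a WE then Euler's Theorem applied to $p$ (homogeneous of degree $1/\rho$) gives $\sum_j x_{ij}\,\partial p(x_i)/\partial x_{ij}=\tfrac1\rho p(x_i)$, while demand stationarity and the same complementary-slackness accounting give $\sum_j x_{ij}\,\partial p(x_i)/\partial x_{ij}=\alpha_i\sum_j w_{ij}\mu_{ij}=v_i(x_i)$; hence $p(x_i)=\rho v_i(x_i)$ and again $\sum_j q_j x_{ij}=v_i(x_i)^\rho$. In either case this yields $\partial p(x_i)/\partial x_{ij}=q_j v_i(x_i)^{1-\rho}$.

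Finally I would translate multipliers across the two programs. For the forward direction, given $\x\in\Psi(\rho)$ take the optimal $\lambda,\q$ for Program~\ref{pro:leontief} and set $\mu_{ij}:=\lambda_{ij}v_i(x_i)^{1-\rho}$; using the bridge identity one checks $\mu_{ij}\le\partial p(x_i)/\partial x_{ij}$ with equality when $x_{ij}>0$, $\sum_j w_{ij}\mu_{ij}=v_i(x_i)^{1-\rho}v_i(x_i)^{\rho-1}=1$, and that complementary slackness transfers since $\lambda_{ij}$ and $\mu_{ij}$ share the same zero pattern, so $x_i\in D_i(p)$; market clearing follows from supply-constraint complementary slackness together with the fact that good $j$ has nonzero cost iff $q_j>0$, so $(\x,p)$ is a WE. For the converse, given a WE $(\x,p)$ take the demand multipliers $\mu$ and set $\lambda_{ij}:=\mu_{ij}v_i(x_i)^{\rho-1}$; the mirror-image checks show $(\x,\bfa,\lambda,\q)$ satisfies the KKT conditions of Program~\ref{pro:leontief}, whence $\x\in\Psi(\rho)$ and $q_1,\dots,q_m$ are optimal Lagrange multipliers for its supply constraints. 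The only genuinely new work beyond transcribing the proof of Theorem~\ref{thm:main} is the bridge identity — in particular getting the complementary-slackness bookkeeping exactly right (that $\lambda_{ij}>0$ forces $x_{ij}=w_{ij}\alpha_i$ even when $w_{ij}=0$, and that all other terms vanish) and handling the $v_i(x_i)=0$ boundary case before dividing by $v_i(x_i)^{1-\rho}$; I expect that to be the main obstacle.
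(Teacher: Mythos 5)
Your proposal is correct and follows essentially the same route as the paper's own proof: establish that $\alpha_i = v_i(x_i)$ at any optimum of the auxiliary programs, write out the KKT conditions for Programs~\ref{pro:leontief} and~\ref{pro:leontief-demand}, show $\alpha_i > 0$, and translate Lagrange multipliers by the factor $\alpha_i^{1-\rho}$ (your $\mu_{ij} = \lambda_{ij}v_i(x_i)^{1-\rho}$ is exactly the paper's $\lambda'_{ij} = \alpha_i^{1-\rho}\lambda_{ij}$). The one organizational difference is that you factor out the identity $p(x_i) = \rho v_i(x_i)$ as an explicit ``bridge'' lemma proved in both directions before the multiplier translation, whereas the paper embeds the equivalent complementary-slackness computations inline in its Parts 1 and 2; this is a cleaner packaging of the same argument and also makes the role of Euler's Theorem (applied only to $p$, not to $v_i$) explicit, but it is not a different proof.
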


\begin{proof}
We first claim that in an optimal solution $\x,\bfa$ to either Program~\ref{pro:leontief} or Program~\ref{pro:leontief-demand}, we have $v_i(x_i) = \alpha_i$ for all $i \in N$: that is, that these programs are doing what we want them to. To see this, note that $\alpha_i \le x_{ij} / w_{ij}$ for all $j$ with $w_{ij} \ne 0$, so $\alpha_i \le v_i(x_i)$. Furthermore, at least one constraint involving $\alpha_i$ must be tight: otherwise, we could increase $\alpha_i$ and thus the objective value. In particular, we must have $\alpha_i = \min_{j:\ w_{ij} \ne 0} \frac{x_{ij}}{w_{ij}} = v_i(x_i)$. Thus Program~\ref{pro:leontief-demand} is indeed maximizing $v_i(x_i) - p(x_i)$, so $x_i \in D_i(p)$ if and only if $(x_i, \alpha_i)$ is optimal for Program~\ref{pro:leontief-demand} (for some $\alpha_i$). Similarly, Program~\ref{pro:leontief} is indeed maximizing $\frac{1}{\rho}\sum_{i \in N} v_i(x_i)^\rho$ subject to $\sum_{i \in N} x_{ij} \le 1$ for all $j \in M$, so $(\x, \bfa)$ is optimal for Program~\ref{pro:leontief} if and only if $\x$ is optimal for Program~\ref{pro:ces}. Therefore $\x \in \Psi(\rho)$ if and only if $(\x, \bfa)$ is optimal for Program~\ref{pro:leontief} (for some $\bfa$). 

Next, we write the Lagrangian of Program~\ref{pro:leontief}:\footnote{As in the proof of Theorem~\ref{thm:main}, we omit the $\x \in \bbrpos^{m\times n}$ constraint from the Lagrangian incorporate it into the KKT conditions instead.}
\[
L(\x, \bfa, \q, \bflam) = \frac{1}{\rho} \sum_{i \in N} \alpha_i^\rho - \sum_{i \in N} \sum_{j \in M} \lambda_{ij} (w_{ij} \alpha_i - x_{ij}) - \sum_{j \in M} q_j \Big(\sum_{i \in N} x_{ij} - 1\Big)
\]
We have strong duality by Slater's condition, so the KKT conditions are both necessary and sufficient for optimality. That is, $(\x, \bfa)$ is optimal if and only if there exist $\q \in \bbrpos^m$, $\bflam \in \bbrpos^{m\times n}$ such that all of the following hold:\footnote{As in the proof of Theorem~\ref{thm:main}, primal and dual feasibility are trivially satisfied.}
\begin{enumerate}
\item Stationarity for $\x$: $\mfrac{\partial L(\x, \bfa, \q, \bflam)}{\partial x_{ij}} \le 0$ for all $i,j$. Furthermore, if $x_{ij} > 0$, the inequality holds with equality.
\item Stationarity for $\bfa$: $\mfrac{\partial L(\x, \bfa, \q, \bflam)}{\partial \alpha_i} \le 0$ for all $i \in N$. Furthermore, if $\alpha_i > 0$, the inequality holds with equality.
\item Complementary slackness for $\q$: for all $j \in M$, either $\sum_{i \in N} x_{ij} = 1$, or $q_j = 0$.
\item Complementary slackness for $\bflam$: for all $i \in N$, $j\in M$, either $w_{ij} \alpha_i = x_{ij}$ or $\lambda_{ij} = 0$.
\end{enumerate}

Similarly, let $L_i'$ denote the Lagrangian of Program~\ref{pro:leontief-demand} for agent $i$:
\[
L_i'(x_i, \alpha_i, \lambda_i) = \alpha_i - p(x_i) - \sum_{j \in M} \lambda_{ij} (w_{ij} \alpha_i - x_{ij})
\]
where $\lambda_i = (\lambda_{i1},\dots,\lambda_{im}) \in \bbrpos^m$.
We again have strong duality, so the KKT conditions are again necessary and sufficient. Let $L_i'(x_i, \alpha_i, \lambda_i)$ denote the Lagrangian of this program; then $(x_i, \alpha_i)$ is optimal for Program~\ref{pro:leontief-demand} if and only if all of the following hold:
\begin{enumerate} 
\item Stationarity for $x_i$: $\mfrac{\partial L_i'(x_i, \alpha_i, \lambda_i)}{\partial x_{ij}} \le 0$ for all $j \in M$. If $x_{ij} > 0$, the inequality holds with equality.
\item Stationarity for $\alpha_i$: $\mfrac{\partial L_i'(x_i, \alpha_i, \lambda_i)}{\partial \alpha_i} \le 0$. If $\alpha_i > 0$, the inequality holds with equality.
\item Complementary slackness for $\lambda_i$: for all $i \in N$, $j\in M$, either $w_{ij} \alpha_i = x_{ij}$ or $\lambda_{ij} = 0$.
\end{enumerate}

We will claim that $(\x, \bfa, \q, \bflam)$ is optimal for Program~\ref{pro:leontief} if and only if for all $i \in N$, $(x_i, \alpha_i, \alpha_i^{1-\rho} \lambda_i)$ is optimal for Program~\ref{pro:leontief-demand}. Essentially, we show that if complementary slackness holds (for either program), the stationarity conditions are equivalent. To begin, we can explicitly compute the relevant partial derivatives for given $\x, \bfa, \q, \bflam$, with $p(x_i) = \rho (\sum_{j \in M} q_j x_{ij})^{1/\rho}$:
\begin{align*}
\frac{\partial L(\x, \bfa, \q, \bflam)}{\partial x_{ij}} =&\ \lambda_{ij} - q_j \\
\frac{\partial L(\x, \bfa, \q, \bflam)}{\partial \alpha_i} =&\ \alpha_i^{\rho - 1} - \sum_{j \in M} \lambda_{ij} w_{ij}\\
\frac{\partial L_i'(x_i, \alpha_i, \lambda'_i)}{\partial x_{ij}} =&\ \lambda'_{ij} - q_j \Big(\sum_{\ell \in M} q_\ell x_{i\ell}\Big)^{\frac{1-\rho}{\rho}}\\
\frac{\partial L_i'(x_i, \alpha_i, \lambda'_i)}{\partial \alpha_i} =&\ 1 - \sum_{j \in M} \lambda'_{ij} w_{ij}
\end{align*}

\textbf{Part 1:} $(\implies)$ Suppose that $\x \in \Psi(\rho)$. Then there exist $\bfa, \q, \bflam$ such that the KKT conditions for Program~\ref{pro:leontief} are satisfied for $(\x, \bfa, \q, \bflam)$. We first claim that $\alpha_i > 0$ for all $i \in N$. Suppose not: stationarity implies that $\alpha_i^{\rho-1} \le \sum_{j \in M} \lambda_{ij} w_{ij}$, but since $\rho - 1< 0$, the left hand side is not defined for $\alpha_i = 0$. Thus $\alpha_i > 0$.

Therefore by stationarity for $\alpha_i$, we have $\alpha_i^{\rho-1} = \sum_{j \in M} \lambda_{ij} w_{ij}$. Let $\lambda'_{ij} = \alpha_i^{1-\rho} \lambda_{ij}$ for all $i,j$\footnote{Note that this is \emph{not} defining $\lambda'_{ij}$ to be a function of $\alpha_i$. This is defining $\lambda'_{ij}$ based on a fixed value of $\alpha_i$: in particular, the value from $(\x, \bfa, \q, \bflam)$, which we assumed to be optimal for Program~\ref{pro:leontief}. Consequently, the derivatives in the KKT conditions treat $\lambda'_{ij}$ as a constant.}. Then $\alpha_i^{\rho-1} = \sum_{j \in M} \lambda_{ij} w_{ij}$ is equivalent to $1 = \sum_{j \in M} \lambda'_{ij} w_{ij}$, and thus $\mfrac{\partial L_i'(x_i, \alpha_i, \lambda'_i)}{\partial \alpha_i} = 0$ for all $i \in N$. Thus for all $i \in N$, $(x_i, \alpha_i, \lambda'_i)$ satisfies stationarity for $\alpha_i$ for Program~\ref{pro:leontief-demand}.

We now turn to the $x_{ij}$ variables. Stationarity for $x_{ij}$ in Program~\ref{pro:leontief} implies that $\lambda_{ij} = q_j$ whenever $x_{ij} > 0$. Furthermore, complementary slackness for $\lambda_{ij}$ implies that if $\lambda_{ij} > 0$, $w_{ij} \alpha_i = x_{ij}$. Thus whenever $q_j > 0$ and $x_{ij} > 0$, $w_{ij} \alpha_i = x_{ij}$ and $\lambda_{ij} = q_j$. Therefore for all $i,j$,
\begin{align*}
\frac{\partial L_i'(x_i, \alpha_i, \lambda'_i)}{\partial x_{ij}} =&\ \lambda'_{ij} - q_j \Big(\sum_{\ell: q_\ell, x_{i\ell} > 0} q_\ell x_{i\ell} \Big)^{\frac{1-\rho}{\rho}}\\
=&\ \lambda'_{ij} - q_j \Big(\sum_{\ell \in M} \lambda_{i\ell} w_{i\ell} \alpha_i\Big)^{\frac{1-\rho}{\rho}}\\
=&\ \lambda'_{ij} - q_j \Big(\alpha_i \sum_{\ell \in M} \lambda_{i\ell} w_{i\ell} \Big)^{\frac{1-\rho}{\rho}}\\
=&\ \lambda'_{ij} - q_j (\alpha_i \alpha_i^{\rho-1})^{\frac{1-\rho}{\rho}}\\
=&\ \alpha_i^{1-\rho} \lambda_{ij} - q_j \alpha_i^{1-\rho}\\
=&\ \alpha_i^{1-\rho} \frac{\partial L(\x, \bfa, \q, \bflam)}{\partial x_{ij}}
\end{align*}
We have $\mfrac{\partial L(\x, \bfa, \q, \bflam)}{\partial x_{ij}} \le 0$ for all $i,j$ by stationarity (and the inequality holds with equality when $x_{ij} > 0$), so $\frac{\partial L_i'(x_i, \alpha_i, \lambda'_i)}{\partial x_{ij}} \le 0$ for all $j \in M$ (and the inequality holds with equality when $x_{ij} > 0$). Thus for each $i \in N$, $(x_i, \alpha_i, \lambda'_i)$ satisfies stationarity for Program~\ref{pro:leontief-demand} for $x_{ij}$ for all $j \in M$.

As mentioned above, we have $w_{ij} \alpha_i = x_{ij}$ whenever $\lambda_{ij} > 0$. Since $\lambda'_{ij} > 0$ if and only if $\lambda_{ij} > 0$, we have $w_{ij} \alpha_i = x_{ij}$ whenever $\lambda'_{ij} > 0$. Thus for each $i \in N$, $(x_i, \alpha_i, \lambda'_i)$ satisfies complementary slackness for Program~\ref{pro:leontief-demand}. Therefore $(x_i, \alpha_i, \lambda'_i)$ satisfies the KKT conditions, and thus is optimal for Program~\ref{pro:leontief-demand}. Therefore $x_i \in D_i(p)$ for all $i \in N$. The complementary slackness condition for $\q$ is identical to the market clearing condition, so we conclude that $(\x, p)$ is a WE.

\textbf{Part 2:} $(\impliedby)$ Suppose that $(\x, p)$ is a WE, where $p(x_i) = \rho (\sum_{j \in M} q_j x_{ij})^{1/\rho}$ for constants $q_1,\dots,q_m \in \bbrpos$. Then $x_i \in D_i(p)$ for all $i \in N$, so there exists $\bfa, \bflamp$ such that $(x_i, \alpha_i, \lambda'_i)$ is optimal for Program~\ref{pro:leontief-demand} for all $i \in N$.

Thus by stationarity, we have $\frac{\partial L_i'(x_i, \alpha_i, \lambda'_i)}{\partial \alpha_i} \le 0$ and $\frac{\partial L_i'(x_i, \alpha_i, \lambda'_i)}{\partial x_{ij}} \le 0$ for all $i,j$ (and if $\alpha_i > 0$ and $x_{ij} > 0$, the inequalities hold with equality). Using the definition of $p$, we have $\frac{\partial L_i'(x_i, \alpha_i, \lambda'_i)}{\partial x_{ij}} = \lambda'_{ij} - q_j \big(p(x_i)/\rho\big)^{1-\rho}$. Thus $1 \le \sum_{j \in M} \lambda'_{ij} w_{ij}$ and $\lambda'_{ij} \le q_j \big(p(x_i)/\rho\big)^{1-\rho}$. 

We first claim that $\alpha_i > 0$ for all $i \in N$. For each agent $i$, there must exist $j \in M$ such that $\lambda'_{ij} > 0$ and $w_{ij} > 0$: otherwise $1 \le \sum_{j \in M} \lambda'_{ij} w_{ij}$ would be impossible. Consider any such $j$: then $0 < \lambda'_{ij} \le q_j \big(p(x_i)/\rho\big)^{1-\rho}$, so we must have $p(x_i) > 0$. Suppose $\alpha_i = 0$: then the optimal objective value of Program~\ref{pro:leontief-demand} is $\alpha_i - p(x_i) < 0$. But setting $x_{ij} = 0$ for all $j \in M$ achieves an objective value of 0, so $\alpha_i - p(x_i) < 0$ cannot be optimal. This is a contradiction, and so $\alpha_i > 0$ for all $i \in N$.

Returning to the stationarity conditions, we then have $1 = \sum_{j \in M}\lambda'_{ij} w_{ij}$. Complementary slackness implies that $w_{ij} \alpha_i = x_{ij}$ whenever $\lambda'_{ij} > 0$, so we get
\begin{align*}
\alpha_i =&\ \sum_{j \in M} \lambda'_{ij} w_{ij} \alpha_i \\
\alpha_i =&\ \sum_{j: \lambda'_{ij} > 0} \lambda'_{ij} w_{ij} \alpha_i \\
\alpha_i =&\ \sum_{j \in M} \lambda'_{ij} x_{ij}
\end{align*}
Combining this with $\lambda'_{ij} =  q_j \big(p(x_i)/\rho\big)^{1-\rho}$ whenever $x_{ij} > 0$ gives us
\begin{align*}
\alpha_i =&\ \sum_{j: x_{ij} > 0} \lambda'_{ij} x_{ij}\\
=&\ \sum_{j \in M} q_j x_{ij} \big(p(x_i)/\rho\big)^{1-\rho}\\
=&\ \big(p(x_i)/\rho\big)^{1-\rho} \sum_{j \in M} q_j x_{ij} \\
=&\ \big(p(x_i)/\rho\big)^{1-\rho} \big(p(x_i)/\rho\big)^\rho \\
=&\ p(x_i)/\rho
\end{align*}

Now let $\lambda_{ij} = \alpha_i^{\rho - 1} \lambda'_{ij}$ for all $i,j$. We claim that $(\x, \bfa, \q, \bflam)$ satisfies the KKT conditions for Program~\ref{pro:leontief}. For each $(i,j)$ pair, we have
\[
\frac{\partial L(\x, \bfa, \q, \bflam)}{\partial \alpha_i} 
= \alpha_i^{\rho - 1} - \sum_{j \in M} \lambda_{ij} w_{ij} 
= \alpha_i^{\rho - 1}\Big(1 - \sum_{j \in M} \lambda'_{ij} w_{ij}\Big) = \alpha_i^{\rho - 1} \frac{\partial L_i'(x_i, \alpha_i, \lambda'_i)}{\partial \alpha_i}
\]
Since $\alpha_i > 0$, stationarity for Program~\ref{pro:leontief-demand} implies that $\frac{\partial L_i'(x_i, \alpha_i, \lambda'_i)}{\partial \alpha_i} = 0$, so we have $\frac{\partial L(\x, \bfa, \q, \bflam)}{\partial \alpha_i} = 0$. Next, we have
\begin{align*}
\frac{\partial L(\x, \bfa, \q, \bflam)}{\partial x_{ij}} =&\ \lambda_{ij} - q_j\\
=&\ \alpha_i^{\rho - 1} \lambda'_{ij} - q_j\\
=&\ \alpha_i^{\rho - 1} ( \lambda'_{ij} - q_j \alpha_i^{1-\rho})\\
=&\ \alpha_i^{\rho - 1} \Big(\lambda'_{ij} - q_j \big(p(x_i)/\rho\big)^{1-\rho}\Big)\\
=&\ \alpha_i^{\rho - 1} \frac{\partial L_i'(x_i, \alpha_i, \lambda'_i)}{\partial x_{ij}}
\end{align*}
Stationarity for Program~\ref{pro:leontief-demand} implies that $\frac{\partial L_i'(x_i, \alpha_i, \lambda'_i)}{\partial x_{ij}} \le 0$ for all $i,j$ (and when $x_{ij} > 0$, this holds with equality), so we have $\frac{\partial L(\x, \bfa, \q, \bflam)}{\partial x_{ij}}$ for all $i,j$ (and when $x_{ij} > 0$, this holds with equality). Thus we have shown that $(\x, \bfa, \q, \bflam)$ satisfies stationarity for Program~\ref{pro:leontief}. As before, the market clearing condition is equivalent to complementary slackness for $\q$. By complementary slackness for $\bflamp$ (for Program~\ref{pro:leontief-demand}), we have $w_{ij} \alpha_i = x_{ij}$ whenever $\lambda'_{ij} > 0$. By definition, $\lambda'_{ij} > 0$ if and only if $\lambda_{ij} > 0$, so this implies the required complementary slackness for $\bflam$ (for Program~\ref{pro:leontief}). Therefore $(\x, \bfa, \q, \bflam)$ satisfies the KKT conditions for Program~\ref{pro:leontief}, and thus is optimal for that program. We conclude that $\x \in \Psi(\rho)$.

\end{proof}

\section{The First Welfare Theorem and linear pricing}\label{sec:linear}

Recall our main result:

\mainThm*

For this class of valuations, Theorem~\ref{thm:main} for $\rho = 1$ implies the First Welfare Theorem: $p$ becomes a linear pricing rule, and CES welfare for $\rho = 1$ is just utilitarian welfare. In particular, Theorem~\ref{thm:main} implies both the existence of a WE, and that every WE maximizes utilitarian welfare.

Typically, the ``the First Welfare Theorem" refers to just half of this: that every WE maximizes utilitarian welfare. The reason is that WE are not always guaranteed to exist: for divisible goods, generally at least concavity or quasi-concavity of valuations is necessary. On the other hand, very few assumptions are needed to show that linear pricing equilibria always maximize utilitarian welfare; for example, divisibility of goods is not needed. We provide a proof of this below. 


\begin{theorem}[The First Welfare Theorem]\label{thm:first-welfare}
Let $\X_i \subset \bbr^m$ denote the set of feasible bundles for agent $i$ (not necessarily convex, and not necessarily the same for all agents). Let $D_i(p) = \argmax_{x_i \in \X_i} (v_i(x_i) - p(x_i))$ and assume $p$ is linear. Then if $(\x, p)$ is a WE, $\x$ maximizes utilitarian welfare.
\end{theorem}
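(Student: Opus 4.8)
The plan is to exploit the fact that, by definition, in a Walrasian equilibrium every agent's bundle is utility-maximizing against the (linear) prices, and then to aggregate these individual optimality statements. Since $p$ is linear, I would write $p(x_i) = \sum_{j \in M} p_j x_{ij}$ for prices $p_j \ge 0$ (nonnegativity of prices is standard for market pricing rules). Fix an arbitrary valid allocation $\y$, i.e.\ $y_i \in \X_i$ for all $i$ and $\sum_i y_{ij} \le 1$ for all $j$; the goal is to show $\sum_i v_i(x_i) \ge \sum_i v_i(y_i)$.

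First I would apply demand optimality agent by agent: since $x_i \in D_i(p)$ and $y_i \in \X_i$, we have $v_i(x_i) - p(x_i) \ge v_i(y_i) - p(y_i)$ for every $i$. Summing over $i$ and rearranging gives $\sum_i v_i(x_i) - \sum_i v_i(y_i) \ge \sum_i p(x_i) - \sum_i p(y_i)$, so it suffices to show the right-hand side is nonnegative. This is the market-clearing bookkeeping step. By linearity, $\sum_i p(x_i) = \sum_{j} p_j \big(\sum_i x_{ij}\big)$ and similarly for $\y$. For linear pricing, good $j$ has nonzero cost precisely when $p_j > 0$, so for such $j$ the market-clearing condition forces $\sum_i x_{ij} = 1$, while goods with $p_j = 0$ contribute nothing to either sum. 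Hence $\sum_i p(x_i) = \sum_{j:\, p_j > 0} p_j$. On the other hand, $\sum_i p(y_i) = \sum_{j} p_j \big(\sum_i y_{ij}\big) \le \sum_{j:\, p_j > 0} p_j$ since $p_j \ge 0$ and $\sum_i y_{ij} \le 1$. Therefore $\sum_i p(x_i) - \sum_i p(y_i) \ge 0$, which combined with the summed demand inequality yields $\sum_i v_i(x_i) \ge \sum_i v_i(y_i)$; as $\y$ was arbitrary, $\x$ maximizes utilitarian welfare.

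The argument is short, and the only point that requires care is the interface between the definition of ``nonzero cost'' and the market-clearing condition: I must check that for linear $p$, positivity of $p_j$ is equivalent to good $j$ having nonzero cost (a bundle supported only on good $j$ costs $p_j x_{ij}$, which is positive for some quantity iff $p_j > 0$), so that the equality $\sum_i x_{ij} = 1$ is available exactly where it is needed and the $p_j = 0$ coordinates can be dropped harmlessly. It is worth emphasizing --- and this is really the content of the theorem --- that no convexity, concavity, differentiability, or homogeneity of the valuations is used anywhere: the result rests solely on the definitional optimality of each demand together with linearity of prices, which is why it continues to hold even for indivisible goods and heterogeneous, non-convex feasible sets $\X_i$.
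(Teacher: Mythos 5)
Your proof is correct and follows essentially the same route as the paper's: sum the per-agent demand-optimality inequalities, then show the aggregate payment difference is nonnegative via linearity of $p$, nonnegativity of prices, and market clearing (with the observation that for linear $p$, good $j$ has nonzero cost iff $p_j > 0$). The only difference is cosmetic --- you make the ``nonzero cost $\Leftrightarrow p_j>0$'' interface explicit, which the paper leaves implicit.
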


\begin{proof}
Since $p$ is linear, there exist $q_1,\dots,q_m$ such that $p(y_i) = \sum_{j \in M} q_j y_{ij}$ for any bundle $y_i$. Consider an arbitrary feasible allocation $\y$. Since $(\x, p)$ is a WE, we have $x_i \in D_i(p)$, so $v_i(x_i) - p(x_i) \ge v_i(y_i) - p(y_i)$. Therefore
\begin{align*}
v_i(x_i) - \sum_{j \in M} q_j x_{ij} \ge&\ v_i(y_i) - \sum_{j \in M} q_j y_{ij}\\
\sum_{i \in N} v_i(x_i) - \sum_{i \in N} \sum_{j \in M} q_j x_{ij} \ge&\ \sum_{i \in N} v_i(y_i) - \sum_{i \in N} \sum_{j \in M} q_j x_{ij}\\
\sum_{i \in N} v_i(x_i) - \sum_{j \in M} q_j \sum_{i \in N} x_{ij} \ge&\ \sum_{i \in N} v_i(y_i) - \sum_{j \in M} q_j \sum_{i \in N} x_{ij}
\end{align*}
Furthermore, $\sum_{i \in N} x_{ij} = 1$ for all $j \in M$ with $q_j > 0$. Also, since $\y$ is a valid allocation, $\sum_{i \in N} y_{ij} \le 1$ for all $j \in M$. Therefore
\begin{align*}
\sum_{i \in N} v_i(x_i) - \sum_{j \in M} q_j \sum_{i \in N} x_{ij} \ge&\ \sum_{i \in N} v_i(y_i) - \sum_{j \in M} q_j \sum_{i \in N} x_{ij}\\
\sum_{i \in N} v_i(x_i) - \sum_{j \in M} q_j \ge&\ \sum_{i \in N} v_i(y_i) - \sum_{j \in M} q_j\\
\sum_{i \in N} v_i(x_i) \ge&\ \sum_{i \in N} v_i(y_i)
\end{align*}
Thus the utilitarian welfare of $\x$ is at least as high as that of any other allocation. We conclude that $\x$ maximizes utilitarian welfare.
\end{proof}

Note that no assumptions at all were made on the nature of the valuations: all we needed was $x_i \in \argmax_{y_i \in \X_i} (v_i(y_i) - p(y_i))$, and $\sum_{j \in M} x_{ij} = 1$ whenever $q_j > 0$. The most natural cases are $\X_i = \bbrpos^m$ (divisible goods) and $\X_i = \{0,1\}^m$ (indivisible goods), but the result does hold more broadly.

\section{Omitted proofs}\label{sec:proofs}

%
 
 \thmEuler*
 
\begin{proof}
Fix an arbitrary $\B \in \bbrpos^m$ and let $g(\lambda) = f(\lambda\B)$. Since $f$ is differentiable, so is $g$, and its derivative is given by the multidimensional chain rule: $\frac{\dif g(\lambda)}{\dif \lambda} = \sum_{j = 1}^m b_j \frac{\partial f(\lambda\B)}{\partial b_j}$. Since $f$ is homogeneous of degree $r$, we have $f(\lambda\B) = \lambda^r f(\B)$ for all $\lambda \ge 0$. Thus $g(\lambda) = \lambda^r f(\B)$ for all $\lambda \ge 0$, so we can differentiable both sides of this equation to get $\sum_{j = 1}^m b_j \frac{\partial f(\lambda\B)}{\partial b_j} = r\lambda^{r-1} f(\B)$. This holds for all $\lambda \ge 0$, so setting $\lambda = 1$ completes the proof.
\end{proof}

\lemHomoOneGood*
\begin{proof}
By Euler's Theorem (Theorem~\ref{thm:euler}), we have $x \mfrac{\dif f(x)}{\dif x} = r f(x)$ for all $x \in \bbrpos$. Let $y = f(x)$. We can solve this differential equation explicitly:
\begin{align*}
\frac{1}{y}\cdot \frac{\dif y}{\dif x} =&\ \frac{r}{x}\\
\int \frac{1}{y}\cdot \frac{\dif y}{\dif x} \dif x =&\ \int \frac{r}{x} \dif x\\
\int \frac{1}{y} \dif y =&\ r \int \frac{1}{x} \dif x\\
\ln y =&\ r \ln x + \ln c
\end{align*}
where $c$ (and thus $\ln c$) is some constant. Therefore
\begin{align*}
e^y =&\ e^{r \ln x + \ln c}\\
y =&\ c x^r
\end{align*}
Thus $f(x) = c x^r$, as required.
\end{proof}

\lemOptOneGood*

\begin{proof}
As in Section~\ref{sec:main}, strong duality for Program~\ref{pro:ces} implies that any optimal $\x$ must satisfy the KKT conditions. Thus $\x \in \Psi(\rho)$ if and only if there exists $q \in \bbrpos$ such that (1) stationarity holds: $\mfrac{\partial v_i(x_i)}{\partial x_i} v_i(x_i)^{\rho -1} \le q$ for all $i \in N$, and if $x_i > 0$, the inequality holds with equality, and (2) complementary slackness holds: either $\sum_{i \in N} x_i = 1$, or $q = 0$. 

Since we assume that $w_i > 0$ for all $i \in N$, any allocation with $\sum_{i \in N} x_i < 1$ is not Pareto optimal, and thus cannot be optimal for Program~\ref{pro:ces}. In other words, we must have $q > 0$. Thus complementary slackness simply requires that $\sum_{i \in N} x_i = 1$, and we can focus on stationarity.

We first handle $\rho = 1$. In this case, $\mfrac{\partial v_i(x_i)}{\partial x_i} v_i(x_i)^{\rho -1} = \mfrac{\partial v_i(x_i)}{\partial x_i} = w_i$. Thus if $\x \in \Psi(\rho)$ we must have $w_i \le q$, and if $x_i > 0$, then $w_i = q$. This implies that $q = \max_{k \in N} w_k$. Thus if $x_i > 0$, then $w_i = \max_{k \in N} w_k$, as required. 

For the rest of the proof, we assume $r\rho \ne 1$. Since $r,\rho \in (0,1]$, we have $0 < r\rho < 1$. By the definition of $v_i$, for an arbitrary allocation $\x$ and $i \in N$ we have
\[
\frac{\partial v_i(x_i)}{\partial x_i} v_i(x_i)^{\rho -1} = (w_i r x_i^{r-1})(w_i^{\rho-1} x_i^{r(\rho-1)}) = r w_i^\rho x_i^{r\rho - 1}
\]
Thus $q \ge rw_i^\rho x_i^{r\rho - 1}$. Since $r\rho < 1$, if $x_i = 0$, then $x_i^{r\rho-1}$ is undefined. Therefore stationarity is satisfied if and only if $q = r w_i^\rho x_i^{r\rho-1}$ for all $i \in N$, which is equivalent to
\begin{align}
x_i = (q/r)^{\frac{1}{r\rho-1}} w_i^{\frac{\rho}{1-r\rho}} \label{eq:x-expr}
\end{align}
Furthermore, if $\x$ satisfies Equation~\ref{eq:x-expr} for all $i \in N$, then $\sum_{i \in N} x_i = 1$ is equivalent to
\begin{align*}
\sum_{i \in N} (q/r)^{\frac{1}{r\rho-1}} w_i^{\frac{\rho}{1-r\rho}} =&\ 1\\
(q/r)^{\frac{1}{r\rho-1}} =&\ \Big(\sum_{i \in N} {w_i}^{\frac{\rho}{1-r\rho}}\Big)^{-1}
\end{align*}
Therefore $\x$ satisfies stationarity and complementary slackness (and thus satisfies $\x \in \Psi(\rho)$) if and only if
\[
x_i = \frac{{w_i}^{\frac{\rho}{1-r\rho}}}{\sum_{k \in N} {w_k}^{\frac{\rho}{1-r\rho}}}
\]
as required.
\end{proof}

\section*{Acknowledgements}

This research was supported in part by NSF grant CCF-1637418, 
ONR grant N00014-15-1-2786, and the NSF Graduate Research Fellowship under grant DGE-1656518.

\end{document}